\newcommand{\defn}[1]{\textbf{\emph{#1}}}
\let\left\mleft
\let\right\mright
\theoremstyle{plain}
\newtheorem{theorem}{Theorem}[section]  %
\newtheorem{lemma}[theorem]{Lemma}
\newtheorem{proposition}[theorem]{Proposition}
\newtheorem{claim}[theorem]{Claim}
\theoremstyle{definition}  %
\newtheorem{definition}[theorem]{Definition}
\newenvironment{proofsketch}{\begin{proof}[Proof Sketch]}{\end{proof}}
\crefname{algocf}{Algorithm}{Algorithms}
\Crefname{algocf}{Algorithm}{Algorithms}
\crefname{claim}{Claim}{Claims}
\Crefname{claim}{Claim}{Claims}
\newcommand{\AlgorithmCaption}[3]{\refstepcounter{#1} \label{#3} \captionof*{#1}{#2}}
\crefname{Distribution}{Distribution}{Distributions}
\Crefname{Distribution}{Distribution}{Distributions}
\crefname{Protocol}{Protocol}{Protocols}
\Crefname{Protocol}{Protocol}{Protocols}
\DeclarePairedDelimiter{\ceil}{\lceil}{\rceil}
\DeclarePairedDelimiter{\floor}{\lfloor}{\rfloor}
\DeclarePairedDelimiter{\bk}{(}{)}
\DeclarePairedDelimiter{\Bk}{[}{]}
\DeclarePairedDelimiter{\BK}{\{}{\}}
\DeclarePairedDelimiter{\angbk}{\langle}{\rangle}
\DeclarePairedDelimiter{\abs}{\lvert}{\rvert}
\DeclarePairedDelimiterX\mysetbase[2]{\lbrace}{\rbrace}{#1\,\delimsize\vert\,#2}
\NewDocumentCommand{\myset}{sO{}m m}{%
  \IfBooleanTF{#1}%
    {\mysetbase*{#3}{#4}}%
    {\mysetbase[#2]{#3}{#4}}%
}
\DeclareMathOperator*{\E}{\mathbb{E}}
\let\Pr\PrAux
\DeclareMathOperator{\poly}{poly}
\DeclareMathOperator*{\ind}{\mathbbm{1}}
\renewcommand{\tilde}{\widetilde}
\newcommand{\defeq}{\coloneqq}
\newcommand{\eps}{\varepsilon}
\newcommand{\N}{\mathbb{N}}
\newcommand{\Z}{\mathbb{Z}}
\renewcommand{\l}{\ell}
\renewcommand{\epsilon}{\eps}
\newcommand{\numberthis}{\addtocounter{equation}{1}\tag{\theequation}}
\xpatchcmd\thmt@restatable{%
\csname #2\@xa\endcsname\ifx\@nx#1\@nx\else[{#1}]\fi
}{%
\ifthmt@thisistheone
\csname #2\@xa\endcsname\ifx\@nx#1\@nx\else[{#1}]\fi
\else
\csname #2\@xa\endcsname[{Restated}]
\fi}{}{}
\newcommand{\ProbeComplexity}{\textup{probe-complexity}}
\newcommand{\Parent}{\texttt{\textup{parent}}\xspace}
\newcommand{\Heavy}{\texttt{\textup{heavy}}\xspace}
\newcommand{\Light}{\texttt{\textup{light}}\xspace}
\newcommand{\Insert}{\textup{\textsc{Insert}}\xspace}
\newcommand{\Delete}{\textup{\textsc{Delete}}\xspace}
\newcommand{\Query}{\textup{\textsc{Query}}\xspace}
\newcommand{\Sample}{\textup{\textsc{Sample}}\xspace}
\newcommand{\Null}{\textup{\texttt{null}}\xspace}
\begin{document}

\title{Tight Bounds for Classical Open Addressing}
\author{
Michael A. Bender\thanks{Partially supported by NSF grants CCF 2247577 and  CCF 2106827 and  John L. Hennessy Chaired Professorship. \texttt{bender@cs.stonybrook.edu}.}\\
Stony Brook University and RelationalAI
\and
William Kuszmaul\thanks{Partially supported by a Harvard Rabin Postdoctoral Fellowship and by a Harvard FODSI fellowship under NSF grant DMS-2023528. \texttt{kuszmaul@cmu.edu}.}\\
CMU
\and
Renfei Zhou\thanks{\texttt{renfeiz@andrew.cmu.edu}.}\\
CMU
}
\date{}
\maketitle

\begin{abstract}
We introduce a classical open-addressed hash table, called \emph{rainbow hashing}, that supports a load factor of up to $1 - \epsilon$, while also supporting $O(1)$ expected-time queries, and $O(\log \log \epsilon^{-1})$ expected-time insertions and deletions. We further prove that this tradeoff curve is optimal: any classical open-addressed hash table that supports load factor $1 - \epsilon$ must incur $\Omega(\log \log \epsilon^{-1})$ expected time per operation.

Finally, we extend rainbow hashing to the setting where the hash table is \emph{dynamically resized} over time. Surprisingly, the addition of dynamic resizing does not come at any time cost---even while maintaining a load factor of $\ge 1 - \epsilon$ at all times, we can support $O(1)$ queries and $O(\log \log \epsilon^{-1})$ updates.

Prior to our work, achieving any time bounds of the form $o(\epsilon^{-1})$ for all of insertions, deletions, and queries simultaneously remained an open question.
\end{abstract}

\section{Introduction}

Open-addressing is a simple framework for hash-table design that captures many of the most widely-used hash tables in practice (e.g., linear probing, quadratic probing, double hashing, cuckoo hashing, graveyard hashing, Robin-Hood hashing, etc). What these hash tables share in common, and indeed, what makes them examples of open addressing, is that in each case:
\begin{enumerate}
    \item The data structure itself is just an array of some size $N$ containing elements, free slots, and (in some hash-table designs) tombstones.\footnote{In some hash tables, when an element is deleted, it is replaced with a \defn{tombstone}. Then, once there are sufficiently many tombstones in the hash table, they are all removed at once in a single batch, and the hash table is rebuilt. The constructions in this paper will not make use of tombstones, but we include hash tables that do in our discussions of prior work.}
    \item Each element $x$ has a \defn{probe sequence} $h_1(x), h_2(x), \ldots \in [N]$ that is fully determined by $x$, $N$, and random bits (i.e., hash functions).
    \item And the procedure for querying $x$ is to simply examine the array positions $h_1(x), h_2(x), \ldots$ either until $x$ is found or until the query is able to conclude, based on what it has seen, that $x$ is not present.
\end{enumerate}

In the decades since open addressing was first introduced, there have been dozens (or possibly even hundreds) of hash-table designs proposed within the open-addressing model. However, the most basic question that one could have remained open: \emph{What is the best space-time tradeoff that any open-addressed hash table can achieve?} 

\paragraph{The space vs.~time tradeoff. } A hash table is said to support a \defn{load factor} of $1 - \epsilon$ if it supports sequences of insertions/deletions/queries with up to $\lceil (1 - \epsilon)N \rceil$ elements present at a time. As $\epsilon$ decreases, the space efficiency improves, but the time per operation gets worse. The \defn{space-time tradeoff} for a hash table is the relationship between $\epsilon$ and the expected time for each type of operation (insertions, deletions, and membership queries).

In some hash-table constructions, the hash table supports \defn{dynamic resizing}, meaning that the hash table changes the size $N$ of the array that it uses, over time, in order to maintain a load factor of at least $1 - \epsilon$ at all times. Hash tables that do not do this (i.e., that use a fixed $N$) are referred to as having a \defn{fixed-capacity}. We will be interested in both fixed-capacity and dynamically-resized hash tables in this paper.

\paragraph{The historical barrier: Achieving $o(\epsilon^{-1})$-time operations.} It is relatively straightforward to construct an open-addressed hash table that achieves $O(\epsilon^{-1})$ expected-time operations (e.g., by using uniform probing with tombstones \cite{Knuth98Vol3}). With more sophisticated techniques \cite{brent1973reducing, gonnet1979efficient, mallach1977scatter, fotakis2005space, dietzfelbinger2007balanced}, one can achieve expected query time $o(\epsilon^{-1})$ while supporting insertions and deletions in $f(\epsilon) = \Omega(\epsilon^{-1})$ time for some $f$. In fact, if all that one cares about are \emph{positive} queries (and if one disregards insertion/deletion time entirely), then the expected query time can even be reduced to $O(1)$ \cite{brent1973reducing, gonnet1979efficient, mallach1977scatter}.

It has remained an open question whether one might be able to achieve an expected time bound of $o(\epsilon^{-1})$ for all operations simultaneously. This is not to say that there is no hope. It is \emph{conjectured}, for example, that bucketed cuckoo hashing \cite{dietzfelbinger2007balanced} with buckets of size $\Theta(\sqrt{\epsilon^{-1}})$ (and using random-walk insertions) can support $\tilde{O}(\sqrt{\epsilon^{-1}})$-time operations---but even bounding the insertion time by $f(\epsilon^{-1})$ for any function $f$ remains an open question \cite{dietzfelbinger2007balanced, bucketcuckoorandom}. If one further brings down the target time to $o(\sqrt{\epsilon^{-1}})$ per operation, then there are not even any conjectured solutions. All known solutions that achieve $O(t)$-time queries for some $t \le \epsilon^{-1}$ require insertions to spend at least $\Omega(\epsilon^{-1}/t)$ time even just deciding which free slot to consume \cite{brent1973reducing, gonnet1979efficient, mallach1977scatter, fotakis2005space, dietzfelbinger2007balanced}. 

It is worth remarking that there is an additional bottleneck if one wishes to support dynamic resizing. The standard approach to maintaining a dynamic load factor of $\ge 1 - \epsilon$ is to rebuild the hash table every time that $\Theta(\epsilon n)$ insertions or deletions occur. These rebuilds require $\Omega(n)$ time (even just to read through the entire data structure), thereby contributing at least $\epsilon^{-1}$ amortized expected cost per insertion/deletion. Even if one could achieve $o(\epsilon^{-1})$-time operations for a fixed-capacity table, it is not clear whether such a guarantee could be extended to the dynamic-resizing case.

\paragraph{This paper: Optimal open-addressing.} In this paper, we introduce \defn{rainbow hashing}, an open-addressed hash table that achieves expected query time $O(1)$ and expected insertion/deletion time $O(\log \log \epsilon^{-1})$. The name of the data structure refers to the way in which it assigns colors to elements in order to decide the layout of the hash table. 

Our second result is an extension of rainbow hashing that supports dynamic resizing without changing the time bounds. That is, even while maintaining the invariant the $N \ge (1 - \epsilon) n$ at all times, the hash table is able to support $O(1)$ expected-time queries and $O(\log \log \epsilon^{-1})$ expected-time insertions/deletions.

An interesting consequence is what happens at a load factor of $1$. Here, our construction yields an open-addressed hash table with constant expected-time queries, with $O(\log \log n)$ expected-time insertions/deletions, and where the hash table is fully compacted at all times---that is, the number of slots $N$ that the hash table uses is always the \emph{same} as the number of elements $n$ that it contains. 

Finally, we conclude the paper by proving a matching lower bound: we show that, in any open-addressed hash table that supports positive queries in $O(\log \log \epsilon^{-1})$ expected time, the amortized expected time per insertion/deletion must be at least $\Omega(\log \log \epsilon^{-1})$. At a technical level, our lower bound can be viewed as a (highly nontrivial) extension of the potential-function techniques previously developed in \cite{benderhashing} for analyzing the average log-probe-complexity in a hash table. 

Interestingly, the lower bound applies not just to insertion/deletion \emph{time} but also to the \emph{number of items} that the insertion/deletion rearranges. Thus the result applies even to hash tables that go beyond `pure' open addressing, and that store arbitrary amounts of additional metadata to help with the operation of the data structure.

Combined, our results fully resolve the optimal time complexity of open addressing. 

\paragraph{Other related work.} It is not possible to describe the entire body of work on open addressing, so we will focus instead on summarizing the high-level trajectory of the area. Early work \cite{Knuth63, Peterson57, Knuth98Vol3, AmbleKn74, CelisLaMu85, KonheimWe66}, spanning the late 1950s through the early 1970s, focused largely on evaluating variations of linear probing \cite{Knuth63, Peterson57, KonheimWe66, KonheimWe66}, quadratic probing \cite{Maurer68}, and uniform probing \cite{Peterson57, Knuth98Vol3}. In a significant 1973 breakthrough \cite{brent1973reducing}, Brent showed that one could support $O(1)$-expected-time positive queries regardless of $\epsilon$. Brent's result prompted a large body of work on query-optimized hash tables. This included both lower bounds \cite{yao1985uniform, ajtai1978there} on restricted classes of hash tables, alternative techniques for obtaining fast queries \cite{gonnet1979efficient, mallach1977scatter}, and in the past few decades, the emergence of \emph{cuckoo hashing} \cite{fotakis2005space, dietzfelbinger2007balanced, pagh2001cuckoo}. Cuckoo hashing, when implemented with the appropriate parameters \cite{fotakis2005space, dietzfelbinger2007balanced}, can be used to support $O(\log \epsilon^{-1})$-time queries, not just in expectation, but even in the \emph{worst case}. Although this time bound is worse than the $O(1)$-bound achieved by Brent and others \cite{brent1973reducing, gonnet1979efficient, mallach1977scatter}, it is notable for applying to both positive \emph{and} negative queries. 

A common feature among many open-addressing schemes is that, even in cases where the hash table's behavior is intuitively easy to understand, it is often quite difficult to analyze. The variants of cuckoo hashing that achieve $O(\log \epsilon^{-1})$-time queries, for example, are conjectured to also achieve $O(\epsilon^{-1})$-time insertions \cite{dietzfelbinger2007balanced, fotakis2005space, bucketcuckoorandom, bell20241}, but even bounds of the form $\poly(\epsilon^{-1})$ remain open. Another example is quadratic probing \cite{Maurer68, HopgoodDa72}, which despite widespread use since the early 1970s \cite{WikipediaQuadraticProbing}, has resisted time bounds of the form $f(\epsilon^{-1})$ for \emph{any} $f$. Other examples, still, include double hashing, the analysis of which remained open for decades \cite{GuibasSz78, lueker1988more}, and robin-hood hashing \cite{CelisLaMu85} (a.k.a.~ordered linear probing \cite{AmbleKn74}), which was only very recently revealed to yield much better time bounds than were previously thought to hold \cite{bender2022linear}. 

In recent decades, there has also been a great deal of work on hash tables that go beyond the classical open-addressing model. This includes work on succinct data structures \cite{benderhashing, li2023dynamic, li2023tight, bender2023iceberg, bercea2023dynamic, Raman03Succinct, ArbitmanNaSe10}, hash tables with high-probability time guarantees \cite{GoodrichHiMi12, kuszmaul2022hash, bender2023iceberg}, and dictionaries in the external-memory model \cite{IaconoPa12, ConwayFaSh18, JensenPa08, verbin2013limits}. In the area of succinct data structures, in particular, there have been a series of recent breakthroughs \cite{benderhashing, li2023dynamic, li2023tight} that, together, fully characterize the optimal time-space tradeoff curve of any unordered dictionary. Interestingly, many of the data structures introduced in this line of work share a subtle connection to open addressing: they can be viewed as classical open-addressed hash tables in which the probe position of each item $x$ is cached in a secondary data structure. Roughly speaking, this allows one to take an open-addressed hash table in which a key $x$ would have required $t$ time to find, and to instead store $\log t$-bit number $t$ explicitly (in the secondary data structure), so that $x$ can be found in constant time. This connection led Bender et al.~\cite{benderhashing} to study a variation of open addressing in which the goal is to minimize the average \emph{log} query time over all elements. Not surprisingly, this leads to a very different tradeoff curve than the one in this paper---for example, $O(1)$ log query time corresponds to $O(\log^* n)$-time insertions. Nonetheless, as we show in Section \ref{sec:lower}, the lower-bound techniques from \cite{benderhashing} can actually be extended to our setting, albeit, with several significant changes.

\section{Preliminaries}

\paragraph{Open addressing. }At a high level, an \defn{open-addressed hash table} with \defn{capacity} $N$ is a hash table that stores its keys (elements of some universe $U$) in an array of size $N$, and that uses a \defn{probe-sequence function} $h(x) = \langle h_1(x), h_2(x), \ldots \rangle \in [N]^{\infty}$ to perform queries. 

The \defn{state} of the hash table is an array $A$ of $N$ slots, where each slot either stores a key or is a free slot. Specifically, if $S \subseteq U$ is the current set of keys, then each key $x \in S$ appears exactly once in $A$, and the remaining $N - |S|$ slots are left empty. 

In addition to its state $A$, the hash table gets to store the capacity $N$ of the array and the current number $n = |S|$ of elements. The hash table can also invoke a fully random hash function which it is not responsible for storing. These are the only things that the hash table has access to.

Queries $\Query(x)$ for a key $x$ are implemented by scanning a probe sequence $h(x) = \angbk{h_1(x), h_2(x), \ldots}$ of positions in the array, until either $x$ is found or until some stopping condition is met. The probe sequence for $x$ must depend on only $x$, $N$, and random bits (i.e., hash functions). Insertions and deletions are permitted to rearrange the elements in the hash table however they wish, so long as they preserve the correctness of queries. 

The \defn{load factor} of the hash table is defined to be $n / N$, and is typically denoted by $1 - \epsilon$. The goal is to design insertion, deletion, and query algorithms that allow for time-efficient operations as a function of $\epsilon^{-1}$.

\paragraph{Fixed capacity vs.~dynamic resizing.} A classical open-addressing hash table is said to have \defn{fixed capacity} if $N$ remains the same over time, and is said to be \defn{dynamically resized} if $N$ changes over time (so that, at any given moment, the hash table resides in the first $N$ slots of an infinite array). We will be interested in proving upper bounds for both the fixed-capacity and dynamically-resized cases. Our lower bounds will be for the fixed-capacity case but using inputs in which the total number of elements changes by only $\pm 1$ over time.

When discussing fixed-capacity hash tables, we will aim to support a large maximum load factor. When discussing dynamically-resized hash tables, we will aim to support a continual load factor---that is, as $n$ changes, $N$ will also change to preserve the load factor. 

\paragraph{Other variations.} The flavor of open addressing studied in this paper is the most classical version of the problem---the one typically referred to simply as \emph{open addressing} \cite{munro1986techniques, Knuth98Vol3}. However, there are also many other variations that have been studied, some of which have also been referred to as open addressing: notable examples include \emph{greedy} open addressing, where each insertion must use the first free slot it finds \cite{ullman1972note, yao1985uniform}; and \emph{non-oblivious} open addressing, where the querier can probe the hash table adaptively rather than using a fixed probe sequence \cite{fiat1988nonoblivious, fiat1993implicit}. Thus the term \emph{open addressing} is not always unambiguous in the literature, and to emphasize the fact that we are focusing on standard open addressing (rather than the greedy or non-oblivious variations), we will refer to the class of hash tables that we are studying as \defn{classical open addressing} for the rest of the paper.

\section{The Rainbow Cell}\label{sec:cell}

In this section, we describe a simple hash table called the \defn{rainbow cell}. The rainbow cell operates continuously at a load factor of 1. That is, the hash table is initialized to contain $n$ elements in $n$ slots, and then each update to the hash table both deletes some element and inserts some new element, so that every slot is still occupied. One could also extend the rainbow cell to support load factors less than 1, but as we shall see, this will not actually be necessary for our applications of it.

What makes the rainbow cell a bit unusual, as a hash table, is that it prioritizes update time over query time. Indeed, whereas updates will take expected time $O(1)$, queries will be permitted to take expected time $O(n^{3/4})$. This may seem counterproductive, given that our final data structure, the rainbow hash table (Section \ref{sec:rainbowhashing}), will support constant-time \emph{queries} and (slightly) super-constant-time insertions/deletions. Nonetheless, the rainbow cell will end up serving as a critical building block in constructing the full rainbow hash table.

Finally, for this section, we will assume that deletions \emph{already know the position of the element being deleted}. In many hash tables, including those constructed in later sections, this assumption is without loss of generality, because the expected time to query an element is less than the intended expected deletion time. The assumption is \emph{not} without loss of generality for the rainbow cell, because queries are slower than deletions, but the assumption will turn out to be true (by design) in our applications of rainbow cells later on.

In the rest of the section, we will define how a rainbow cell works and prove the following proposition: 

\begin{proposition}
The rainbow cell is a classical open-addressed hash table that operates continuously at load factor $1$, supports updates in expected time $O(1)$, and supports queries in expected time $O(n^{3/4})$. Updates consist of a deletion and insertion pair, and assume that the element being deleted is in a position already known (i.e., that we do not need to perform a query to find the element).
\label{prop:cell}
\end{proposition}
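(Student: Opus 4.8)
The plan is to realize the rainbow cell as a two-level, bucketed structure, and to reduce its correctness and its time bounds to expansion properties of a random bipartite graph.

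\textbf{Construction.} Partition the array into $k=\Theta(n^{1/4})$ \emph{blocks} (the ``colors'') $B_1,\dots,B_k$, each of size $m=\Theta(n^{3/4})$. A hash function assigns every key $x$ a list of $d=\Theta(1)$ \emph{candidate blocks}, and we maintain the invariant that $x$ is stored in one of its candidate blocks and that the keys inside each block occupy a prefix of the block (``left-packed'', in arbitrary internal order). The probe sequence of $x$ enumerates all slots of $x$'s first candidate block, then of its second, and so on; thus $\Query(x)$ scans at most $d$ whole blocks and costs $O(dm)=O(n^{3/4})$. A $\Delete$ at a known position in block $B_i$ moves $B_i$'s last key into the vacated slot, restoring left-packing in $O(1)$ work. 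For $\Insert(Y)$ we scan $Y$'s $d$ candidate blocks; if one has a free slot, we append $Y$ there; if all $d$ are full, we run a bucket-level eviction walk in the spirit of bucketed cuckoo hashing (pick a full candidate block $B_i$ of $Y$, pick a uniformly random key $z\in B_i$, recursively relocate $z$ into one of its other candidate blocks, and place $Y$ in the freed slot). Each step is $O(1)$ work, so $\Insert$ costs $O(1)$ times the walk length.

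\textbf{Analysis, and the main obstacle.} Two facts are needed. \emph{(1) Feasibility}: since $km=n$, we need that at every moment the current set $S$ (with $|S|=n$) can be packed into the $k$ blocks through the keys' candidate lists --- equivalently, that $d$-ary bucketed cuckoo hashing with polynomially large buckets admits load factor exactly $1$. For a suitable constant $d\ge 2$ this should follow from the Hall-type expansion condition on the random key--block graph: the only near-tight set is the set of all blocks, where the gap between the expected number of keys ``trapped'' inside a set of blocks and that set's capacity is $\Theta(n^{3/4})$, comfortably above the $\widetilde{O}(\sqrt{n})$-scale fluctuations, so concentration (with a union bound over key sets, assuming a polynomial-size universe) closes it. \emph{(2) Maintainability in $O(1)$}: the eviction walk must terminate in $O(1)$ expected steps \emph{throughout an adversarial update sequence}, and this is the crux of the proof. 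The guiding intuition is that although the \emph{global} load factor is $1$ (so the \emph{unbucketed} cuckoo process would sit exactly at its threshold), bucketing provides local slack: block occupancies behave like $n$ balls thrown into $k$ bins of capacity $m$, so a typical block carries $\Theta(\sqrt m)=\Theta(n^{3/8})$ free slots, a uniformly random candidate block of an evicted key has room with probability $\Omega(1)$, and the walk contracts geometrically. Turning this into a proof that survives a long (oblivious) sequence of updates is the delicate step; I would either (i) track a potential equal to the aggregate block overflow/deficit and show each update perturbs it by $O(1)$ in expectation while eviction steps only decrease it, or (ii) run a witness-tree argument showing that a length-$\ell$ walk forces an unlikely dense sub-configuration in the random block graph, adapting the static analysis of bucketed-cuckoo insertions to the dynamic setting. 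Finally, a model-compliance point forced by the ``no extra metadata'' restriction: a block's frontier must be recoverable from the array alone, so I would store each block's contents packed only up to an $O(1)$-locatable frontier rather than exactly left-packed, folding this bookkeeping into the block layout so that $\Insert$ and $\Delete$ stay within their $O(1)$ budget.
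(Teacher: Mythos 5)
There is a genuine gap, and it sits exactly at the step you flag as ``the crux.'' At load factor exactly $1$ your invariant forces every block to be completely full at all times: total occupancy equals $km=n$, no block can exceed its capacity $m$, so every block holds exactly $m$ keys and the only free slot in the entire table is the one momentarily created by the deletion half of the update. Consequently the intuition that ``block occupancies behave like $n$ balls in $k$ bins, so a typical block carries $\Theta(\sqrt m)$ free slots'' is false, and ``a uniformly random candidate block of an evicted key has room with probability $\Omega(1)$'' has probability $0$ rather than $\Omega(1)$ (except for the single block containing the hole). The eviction walk therefore cannot contract geometrically: it must route the displacement chain to the one specific block $B_j$ holding the hole. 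With $d=\Theta(1)$ candidates per key drawn uniformly from $k=\Theta(n^{1/4})$ blocks, each evicted key has $B_j$ among its candidates with probability $O(d/k)=O(n^{-1/4})$, so a random walk needs $\Omega(n^{1/4})$ expected steps. Even the smarter move --- note that a full candidate block of $Y$ contains roughly $m(d-1)/k=\Theta(n^{1/2})$ keys that list $B_j$ as a candidate, so a length-$2$ chain exists w.h.p. --- does not rescue $O(1)$ time, because those compatible keys are only a $\Theta(1/k)=\Theta(n^{-1/4})$ fraction of the block and are spread uniformly through it, so locating one by sampling or scanning costs $\Theta(n^{1/4})$ probes. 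Neither of your two proposed repair strategies (the overflow potential, the witness tree) addresses this, since both are aimed at bounding walk length in a regime where free slots are locally plentiful, which is precisely the regime that load factor $1$ excludes.

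The paper's construction is designed around exactly this obstruction. It abandons symmetric constant-degree candidate lists in favor of an asymmetric split: a $1-\Theta(n^{-1/4})$ fraction of keys are \emph{heavy} with a single fixed bucket, and a $\Theta(n^{-1/4})$ fraction are \emph{light} and may sit in \emph{any} bucket --- but only in that bucket's last $n^{1/2}$ ``sky slots.'' The light keys are the routing fluid: to move the hole from bucket $j$ to bucket $h(y)$ one swaps a light key out of $h(y)$'s sky slots, a single step. And because the parameters are tuned (via a Chernoff bound) so that each bucket's sky slots are a constant fraction heavy and a constant fraction light, the needed key is found by rejection sampling in $O(1)$ expected probes --- the flexible keys are rare globally but dense in a known, small region, which is what your uniform candidate assignment lacks. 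Your feasibility discussion (Hall's condition with a union bound over block subsets) and your query and deletion costs are fine as far as they go, but the $O(1)$ update bound, which is the substance of the proposition, does not hold for the structure you propose.
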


\paragraph{The data structure.} The rainbow-cell partitions an array into $n^{1/4}$ \defn{buckets}, each of which has $n^{3/4}$ slots. The final $n^{1/2}$ slots in each bucket will play a special role, and are referred to as \defn{sky slots}. Across all $n^{1/4}$ buckets, there are $n^{3/4}$ total sky slots.

Each element $x$ will be assigned a status as either \defn{heavy} or \defn{light}. The status is given to $x$ by a \defn{status hash function} $s(x)$ that sets $s(x) = \Heavy$ with probability $1 - 1 / (2n^{1/4})$ and $s(x) = \Light$ with probability $1 / (2n^{1/4})$. If an element $x$ is heavy, then it is also assigned a random bucket $h(x)$. 

Whenever possible, the state of the data structure will be as follows: each heavy element $x$ is stored in its assigned bucket $h(x)$, and each light element $y$ is stored in some sky slot of some bucket (any bucket will do). If the elements are stored in this type of configuration, we say the hash table is in a \defn{common-case configuration}. An example is given in Figure \ref{fig:cell}.

\begin{figure}
    \centering
    \includegraphics[height = 6 cm]{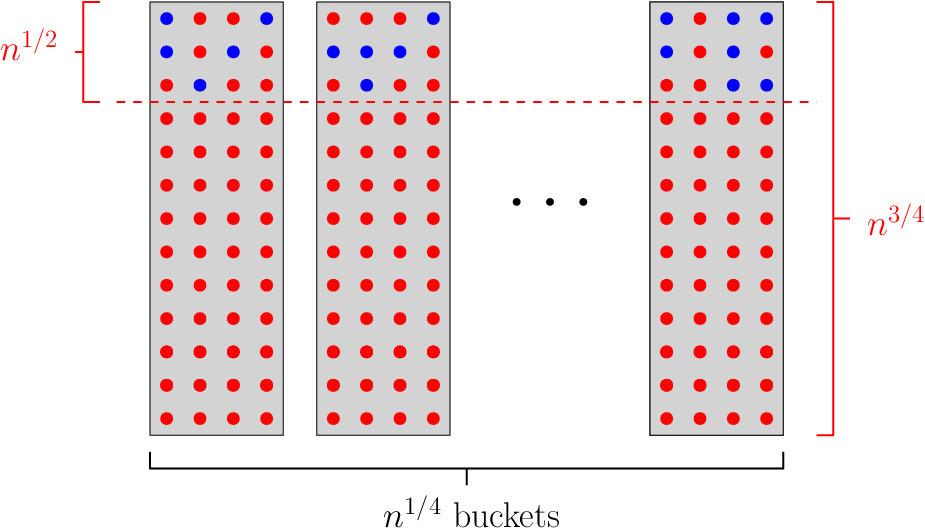}
    \caption{A rainbow cell in a common-case configuration. The red elements are heavy elements, each of which is in their assigned bucket; the blue elements are light elements, each of which is in a sky slot of some bucket.}
    \label{fig:cell}
\end{figure}

Assuming a common-case configuration, queries are straightforward to implement: Heavy elements $x$ are queried by scanning the single bucket $h(x)$, and light elements can be queried by scanning the sky slots of all buckets. Both cases result in $O(n^{3/4})$-time queries.

Update operations will keep the data structure in a common-case configuration whenever it is possible to do so. If it is not possible, because some bucket has either fewer than $n^{3/4} - n^{1/2}$ or greater than $n^{3/4}$ (heavy) elements that hash to it, then the hash table is said to have incurred a \defn{full failure}. Whether or not the hash table is incurring a full failure is indicated to queries by the relative order of the final two elements in each bucket (i.e., whether the final two elements $a$ and $b$ satisfy $a < b$ or $b < a$). If a query observes that a full failure has occurred, then after it has finished the $n^{3/4}$ probes that it would normally perform, it performs $n$ additional probes to check the entire hash table for the element being queried. Thus the query time is $O(n^{3/4})$ whenever a full failure has \emph{not} occurred, and is $O(n)$ whenever a full failure has occurred.

Finally, we must describe how to implement updates so that (1) the hash table is in a common-case configuration whenever possible; (2) the update detects whenever a full failure occurs; and (3) the expected time for the update is $O(1)$. 

To describe the update operation, it is helpful to first assume what we call a \defn{update-friendly input}: such an input is one in which every bucket $j \in [1, n^{1/4}]$ has between $n^{3/4} - \frac{2}{3} \sqrt{n}$ and $n^{3/4} - \frac{1}{3} \sqrt{n}$ heavy items that hash to it. What is nice about update-friendly inputs is that the sky slots in each bucket are are guaranteed to be at least $1/3$ heavy items and at least $1/3$ light items. This means that if we wish to find a light (resp.~heavy) item within the sky slots of some bin, we can do so in $O(1)$ expected time by simply sampling random sky slots in the bin until we find a light (resp.~heavy) item. We call this the \defn{sampling trick}.

Assuming an update-friendly input, we can implement updates as follows. Suppose we wish to delete some item $x$, currently in some bucket $j$, and insert some item $y$. The update can be performed as follows: 
\begin{itemize}
    \item We begin by removing $x$, which creates a free slot $s$ in bucket $j$. 
    \item If $s$ is not a sky slot, then we find some heavy element $x'$ that \emph{is} in a sky slot of bucket $j$ (we can do this in $O(1)$ expected time using the sampling trick). We then move $x'$ to slot $s$, freeing up some sky slot $s'$. For notational convenience, if $s$ was already a sky slot, then we simply define $s' = s$. Thus, at the end of this step, we have created a free sky slot $s'$ in bin $j$.
    \item If $y$ is a light element, then we complete the update by placing it in slot $s'$. If $y$ is a heavy element, then we need to create a free slot $s''$ in bin $h(y)$ and place $y$ there. To do this, we find a light element $z$ in some sky slot $s''$ of bin $h(y)$ (again, this step takes $O(1)$ expected time using the sampling trick).  We then move $z$ to slot $s'$, and place $y$ in slot $s''$. This completes the update, while keeping the hash table in a common-case configuration.
\end{itemize}

Whenever the hash table is in a common-case configuration, each update will attempt to use the above protocol. If the protocol succeeds, then the hash table will continue to be in a common-case configuration, as desired. The protocol is said to \defn{fail} if it runs for time $n$ without completing. In this case, the update swaps to a more naive method: it scans the entire hash table, determines whether we are experiencing a full failure, and rebuilds the entire hash table appropriately. This failure mode guarantees that the update takes worst-case $O(n)$ time.

Finally, whenever the hash table is \emph{not} in a common-case configuration (i.e., it is already experiencing a full failure), the updates default to the $O(n)$-time failure mode: they scan the entire hash table and rebuild it in whatever way is appropriate based on whether the hash table is still experiencing a full failure or not.

\paragraph{Analyzing the rainbow cell.} Having described how the rainbow cell works, the analysis follows from a straightforward application of Chernoff bounds.

\begin{lemma}
\label{lem:insertionfriendly}
Each input is update-friendly with probability $1 - n^{-\omega(1)}$.
\end{lemma}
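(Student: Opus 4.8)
The plan is a one-shot Chernoff bound per bucket followed by a union bound. Fix a bucket $j \in \{1,\dots,n^{1/4}\}$ and let $X_j$ denote the number of heavy items that hash to $j$. Each item $i$ contributes to $X_j$ exactly when $s(i) = \Heavy$ and $h(i) = j$, an event determined solely by item $i$'s own hash values, so $X_j$ is a sum of $n$ independent indicators, each with probability $p := \frac{1}{n^{1/4}}\bigl(1 - \frac{1}{2n^{1/4}}\bigr) = \frac{1}{n^{1/4}} - \frac{1}{2\sqrt n}$. Hence $\mu := \E[X_j] = np = n^{3/4} - \frac12\sqrt n$.

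The crucial observation is that $\mu$ is precisely the midpoint of the target interval $\bigl[\,n^{3/4} - \tfrac23\sqrt n,\ n^{3/4} - \tfrac13\sqrt n\,\bigr]$, so bucket $j$ violates update-friendliness only if $|X_j - \mu| > \tfrac16\sqrt n$. I would bound this with a standard multiplicative Chernoff inequality: with relative deviation $\delta := \tfrac16\sqrt n / \mu = \Theta(n^{-1/4})$, we get $\Pr\bigl[\,|X_j - \mu| > \tfrac16\sqrt n\,\bigr] \le 2\exp(-\Omega(\delta^2\mu)) = 2\exp(-\Omega(n/\mu)) = 2\exp(-\Omega(n^{1/4}))$, using $\mu = \Theta(n^{3/4})$. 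A union bound over the $n^{1/4}$ buckets then gives a total failure probability of at most $n^{1/4}\cdot 2\exp(-\Omega(n^{1/4})) = \exp(-\Omega(n^{1/4}))$, which is $n^{-\omega(1)}$ since $n^{1/4} = \omega(\log n)$.

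I do not anticipate any real obstacle: the only points worth checking are (i) that $\mu$ lands in the center of the allowed window, so the required deviation is a full $\Theta(\sqrt n)$ and the Chernoff exponent comes out to $\Omega(n^{1/4})$ rather than something weaker, and (ii) that independence is invoked only within a single $X_j$ (where it genuinely holds) and never across buckets—the union bound makes the latter unnecessary, even though the counts $X_1,\dots,X_{n^{1/4}}$ are in fact only negatively correlated. Pinning down the explicit Chernoff constants is routine.
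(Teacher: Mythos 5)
Your proof is correct and follows essentially the same route as the paper's: a Chernoff bound showing each bucket's heavy-item count concentrates within $\pm\frac16\sqrt n$ of its mean $n^{3/4}-\frac12\sqrt n$, followed by a union bound over buckets. The extra details you supply (the exact per-item probability, the observation that the mean is the midpoint of the target window, and the explicit exponent $\Omega(n^{1/4})$) are all accurate and simply flesh out what the paper leaves implicit.
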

\begin{proof}
The expected number of heavy elements that hash to a given bucket is $n^{3/4} - \frac{1}{2} n^{1/2}$. It follows by a Chernoff bound that, with probability $1 - n^{-\omega(1)}$, the number of such elements will be between  $n^{3/4} - \frac{2}{3} \sqrt{n}$ and $n^{3/4} - \frac{1}{3} \sqrt{n}$. Moreover, by a union bound, the probability of this bound failing for any bucket $j$ is at most $n^{3/4} \cdot n^{-\omega(1)} = n^{-\omega(1)}$. 
\end{proof}

\begin{lemma}
\label{lem:cellupdate}
The expected time to perform an update is $O(1)$.
\end{lemma}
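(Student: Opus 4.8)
The plan is to split on whether the element set $S$ present just before the update is \emph{update-friendly}. By \Cref{lem:insertionfriendly} this fails with probability only $n^{-\omega(1)}$, and whenever it fails the update still runs in worst-case time $O(n)$: the common-case protocol is aborted after $n$ steps, and the scan-and-rebuild fallback also costs $O(n)$. Hence the non-update-friendly case contributes $n^{-\omega(1)}\cdot O(n)=o(1)$ to the expected cost, and it remains to bound the expected cost by $O(1)$ conditioned on $S$ being update-friendly.

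Conditioned on that, the first thing I would establish is a deterministic invariant, maintained by induction over the sequence of operations: at every step the table is in a common-case configuration whenever the current element set admits one (equivalently, is not a full failure). The base case is the initial rebuild, and the inductive step is read off from the update description --- if the common-case protocol completes it leaves a common-case configuration by construction, and if it aborts (or the table was already experiencing a full failure) the $O(n)$-time scan-and-rebuild re-establishes the invariant for the new element set. Since an update-friendly set has every bucket's heavy count in $[n^{3/4}-\tfrac23\sqrt n,\, n^{3/4}-\tfrac13\sqrt n]$, which lies strictly inside the no-full-failure range $[n^{3/4}-\sqrt n,\, n^{3/4}]$, the invariant hands us a common-case configuration; and a short count shows that in a common-case configuration with an update-friendly set, each bucket has at least $\tfrac13\sqrt n$ heavy and at least $\tfrac13\sqrt n$ light elements among its $\sqrt n$ sky slots. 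So each call to the sampling trick hits a slot of the desired type with probability at least $1/3$ per sample, i.e.\ in $O(1)$ expected samples. The update then does $O(1)$ work plus $O(1)$ sampling-trick calls, so its uncapped expected running time is $O(1)$ by linearity of expectation; capping it at $n$ steps only decreases this, and the probability that the cap is actually reached is at most the probability that some sampling loop runs for $\Omega(n)$ iterations, which is $(2/3)^{\Omega(n)}=n^{-\omega(1)}$, so the fallback rebuild adds a further $n^{-\omega(1)}\cdot O(n)=o(1)$. Summing the pieces gives expected update time $O(1)$.

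The step I expect to need the most care is the invariant, precisely because the configuration the table sits in before a given update is a function of the hash functions through the \emph{entire} prior history (so that one cannot simply analyze a single step in isolation). One must handle the transitions into and out of full-failure states, and in particular the possibility that the new set $S'=(S\setminus\{x\})\cup\{y\}$ is itself a full failure --- in which case the common-case protocol cannot complete and must time out into the rebuild --- which is why the invariant has to be phrased in terms of ``admits a common-case configuration'' rather than in terms of update-friendliness, and why one has to verify the protocol behaves correctly in each of these regimes. It is worth being explicit that update-friendliness is strictly stronger than the mere absence of a full failure: the sampling trick genuinely requires the stronger property (a bucket with exactly $n^{3/4}-\sqrt n$ heavy elements has \emph{no} heavy element in a sky slot), which is exactly why the analysis routes through \Cref{lem:insertionfriendly} and not through the weaker condition.
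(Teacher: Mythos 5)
Your proof is correct and follows the same route as the paper's: condition on update-friendliness via Lemma~\ref{lem:insertionfriendly}, use the sampling trick to get $O(1)$ expected time in the friendly case, and absorb the $O(n)$ fallback as $n^{-\omega(1)}\cdot O(n)=o(1)$. The paper's own proof is a two-line version of exactly this; your added care about the common-case-configuration invariant and the distinction between update-friendliness and mere absence of full failure fills in details the paper leaves implicit but does not change the argument.
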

\begin{proof}
If the input is update-friendly, then the sampling trick allows us to complete the update in $O(1)$ expected time. If the input is not update-friendly, then the update may take time as much as $O(n)$. Thus, by Lemma \ref{lem:insertionfriendly}, the expected update time is at most
\[O(1) + \Pr[\text{non-update-friendly input}] \cdot O(n) = O(1) + n^{-\omega(1)} \cdot O(n) = O(1). \qedhere\]
\end{proof}

\begin{lemma}
The expected time to perform a query is $O(n^{3/4})$.
\label{lem:cellquery}
\end{lemma}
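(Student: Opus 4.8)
The plan is to mirror the structure of the proof of Lemma~\ref{lem:cellupdate}: bound the expected query cost by conditioning on whether a full failure is currently occurring. Whenever the hash table is in a common-case configuration (no full failure), a query for a heavy element $x$ examines only the $n^{3/4}$ slots of bucket $h(x)$, and a query for a light element examines only the $n^{1/4}\cdot n^{1/2}=n^{3/4}$ sky slots; in both cases the probe sequence has length $O(n^{3/4})$. Along the way the query reads the relative order of the last two elements of each bucket it visits, which encodes the ``no full failure'' flag, so it correctly stops without performing the extra $n$ safety probes. Thus a query costs $O(n^{3/4})$ when no full failure is in effect, and $O(n^{3/4})+O(n)=O(n)$ when one is (it performs the usual scan, reads the failure flag, and then rescans the whole array).

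The one quantitative step is to bound $\Pr[\text{full failure}]$, and this follows immediately from Lemma~\ref{lem:insertionfriendly}. A full failure requires some bucket to receive fewer than $n^{3/4}-\sqrt n$ or more than $n^{3/4}$ heavy elements, whereas an update-friendly input has every bucket receiving between $n^{3/4}-\tfrac23\sqrt n$ and $n^{3/4}-\tfrac13\sqrt n$ heavy elements --- a range strictly contained in $[\,n^{3/4}-\sqrt n,\ n^{3/4}\,]$. Hence update-friendliness precludes a full failure, so $\Pr[\text{full failure}]\le\Pr[\text{not update-friendly}]=n^{-\omega(1)}$.

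Combining the two bounds gives
\[
\E[\text{query time}] \le O(n^{3/4}) + \Pr[\text{full failure}]\cdot O(n) = O(n^{3/4}) + n^{-\omega(1)}\cdot O(n) = O(n^{3/4}),
\]
as claimed. There is no real obstacle here: the probe-count bounds in the common case are read directly off the data-structure description, and the concentration estimate is a restatement of Lemma~\ref{lem:insertionfriendly}. The only point that merits a sentence of care is confirming that a query can detect a full failure from the buckets it already visits, which it can, since every bucket carries the failure flag in the order of its final two elements.
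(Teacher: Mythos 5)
Your proof is correct and follows essentially the same route as the paper's: condition on whether a full failure is in effect, bound the common-case cost by $O(n^{3/4})$ and the failure-mode cost by $O(n)$, and observe that a full failure implies the input is not update-friendly, so Lemma~\ref{lem:insertionfriendly} makes the failure contribution $n^{-\omega(1)}\cdot O(n)=o(1)$. The only difference is that you spell out why update-friendliness precludes a full failure (the containment of the ranges), a step the paper asserts without comment.
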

\begin{proof}
If the hash table is in a common-case configuration, then the query takes time $O(n^{3/4})$. If the hash table is in a non-common-case configuration, then it is experiencing a full failure, and the query may take time as much as $O(n)$. The probability of a full failure occurring is at most the probability that the input is non-update-friendly. Thus, by Lemma \ref{lem:insertionfriendly}, the expected update time is at most
\[O(n^{3/4}) + \Pr[\text{non-update-friendly input}] \cdot O(n) = O(n^{3/4}) + n^{-\omega(1)} \cdot O(n) = O(n^{3/4}).\qedhere\]
\end{proof}

Since the rainbow cell is, by design, a classical open-addressed hash table that operates at load factor $1$, we can combine Lemmas \ref{lem:cellupdate} and \ref{lem:cellquery} to obtain a proof of Proposition \ref{prop:cell}.

\section{The Rainbow Hash Table}\label{sec:rainbowhashing}

In this section, we describe the most basic version of \defn{rainbow hashing}. For now, we will confine ourselves to the setting where the hash table operates continually at a load factor of 1 (we will support other load factors later on in Section \ref{sec:loadfactor}), and where the hash table does not resize (we will add dynamic resizing in Section \ref{sec:resizing}). Subject to these restrictions, we will achieve $O(1)$ expected-time queries and $O(\log \log n)$ expected-time updates.

\begin{proposition}
Basic rainbow hashing is a classical open-addressed hash table that operates continually at load factor $1$, achieves $O(1)$ expected-time queries, and achieves $O(\log \log n)$ expected-time updates.
\label{prop:rainbow}
\end{proposition}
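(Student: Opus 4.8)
The plan is to build basic rainbow hashing hierarchically, using rainbow cells (Proposition~\ref{prop:cell}) as the bottom-level building block and combining $O(\log\log n)$ levels of recursion so that an update touches only one cell at each level. At the top level, partition the $n$ slots into roughly $n / m_0$ groups, where $m_0$ is a parameter (say $m_0 = \poly\log n$); within each group, recurse. At the finest scale, each group is a rainbow cell on $m$ elements, where $m$ is small enough that the cell's $O(m^{3/4})$ query time is $O(1)$ after amortizing over the levels — this is why the cell trades query time for update time: an individual cell is queried rarely, but updated every step. The key structural fact to establish is that each element $x$ can be routed, via hash functions, to a single cell at the bottom whose identity depends only on $x$, $N$, and random bits (so queries remain a fixed probe sequence), and that each update only rearranges elements inside one bottom-level cell plus $O(1)$ elements per level above it.

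First I would set up the level structure: fix a doubling schedule of sizes $n = n_0 > n_1 > \cdots > n_k = O(1)$ with $n_{i+1} = n_i^{c}$ for a constant $c < 1$, so that $k = O(\log\log n)$; at level $i$ the universe of slots is partitioned into blocks of size $n_i$, and each block is recursively structured. Second, I would handle the load-factor-$1$ ("compaction") requirement: since every level operates at load factor exactly $1$, an insertion that lands in a full block must evict a resident element, which is then re-inserted one level up — this is precisely the update-as-(delete,insert)-pair interface that the rainbow cell provides, so the eviction chains are length $O(1)$ per level and $O(k) = O(\log\log n)$ in total. Third, I would verify that the "deletion knows the element's position" assumption needed by Proposition~\ref{prop:cell} is met: when we descend the hierarchy during an update we always arrive at a cell already knowing where the relevant element sits, because we computed its location from the hash functions on the way down. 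Fourth, queries: to query $x$, follow its hash-determined path down through the $k$ levels to a single bottom cell and query there; the cost is $\sum_i O(1) + O(n_k^{3/4}) = O(\log\log n)$ naively, so I would need to argue more carefully — e.g. only the single bottommost cell needs a slow $O(m^{3/4})$ scan while the intermediate levels are navigated in $O(1)$ each, and choosing $n_k$ a large enough constant makes even that $O(1)$; alternatively the intermediate levels store $x$ directly in an $O(1)$-probe location so that the recursion contributes $O(1)$, not $O(1)$ per level.

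The main obstacle I expect is reconciling two tensions simultaneously: (a) keeping queries at $O(1)$ total rather than $O(1)$ per level — a naive hierarchy gives $\Theta(\log\log n)$ query time, so the construction must ensure that all but $O(1)$ levels contribute zero query cost, presumably by having most of an element's "indirection" resolved by hashing rather than by scanning, and by making the rainbow cells small enough that their $O(m^{3/4})$ scan is genuinely constant; and (b) controlling the failure probabilities across all $n^{\Omega(1)}$ cells via a union bound, since each cell fails (triggers an $O(\text{size})$ rebuild) only with probability inverse-superpolynomial in its own size, and the smallest cells have only constant size, where "inverse-superpolynomial" is meaningless. The resolution for (b) is to stop the recursion while cell sizes are still $\poly\log n$ (not constant), so that $(\log n)^{-\omega(1)}$ is still a usable failure bound and $(m^{3/4}) = \poly\log n$ query cost inside a bottom cell must then be absorbed differently — perhaps by one more non-recursive layer, e.g.\ a small perfectly-hashed or brute-forced structure, on top of each $\poly\log n$-size cell. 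I would expect the actual paper to thread this needle with a carefully chosen size schedule and an explicit argument that the expected query cost telescopes to $O(1)$; nailing down that telescoping, together with the all-levels union bound over failure events, is where the real work lies, and everything else (the eviction-chain bookkeeping, the open-addressing legality of the probe sequences) is routine given Proposition~\ref{prop:cell} and Chernoff bounds.
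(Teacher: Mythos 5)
Your proposal correctly identifies the overall architecture (a recursion of depth $O(\log \log n)$ built on rainbow cells, with $O(1)$ work per level during updates), but the two obstacles you flag at the end are not loose ends to be threaded --- they are the crux, and the resolutions you sketch for them do not work. The missing idea is that the paper does \emph{not} route every element to a single bottom-level cell. Instead, each element $x$ is assigned a random \emph{level} $C(x) \in \{2,\ldots,\overline{\ell}\}$ (its ``color''), with $\Pr[C(x)=i] = p_i = \Theta(b_{i-1}/n_{i-1})$ tuned so that the number of elements hashing into any subtree concentrates around that subtree's total buffer capacity minus about half of its root buffer. A color-$i$ element lives in the buffer (a rainbow cell of size $b_i = n_i^{0.51}$) of the level-$i$ subproblem $h(x)$ it hashes to, or in one of that subproblem's $f_i = n_i^{0.01}$ children's buffers; a query probes exactly those $1+f_i$ buffers, costing $O(b_i^{3/4} + f_i b_{i-1}^{3/4})$, which is a positive power of $n_i$ --- but this is paid only with probability $p_i$, a larger negative power of $n_i$, so the expectation telescopes to $O(1)$. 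This is the mechanism you were groping for in obstacle (a): the query never traverses levels at all (the probe sequence jumps straight to $h(x)$), and the $O(1)$ bound is an expectation over the random color, not a per-level sum.

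The color/buffer mechanism is also what resolves your obstacle (b) and the load-factor-$1$ tension, and your patches cannot substitute for it. If every element were routed to a cell of size $m$, the occupancy of a given cell would deviate from $m$ by $\Theta(\sqrt{m})$, and at load factor $1$ there is no slack anywhere to absorb overflow --- this fails for every choice of $m$, constant or $\poly\log n$. In the paper, each buffer holds a constant fraction of elements of its own color and a constant fraction of elements of its parent's color; the parent-colored elements are the reservoir that absorbs the $\tilde{O}(\sqrt{n_i}) \ll b_i$ fluctuations, and they are also what the ``sampling trick'' finds in $O(1)$ expected time at each buffer along the root-to-$h(x)$ path to shift the free slot up or down, giving the $O(\log\log n)$ update bound. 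Failure events are handled per-subproblem: a level-$i$ subproblem fails with probability $n_i^{-\omega(1)}$ and costs $\poly(n_i)$ to rebuild, so the expected contribution is summable --- no union bound over constant-size cells is needed. Your fallback of a ``perfectly-hashed or brute-forced structure'' atop $\poly\log n$-size cells is not available in classical open addressing (there is nowhere to store such metadata, and the probe sequence must remain oblivious), so the $\poly\log n$ query cost in your scheme remains unaccounted for.
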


\paragraph{Defining a recursive structure. }
To describe basic rainbow hashing, we will think of the array as being broken into the recursive tree structure shown in Figure \ref{fig:rainbowrecursion}. The root of the tree, which occurs at some level $\overline{\ell} = \Theta(\log \log n)$, is a single subproblem that contains the entire array---its size is denoted by $n_{\overline{\ell}} \defeq n$. The leaves of the tree, which occur at level $1$, are subproblems that consist of a buffer of $b_1 = n_{1} = O(1)$ elements. For every level $i \in (1, \overline{\ell}]$, the subproblems in level $i$ have the following structure: each level-$i$ subproblem has size $n_i$, and it consists of a \defn{level-$i$ buffer} of size $b_i = n_i^{0.51}$ as well as $f_i = n_i^{0.01}$ children that are each level-$(i - 1)$-nodes of sizes $$n_{i - 1} := (n_i - b_i) / f_i = \Theta(n_i^{0.99}).$$
Note that the recursive relationship between $n_i$ and $n_{i - 1}$ determines all of the values $\{n_i\}, \{b_i\}, \{f_i\}$ in the tree. Finally, it will be helpful to also define $m_1, m_2, \ldots, m_{\overline{\ell}}$ so that $m_i = \Theta(n / n_i)$ is the number of level-$i$ subproblems.

Each item $x$ will hash to a random \defn{color} $C(x) \in \{2, 3, \ldots, \overline{\ell}\}$ (we will specify the distribution on $C(x)$ in a moment), and to a uniformly random subproblem $h(x)$ out of the $m_{C(x)}$ subproblems with level $C(x)$. The probability distribution for $C(x)$ is given by $\Pr[C(x) = i] = p_i$ for a set of values $p_2, p_3, \ldots, p_{\overline{\ell}}$ satisfying $\sum_i p_i = 1$ and such that, for $1 < i < \overline{\ell}$, 
\begin{equation}
(p_2 + \cdots + p_i)n \in \left(\sum_{j = 1}^{i} m_j \cdot b_j\right) - [0.4, 0.6] \cdot m_i \cdot b_i.
\label{eq:pi}
\end{equation}
It may be helpful to think about this inequality at a per-subproblem level: It is equivalent to say that, for a subproblem $s$ in level $1 < i < \overline{\ell}$, if we define $t_s$ to be the sum of the sizes of the buffers of $s$ and all of $s$'s descendants, and we define $q_s$ to be the total number of elements that hash to $s$ and $s$'s descendants, then 
\begin{equation}\E[q_s] = t_s - [0.4, 0.6] \cdot b_i.
\label{eq:corereq}
\end{equation}
The use of the range $[0.4, 0.6]$ in \eqref{eq:pi} and \eqref{eq:corereq} will not be important in this section (we could just as well use the concrete value $0.5$), but it will be important later on in Section \ref{sec:resizing} where, in order to support resizing, we will allow $m_i$ to change over time.

\begin{figure}
    \centering
    \includegraphics[height = 7 cm]{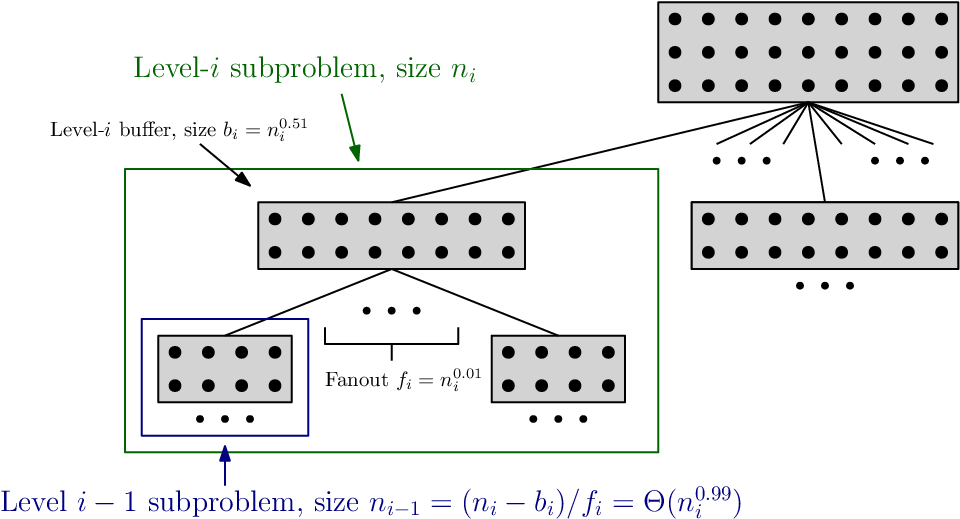}
    \caption{The recursive structure of a rainbow hash table.}
    \label{fig:rainbowrecursion}
\end{figure}

We conclude the definition of the $\{p_i\}$'s by noting their asymptotic relationship to $n_i, b_i$:
\begin{lemma}
For $i > 1$, each $p_i$ satisfies $p_i = \Theta(b_{i - 1} / n_{i - 1})$.
\label{lem:pi}
\end{lemma}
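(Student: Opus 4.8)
The plan is to read off each $p_i$ by telescoping the defining constraint \eqref{eq:pi}. Write $P_i \defeq p_2 + \cdots + p_i$, so that $P_1 = 0$, $P_{\overline{\ell}} = 1$, and for $1 < i < \overline{\ell}$ we may write $P_i n = \sum_{j=1}^i m_j b_j - \theta_i m_i b_i$ for some $\theta_i \in [0.4, 0.6]$. One preliminary fact I will use is the exact identity $\sum_{j=1}^{\overline{\ell}} m_j b_j = n$, which drops out of unrolling the size recursion $n_i = b_i + f_i n_{i-1}$ together with $n_1 = b_1$ and $m_i = \prod_{k > i} f_k$ (integer rounding in $(n_i - b_i)/f_i$ perturbs this only by lower-order terms, which is harmless). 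In particular this makes the boundary value $P_{\overline{\ell}} = 1$ consistent with \eqref{eq:pi} up to an additive $O(b_{\overline{\ell}})$, and it lets us rewrite $P_{\overline{\ell}-1} n = n - m_{\overline{\ell}} b_{\overline{\ell}} - \theta_{\overline{\ell}-1} m_{\overline{\ell}-1} b_{\overline{\ell}-1}$.

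Subtracting consecutive identities then yields, for every $2 \le i \le \overline{\ell}$,
\[ p_i n \;=\; P_i n - P_{i-1} n \;=\; c_i\, m_{i-1} b_{i-1} \;+\; c'_i\, m_i b_i, \]
where $c_i, c'_i \in [0.4, 1]$: for $2 < i < \overline{\ell}$ these are $\theta_{i-1}$ and $1 - \theta_i$; for $i = 2$ (using $P_1 = 0$) the coefficient of $m_1 b_1$ is exactly $1$; and for $i = \overline{\ell}$ (using $P_{\overline{\ell}} = 1$ and the identity) the coefficient of $m_{\overline{\ell}} b_{\overline{\ell}} = b_{\overline{\ell}}$ is exactly $1$. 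It remains to see that the $m_{i-1} b_{i-1}$ term dominates: since $b_j = n_j^{0.51}$ and $m_j = \Theta(n / n_j)$, we get $m_j b_j = \Theta(n\, n_j^{-0.49})$ for every $j \ge 1$ (using $n_1 = b_1 = \Theta(1)$), and this quantity is non-increasing in $j$ up to constant factors because $n_j$ is increasing in $j$; hence $m_i b_i = O(m_{i-1} b_{i-1})$. Therefore $p_i n = \Theta(m_{i-1} b_{i-1})$, and dividing by $n$ and substituting $m_{i-1} = \Theta(n / n_{i-1})$ gives $p_i = \Theta(m_{i-1} b_{i-1} / n) = \Theta(b_{i-1} / n_{i-1})$.

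I expect the only real friction to be bookkeeping. The identity $\sum_j m_j b_j = n$ has to be isolated and recognized as exactly what pins down $p_{\overline{\ell}}$: without it, $n - \sum_{j < \overline{\ell}} m_j b_j$ is not visibly small, and $p_{\overline{\ell}}$ could a priori be as large as $\Theta(1)$ rather than the correct $\Theta(n_{\overline{\ell}-1}^{-0.49}) = \Theta(n^{-0.4851})$. One also has to be slightly careful at the two endpoints $i \in \{2, \overline{\ell}\}$, where the telescoping produces a coefficient-$1$ term ($m_1 b_1$, respectively $b_{\overline{\ell}}$) rather than a $\theta$-weighted one --- though since those coefficients are still $\Theta(1)$, nothing changes. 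The remaining exponent arithmetic is routine.
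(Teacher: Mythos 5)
Your proof is correct and follows essentially the same route as the paper's: telescoping consecutive instances of \eqref{eq:pi} to get $p_i n = \Theta(m_{i-1}b_{i-1}) + \Theta(m_i b_i)$, then using the monotonicity of the cumulative buffer sizes $m_j b_j$ across levels to conclude $p_i n = \Theta(m_{i-1} b_{i-1})$. The only difference is presentational: you make explicit the identity $\sum_j m_j b_j = n$ and the boundary cases $i \in \{2, \overline{\ell}\}$, which the paper handles implicitly (and, for $i=2$, dismisses as trivial).
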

\begin{proof} For $i = 2$ (or $i = O(1)$), the lemma is trivial. By \eqref{eq:pi}, we have for $2 < i < \overline{\ell}$  that
    \begin{align*} 
    p_i n & \in m_i b_i - [0.4, 0.6] \cdot m_i b_i + [0.4, 0.6] \cdot m_{i - 1} b_{i - 1},
    \end{align*}
    and that for $i = \overline{\ell}$,
    \begin{align*} 
    p_i n & \in m_i b_i + [0.4, 0.6] \cdot m_{i - 1} b_{i - 1}.
    \end{align*}
    Either way,
    \begin{align*}
    p_i n & = \Theta(m_i b_i) + \Theta(m_{i - 1} b_{i - 1}) \\
    & = \Theta(m_{i - 1} b_{i - 1}),
    \end{align*}
    where the final step uses the fact that the buffers in level $i - 1$ have a larger cumulative size than those in level $i$.
    It follows that
    \[
    p_i = \Theta\left(\frac{b_{i - 1}}{n/m_{i - 1}}\right) = \Theta(b_{i - 1} / n_{i - 1}). \qedhere
    \]
\end{proof}

\paragraph{The role of rainbow cells.}
For each subproblem $s$ in the tree, we implement $s$'s buffer as a rainbow cell. This means that updates to the buffer can be implemented in $O(1)$ expected time and that queries can be implemented in $O(b_i^{3/4})$ expected time. The fact that queries to $s$'s buffer take time polynomially smaller than $b_i$ will end up being critical to the data structure.

\paragraph{Separating insertions and deletions. }So that we can discuss insertions and deletions separately, we will allow the hash table to take intermediate states containing a free slot in some (known) position. Deletions create such a free slot, and guarantee that the free slot appears in the buffer of the root subproblem; and insertions consume such a free slot, assuming that it was initially in the buffer of the root subproblem. Thus, several of the definitions that follow (namely the definitions of common-case configurations and of the Common-Case Invariant) should be viewed as applying both in the context where there is a free slot somewhere in the hash table and in the context where there is not.

\paragraph{High-level overview: the common-case behavior of the data structure.}
To motivate the data structure, let us take a moment to describe how the data structure will behave when certain rare (at a per-subproblem level) failure events do not occur. Precluding such failure events, the data structure will look as in Figure \ref{fig:rainbowcolors}: for each level-$i$ subproblem $s$, a constant fraction of the elements in $s$'s buffer will be color-$i$ elements that hash to $s$ (unless $s$ is a leaf), and the remaining will be color-$(i + 1)$ elements that hash to $s$'s parent (unless $s$ is the root). Notably, all of the elements in the table that hash to $s$ will be stored in either $s$'s buffer or in $s$'s children's buffers. 

\begin{figure}
    \centering
    \includegraphics[height = 6 cm]{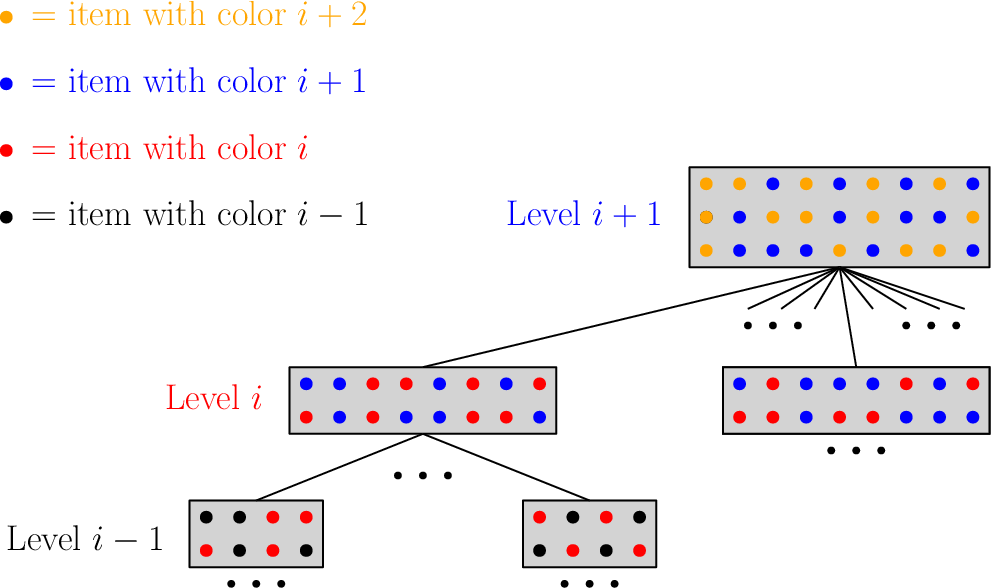}
    \caption{For $i < \overline{\ell}$, the common-case state of a level-$i$ node $s$ will be that a constant fraction of the elements in its buffer have color $i$ (and hash to $s$) and the rest of have color $i + 1$ (and hash to $s$'s parent).}
    \label{fig:rainbowcolors}
\end{figure}

It follows that, to query an element $x$ that hashes to $h(x) = s$, we can simply query $s$'s buffer and $s$'s children's buffers. Since each of these buffers is implemented as rainbow cells, this takes expected time
\[O\left(b_i^{3/4} + f_i \cdot b_{i - 1}^{3/4}\right).\]
This may seem large, but remember that most elements $x$ hash to very low-level nodes $s$. By Lemma \ref{lem:pi}, for $i > 1$, the probability that $x$ hashes to a level-$i$ node is $p_i = \Theta(b_{i - 1} / n_{i - 1})$ , so the expected time to query an element is

\begin{align*}
&    O\left(1 + \sum_{i > 1} \frac{b_{i - 1}}{n_{i - 1}} \cdot \left (b_{i}^{3/4} + f_{i} \cdot b_{i - 1}^{3/4}\right)\right) \\ 
& =  O\left(1 + \sum_{i > 1} \frac{n_i^{0.99 \cdot 0.51}}{n_i^{0.99}} \cdot \left ((n_i^{0.51})^{3/4} + n_i^{0.01} \cdot ((n_{i}^{0.99})^{0.51})^{3/4}\right)\right) \\
& =  O\left(1 + \sum_{i > 1} \frac{n_i^{0.99 \cdot 0.51}}{n_i} \cdot \left (n_i^{0.01} \cdot (n_i^{0.51})^{3/4}\right)\right) \\
& =  O\left(1 + \sum_{i > 1} n_i^{0.99 \cdot 0.51 - 1 + 0.01 + 0.51 \cdot 3/4}\right) \\
& =  O\left(1 + \sum_{i > 1} n_i^{-0.1}\right) \\
& = O(1).
 \end{align*}

Again, this is all assuming that certain rare failure events do not occur, but we will come back to that later.

To implement insertions and deletions, an important insight is that we can make use of a ``sampling trick'' similar to the one we used in rainbow cells. Suppose that we are looking at a subproblem $s$ (that is not a leaf) and that we wish to find an element $y$ in $s$'s buffer that hashes specifically to $s$. Since a constant fraction of the elements in $s$'s buffer hash to $s$, we can use random sampling to find such an element in $O(1)$ expected time. Likewise, if $s$ is not the root, and if we wish to find an element in $s$ that hashes to $s$'s parent, we can also do this in $O(1)$ expected time using random sampling.

As we perform the insertion/deletion, we need to preserve the invariant that, for each subproblem $s$,  all of the elements that hash to $s$ are stored in either $s$'s buffer or in $s$'s children's buffers (again, this ignores certain rare failure events). How can we implement insertions and deletions while preserving this invariant? 

Suppose we wish to insert an element $x$ that hashes to some subproblem $h(x)$. We want to place $x$ in the buffer of $h(x)$, but we currently have a free slot in the buffer of the root subproblem $r$. Let $s_1 = h(x), s_2, s_3, \ldots, s_j = r$ be the path from subproblem $h(x)$ to the root subproblem $r$. We can use the sampling trick to find elements $y_1, y_2, \ldots, y_{j - 1}$ in the buffers of $s_1, s_2, \ldots, s_{j -1}$ with the properties that $h(y_i) = s_{i + 1}$. We can then place $y_{j - 1}$ in the free slot in the root's buffer, place $y_{j - 2}$ in $y_{j - 1}$'s former position, place $y_{j - 3}$ in $y_{j - 2}$'s former position, and so on, ultimately freeing up the position where $y_1$ resided. Finally, we can put $x$ in this position. With this augmenting-path approach, we can complete the insertion while preserving the invariant that every item appears in the buffers of either the subproblem it hashes to or one of that subproblem's children.

Likewise, suppose we wish to delete an element $x$ that is currently in the buffer of some subproblem $s$. By removing $x$, we create a free slot in $s$'s buffer that we need to move to the root. Let $s_1 = s, s_2, s_3, \ldots, s_j = r$ be the path from subproblem $h(x)$ to the root subproblem $r$. We can use the sampling trick to find elements $y_2, y_3, \ldots, y_{j}$ in the buffers of $s_2, s_2, \ldots, s_{j}$ with the properties that $h(y_i) = s_{i}$. We can then place $y_{2}$ in the free slot in $s$'s buffer, place $y_{3}$ in $y_{2}$'s former position, place $y_{4}$ in $y_{3}$'s former position, and so on, ultimately freeing up a slot in $r$'s buffer. Critically, we have once again preserved the invariant that every item appears in the buffers of either the subproblem it hashes to or one of that subproblem's children.

This completes the description of how the data structure would behave if certain rare (at a per-subproblem level) failure events never occurred. The expected time per query would be $O(1)$ and the expected time per update would be $O(\log \log n)$. In general, however, we must be able to handle failure events that break our desired invariants (i.e., that cause the population in the buffer of a node to not simply be a constant fraction of elements that hash to the node and a constant fraction of elements that hash to the parent). Most of the effort in formalizing the data structure will be in designing and maintaining an invariant (that we will call the Common-Case Invariant) that lets us handle these failure events cleanly. We now continue in the rest of the section by presenting and analyzing the full data structure.

\paragraph{Storing a boolean in each buffer. }
We shall assume for the sake of discussion that, for each subproblem $s$, we can store a boolean (called the \defn{failure indicator}) indicating whether $s$ is in a certain type of failure mode (to be specified later). This boolean will not affect the probe sequence that queries perform (since, after all, queries must be oblivious), but will help queries determine when they can terminate. We emphasize that the failure indicator can easily be encoded implicitly in the relative order of, say, the first two elements of $s$. Thus it is only for ease of discussion that we treat the boolean as being stored explicitly.

\paragraph{The Common-Case Invariant. } 
A subproblem $s$ is said to be in a \defn{weakly common-case configuration} if:
\begin{itemize}
\item For each child $c$ of $s$, all of the elements that hash to $c$ and its descendants are stored in the buffers of $c$ and its descendants. 
\item The only elements in $s$'s descendants' buffers that do not hash to $s$'s descendants are elements that hash to $s$ and are in $s$'s children's buffers.
\end{itemize}
The subproblem $s$ is further said to be in a \defn{strongly common-case configuration} if both $s$ and $s$'s ancestors are all in weakly common-case configurations. Critically, this implies that all of the elements that hash to $s$ are in the buffers of $s$ and $s$'s children, and that all of the elements in $s$'s buffer that do not hash to $s$ are elements that hash to $s$'s parent. 

A subproblem $s$ is said to be \defn{weakly feasible} if it is possible to arrange the elements in the tree so that $s$ is in a weakly common-case configuration, \emph{and} so that any free slot in the tree (if we are in an intermediate state between a deletion and an insertion) is not contained in the buffers of $s$'s descendants. The subproblem $s$ is said to be \defn{strongly feasible} if all of $s$ and its ancestors are weakly feasible.

Our algorithm will maintain what we call the \defn{Common-Case Invariant:}
\begin{itemize}
\item For each subproblem $s$, the failure indicator correctly identifies whether or not $s$ is strongly feasible. 
\item If a subproblem $s$ is strongly feasible, then $s$ is in a strongly common-case configuration.
\end{itemize}

Before continuing, it is worth establishing three lemmas about the Common-Case Invariant:
\begin{lemma}
\label{lem:updatefriendly}
Suppose that the Common-Case Invariant holds. For a given level-$i$ subproblem $s$, we have with probability $1 - n_i^{-\omega(1)}$ that $s$ is strongly feasible, that $s$ is in a strongly common-case configuration, and that:
\begin{enumerate}[label=\normalfont(\roman*)]
\item\label{item:color} all of the elements in $s$'s buffer have colors $i$ and $i + 1$;
\item\label{item:fraction} if $s$ is neither the root nor a leaf, then the fraction of elements in $s$'s buffer that have color $i$ (as opposed to color $i + 1$) is in the range $[0.3, 0.7]$.
\end{enumerate}
Moreover, all of this is true even if we pick $s$ based in part on the hash of some specific element $x$ (this part will be straightforward, since knowing $h(x)$ can only change how many items hash to a given subtree by $\pm 1$).
\end{lemma}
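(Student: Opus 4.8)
The plan is to use the hypothesis that the Common-Case Invariant holds in only one place---to turn ``$s$ is strongly feasible'' into ``$s$ is in a strongly common-case configuration''---and to establish everything else over the randomness of the hash functions $C(\cdot),h(\cdot)$ alone (the Common-Case Invariant is an invariant maintained by the algorithm, so the probabilistic estimates below are unconditional). Thus it suffices to (a) show that $s$ is strongly feasible with probability $1-n_i^{-\omega(1)}$, and (b) deduce \ref{item:color} and \ref{item:fraction} from the definition of a strongly common-case configuration together with a concentration estimate for $q_s$.

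For the concentration, fix any subproblem $t$ at a level $j$ with $1<j<\overline{\ell}$. Using the per-subproblem form \eqref{eq:corereq} and the fact that every array slot lies in some buffer (so $t_t=n_j$), we get $\E[q_t]=n_j-[0.4,0.6]\cdot b_j$, and $q_t$ is a sum of at most $n$ independent indicators with mean $\Theta(n_j)$; since $b_j=n_j^{0.51}$ is $\Theta(n_j^{0.01})$ standard deviations, a Chernoff bound gives $\Pr[\,|q_t-\E[q_t]|>0.1\cdot b_j\,]\le\exp(-\Omega(n_j^{0.02}))=n_j^{-\omega(1)}$. Let $R(s)$ consist of $s$, all ancestors of $s$, and every child of $s$ or of an ancestor of $s$, and let the \emph{good event} $G$ be that $q_t\in n_j-[0.3,0.7]\cdot b_j$ for every non-root, non-leaf $t\in R(s)$ (for the root we use the deterministic fact $q_{\mathrm{root}}=n-O(1)$, and for a leaf $q_t=0$). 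At each level $j$, $R(s)$ has at most $f_{j+1}\le n_j$ members, each failing with probability $n_j^{-\omega(1)}$; because the sizes $n_j$ grow doubly-exponentially in $j$, summing these failure probabilities over $R(s)$ yields a rapidly convergent series dominated by its lowest-level term, so $\Pr[\overline{G}]=n_i^{-\omega(1)}$. (This uses that $n_i$ exceeds a fixed constant; for $O(1)$-size $s$ the statement is a base case.)

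The crux is showing that $G$ forces $s$ to be strongly feasible, which requires carefully unwinding the nested definitions. A level-$j$ subproblem $t$ is weakly feasible precisely when there is a legal arrangement in which each child $c$ of $t$ packs all $q_c$ elements of $c$'s subtree into that subtree's $n_{j-1}$ slots---leaving $n_{j-1}-q_c$ slots in $c$'s buffer---those leftover slots (a total of $(n_j-b_j)-\sum_c q_c$) are filled exactly by the $t$-hashing elements not in $t$'s own buffer, the rest of the $t$-hashing elements go into $t$'s buffer, and any free slot stays out of $t$'s descendants. These requirements are equivalent to $q_c\le n_{j-1}$ for each child $c$ and $n_j-b_j\le q_t\le n_j$ (the free-slot condition being automatic, since the deletion protocol keeps the unique free slot in the root's buffer). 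Under $G$, for $t=s$ and every ancestor $t$ of $s$, each child $c$ of $t$ has $q_c\le n_{j-1}-0.3\cdot b_{j-1}<n_{j-1}$ and $t$ has $n_j-0.7\cdot b_j\le q_t\le n_j-0.3\cdot b_j$, so $t$ is weakly feasible; hence $s$ and all of its ancestors are weakly feasible, i.e.\ $s$ is strongly feasible. (For a leaf the conditions are vacuous, and for the root the only constraint is $q_c\le n_{\overline{\ell}-1}$ for its children, which $G$ supplies.) The part to be most careful about here is pinning down $R(s)$ exactly---i.e.\ checking that weak feasibility of $s$ and its ancestors never forces any subproblem outside $R(s)$ to behave well.

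Finally, given that $s$ is strongly feasible, the Common-Case Invariant puts $s$ in a strongly common-case configuration, so $s$ and all of its ancestors are in weakly common-case configurations. The definition then implies that $s$'s buffer holds only elements hashing to $s$ (color $i$) or to $s$'s parent (color $i+1$)---everything hashing into a child-subtree of $s$ is excluded by $s$'s own first bullet, and everything hashing to an ancestor of $s$'s parent or to a sibling-subtree is excluded by the second bullet applied to $s$'s ancestors---which is \ref{item:color}. For \ref{item:fraction}: $s$'s buffer is full (the free slot, if any, is in the root's buffer), and of the $n_i-b_i$ slots strictly below $s$, all are occupied and---beyond the $\sum_c q_c$ elements of the child subtrees---hold only $s$-hashing elements; counting gives that the number of color-$i$ elements in $s$'s buffer equals $q_s-(n_i-b_i)$, which under $G$ lies in $[0.3,0.7]\cdot b_i$. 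The closing sentence is routine: conditioning on $h(x)=s$ adds a deterministic $+1$ to $q_t$ for $t\in\{s\}\cup\{\text{ancestors of }s\}$ and leaves the remaining summands independent, so neither the Chernoff bounds nor the counting identities change by more than the stated $\pm1$.
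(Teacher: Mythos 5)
Your proposal follows essentially the same route as the paper: a Chernoff bound on $q_t$ for each subproblem $t$ in the family consisting of $s$, its ancestors, and all of their children; a union bound over that family (dominated by the lowest level, giving failure probability $n_i^{-\omega(1)}$); and then the deduction that two-sided control of these $q_t$'s forces strong feasibility, after which the Common-Case Invariant converts feasibility into a strongly common-case configuration and items (i) and (ii) follow by counting. You are in fact more explicit than the paper about the step it asserts without proof (that the concentration event implies strong feasibility), and your derivations of (i), (ii), and the ``moreover'' clause are correct.

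There is, however, a flaw in the step you yourself identify as the crux. Your claimed characterization of weak feasibility of a level-$j$ subproblem $t$ --- ``$q_c \le n_{j-1}$ for each child $c$ and $n_j - b_j \le q_t \le n_j$'' --- is not an equivalence, and it fails in the direction you need (sufficiency). Weak feasibility also requires $q_c \ge n_{j-1} - b_{j-1}$ for every child $c$: the $n_{j-1} - b_{j-1}$ slots of $c$'s subtree lying strictly below $c$'s buffer can, by the two bullets of the weakly common-case definition, be occupied only by elements hashing into $c$'s subtree, so if $q_c < n_{j-1} - b_{j-1}$ some of those slots cannot be filled. Your stated conditions do not imply this lower bound (one child can have $q_c = 0$ while color-$j$ elements hashing to $t$ make up the difference in $q_t$), and your verification under $G$ checks only the upper bound $q_c \le n_{j-1} - 0.3\, b_{j-1}$. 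The gap is immediately repairable because your good event $G$ is two-sided and already supplies $q_c \ge n_{j-1} - 0.7\, b_{j-1}$ for every child in $R(s)$ --- you just need to invoke it. (A smaller quibble in the same sentence: the condition $q_t \le n_j$ is not actually necessary for weak feasibility of $t$ alone, since surplus $t$-hashing elements may legally sit outside $t$'s subtree; including it does no harm here since $G$ guarantees it anyway.)
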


We remark that, in Lemma \ref{lem:updatefriendly}, we are using $n_i^{-\omega(1)}$ to denote $n_i^{-f(n_i)}$ for some $f \in \omega(1)$, meaning that the $\omega$-notation is in terms of $n_i$ rather than, say, $n$. 

\begin{proof}
For a given level $i$ and level-$i$ subproblem $s$, define $t_s$ to be the sum of the sizes of the buffers of $s$ and all of $s$'s descendants; and define $q_s$ to be the total number of elements that hash to $s$ and $s$'s descendants. By construction, $\E[q_s] \in t_s - b_i \cdot [0.4, 0.6] \pm 1$ (the $\pm 1$ comes from the role of $x$ in the last sentence of the lemma statement). If $1 < i < \overline{\ell}$, then by a Chernoff bound, we know that with probability $1 - n_i^{-\omega(1)}$, we have $|q_s - \E[q_s]| \le \tilde{O}(\sqrt{n_i}) < 0.1 b_i$. Thus, if $i < \overline{\ell}$, then we have with probability $1 - n_i^{-\omega(1)}$ that 
\begin{equation}
q_x = \begin{cases}
    0 & \text{ if } i = 1 \\
    t_s - b_i \cdot [0.3, 0.7] & \text{ otherwise.}
\end{cases}
    \label{eq:qs}
\end{equation}

We can apply a union bound to conclude that, for any $1 \le i \le \overline{\ell}$ and any level-$i$ subproblem $s$, we have with probability $1 - f_i \cdot n_{i - 1}^{-\omega(1)} = 1 - n_i^{-\omega(1)}$ that \eqref{eq:qs} is true for all of $s$'s children (note that level-$1$ subproblems have $f_1 = 0$ children). Applying another union bound, we can conclude with probability
$$1 - \sum_{j \ge i} n_j^{-\omega(1)} = 1 - n_i^{-\omega(1)}$$
that \eqref{eq:qs} is true not just for all of $s$'s children but also for all of $s$'s parents' children, all of $s$'s grandparents' children, etc. We will assume this for the rest of the proof.

The fact that \eqref{eq:qs} holds for all of $s$'s children and $s$'s ancestors' children implies that all of $s$ and its ancestors are strongly feasible. By the Common Case Invariant, it follows that all of $s$ and its ancestors are in strongly common-case configurations.

The fact that both $s$ and its parent (if the parent exists) are in (even weakly) common-case configurations implies condition \ref{item:color}. Finally, if $s$ is neither the root nor a leaf, then condition \ref{item:fraction} follows from the second point along with \eqref{eq:qs}.
\end{proof}

\begin{lemma}
Let $T$ be the entire recursive tree and let $T'$ be a subtree. Suppose that every subproblem $s \in T \setminus T'$ that is strongly feasible is in a strongly common-case configuration, but that this is not necessarily the case for every $s \in T'$. Then, in $|T'| \log \log n$ time, one can rearrange the elements in $T'$ (in-place) so that the Common-Case Invariant holds; and so that, if there is a free slot in $T'$, it appears in the buffer of the root subproblem.
\label{lem:commoncasefix}
\end{lemma}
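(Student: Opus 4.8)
The plan is to repair $T'$ with one top‑down pass after a short preprocessing step. Throughout, write $r'$ for the root of $T'$ and $\ell'$ for its level (so $|T'| = n_{\ell'}$ and the depth of $T'$ is $\ell' - 1 \le \overline{\ell} = O(\log\log n)$), and write $t_s, q_s$ for the total buffer size inside $s$'s subtree and the number of keys hashing into $s$'s subtree. The first observation is that strong feasibility of a subproblem is a function only of the hashes of the current key set (through the populations $q_s$ and the presence of a free slot), not of the current arrangement. Hence one $O(|T'|)$‑time scan of $T'$ — which reads each key's color and target and thereby computes every $q_s$ with $s \in T'$, and which reads the failure indicator of $r'$'s parent (which, lying outside $T'$, is correct, and tells us whether all of $r'$'s ancestors are weakly feasible; if $r'$ is the root of $T$ there is no parent and we read $r'$'s feasibility directly off the $q_s$'s) — determines the strong feasibility of every $s \in T'$, which we then write into the failure indicators in $O(|T'|)$ further time. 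This already gives the first clause of the Common‑Case Invariant, since the indicators outside $T'$ were correct to begin with. The second observation is the key structural point: if $s \in T'$ is strongly feasible then every ancestor of $s$ is weakly feasible, so $r'$'s parent and everything above it are strongly feasible, hence already in strongly common‑case configurations by hypothesis; thus restoring $s$ only requires fixing the part of the root‑to‑$s$ path that lies in $T'$.

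If $r'$'s parent is not strongly feasible, then no $s \in T'$ is strongly feasible (a descendant of a subproblem that is not strongly feasible is never strongly feasible), so the second clause of the Invariant is vacuous inside $T'$, and the only remaining task is to move the free slot, if any, to $r'$'s buffer by $O(\log\log n)$ swaps up the path from the slot to $r'$. If instead $r'$'s parent is strongly feasible, its strongly common‑case configuration implies that every key currently in a buffer of $T'$ either hashes into $T'$, or hashes to $r'$'s parent and sits in $r'$'s buffer — this is the precondition for the pass below. Moreover its weak feasibility forces $q_{r'} \ge t_{r'} - b_{r'}$ (and trivially so when $r'$ is the root of $T$), so $r'$'s buffer contains a key that does not hash to $r'$'s parent; this is exactly what lets us bubble the free slot into $r'$'s buffer without evicting a key hashing to $r'$'s parent, so the configuration outside $T'$ is undisturbed.

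The main work is the top‑down pass. We visit the subproblems of $T'$ from $r'$ downward, maintaining the invariant that, on entering $s$, the contiguous array segment $R_s$ of $s$'s subtree contains exactly the keys hashing into $s$'s subtree together with (inside $s$'s buffer) the keys hashing to $s$'s parent, with the free slot, if any, parked at the front of $r'$'s buffer. At $s$: if $s$ is not strongly feasible, stop (its indicator is already correct and no descendant of $s$ is strongly feasible). If $s$ is strongly feasible, partition the keys of $R_s$: each key hashing into a child $c_j$'s subtree is destined for $c_j$'s subtree; the keys hashing to $s$ are ``flexible''; use $t_{c_j} - q_{c_j}$ flexible keys to pad the front of each $R_{c_j}$ (i.e.\ $c_j$'s buffer), and put the remaining flexible keys, together with the keys hashing to $s$'s parent, into $s$'s buffer. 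This is realized by an in‑place multi‑way partition of $R_s$ in $O(|R_s|)$ time, after which we recurse into each child. The identities $t_s = b_s + \sum_j t_{c_j}$ and $q_s = r_s + \sum_j q_{c_j}$ (with $r_s$ the number of keys hashing to $s$) make the counts balance exactly, so $s$'s buffer receives exactly $b_s$ keys (or $b_s - 1$, at $r'$, if there is a free slot) whatever the actual values are; the only remaining conditions are a handful of inequalities — $0 \le t_{c_j} - q_{c_j} \le b_{c_j}$ for each child, and $r_s \ge \sum_j(t_{c_j}-q_{c_j})$ — all of which strong feasibility of $s$ supplies (the first pair from weak feasibility of $s$, the last from weak feasibility of $s$'s parent), with the $[0.4,0.6]$ slack of \eqref{eq:corereq} providing the necessary room. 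After the pass, every strongly feasible $s \in T'$ and all its ancestors are in weakly common‑case configurations — those inside $T'$ because we arranged them, the rest by hypothesis — so each such $s$ is in a strongly common‑case configuration, which is the second clause of the Invariant; and since the pass never moves a key out of the segment it is working on, no subproblem outside $T'$ is disturbed. The running time is $O(\log\log n)$ for the free‑slot step plus $\sum_{s \in T'} O(|R_s|)$; grouping by level, the segments of the level‑$\ell$ subproblems inside $T'$ have total size at most $|T'|$ and there are $O(\log\log n)$ levels, for a total of $O(|T'|\log\log n)$.

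I expect the genuinely fiddly part to be not conceptual but definitional: pinning down the exact combinatorial characterization of weak and strong feasibility and verifying that, node by node, it coincides with solvability of the partition step — the non‑negativity and buffer‑capacity of all the paddings, the consistency of a single global free slot with the counts, and the degenerate cases in which a buffer would consist entirely of keys hashing to the parent (ruled out, as above, by the feasibility‑forced lower bound on $q_s$). Everything else — the $O(|T'|)$ feasibility scan, the linear‑time in‑place partition, and the confinement of the damage to $T'$ — is routine.
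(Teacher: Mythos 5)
Your proposal is correct and follows essentially the same route as the paper's proof: determine feasibility first, handle the case where the root of $T'$ is not strongly feasible by setting indicators and bubbling the free slot up, and otherwise repair $T'$ top-down level by level, using the fact that the parent's strongly common-case configuration confines all relevant keys to $T'$. The only difference is expository — you spell out the counting identities and inequalities ($0 \le t_{c_j} - q_{c_j} \le b_{c_j}$, etc.) that the paper compresses into ``since $s$ is weakly feasible, it follows that we can rearrange the elements in $T'$ alone.''
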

\begin{proof}
Let $s$ be the root subproblem of $T'$. We can check in time $O(|T'| + \log \log n)$ if $s$ is strongly feasible. If $s$ is not strongly feasible, then neither will any of its descendants be, and we can complete the lemma by (1) moving any free slot in $T'$ to be in $s$'s buffer and (2) setting all of the failure indicator bits of $s$ and its descendants to be true. If $s$ is strongly feasible, then $s$'s parent (if it has one) is in a strongly common-case configuration, which implies that all of the elements that hash to $s$ or $s$'s descendants are already in $T'$. Since $s$ is weakly feasible, it follows that we can rearrange the elements in $T'$ alone so that $s$ is in a weakly common-case configuration and so that any free slot in $T'$ appears in $s$'s buffer---this can be done in place in time $O(|T'|)$. At this point, because all of $s$ and its ancestors are in weakly common-case configurations, we can conclude that $s$ is, in fact, in a strongly common-case configuration (and we can update $s$'s failure indicator appropriately). Having placed $s$ in a strongly common-case configuration (and placed any free slot in $T'$ in $s$'s buffer), we can recurse on $s$'s children's subtrees in order to complete the lemma. It is straightforward to see that each of the $O(\log \log |T'|)$ layers of recursion takes time at most $O(|T'| + \log \log n)$ time, making for a total of $O(|T'| \log \log |T'| + (\log \log |T'|) \cdot \log \log n) = O(|T'| \log \log n)$ time.
\end{proof}

\begin{lemma}
Call a subproblem $s$ \defn{well supplied} if every (strict) ancestor $q$ of $s$ contains at least one element that hashes to $q$ in its buffer.
If the Common-Case Invariant holds, then the following modifications to the data structure cannot violate it:
\begin{enumerate}[label=\normalfont(\roman*)]
\item\label{item:insert} For a subproblem $s$ that is strongly feasible and contains a free slot in its buffer, insert an element $x$ satisfying $h(x) = s$ into that free slot.
\item\label{item:delete} For a subproblem $s$ that is strongly feasible and well-supplied, delete an element $x$ satisfying $h(x) \in \{s, \Parent(s)\}
$ from $s$'s buffer.
\item\label{item:move} Move an item $x$ between the buffers of the subproblem $h(x)$ that it hashes to and one of $h(x)$'s children.
\end{enumerate}
\label{lem:invariantmoves}
\end{lemma}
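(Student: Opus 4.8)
The plan is to treat all three items by a single template. First I would record two structural facts. \emph{Fact A}: whether a subproblem $r$ is weakly feasible — and hence whether it is strongly feasible — is determined entirely by the multiset of elements hashing into $r$'s subtree (broken up by child), the number of elements hashing to $r$ itself, and the location of the free slot if one is present; it is \emph{insensitive to the current arrangement}. \emph{Fact B}: whether $r$ is in a weakly-common-case configuration depends only on where the elements inside $r$'s subtree are stored, and in particular is unaffected by the presence or location of a free slot, since both clauses of the definition merely assert that certain elements are stored in certain regions. Writing $\mathrm{anc}(r)$ for the set of strict ancestors of $r$, and letting $t$ denote the subproblem named in the operation ($t=s$ in items~\ref{item:insert} and~\ref{item:delete}, $t=h(x)$ in item~\ref{item:move}), the consequence is that each operation can change the feasibility status, or the configuration status, of a subproblem $r$ only if $r\in\{t\}\cup\mathrm{anc}(t)$ (for item~\ref{item:move}, also the one child of $h(x)$ involved, but only its configuration). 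In all three parts these affected subproblems are, by hypothesis, strongly feasible, hence — by the Common-Case Invariant, which we assume holds before the operation — already in strongly-common-case configurations; so it suffices to check that (i) none of them drops out of a strongly-common-case configuration, and (ii) no subproblem's feasibility status flips, which by the above can only fail for members of $\{t\}\cup\mathrm{anc}(t)$, all already feasible.

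For item~\ref{item:move}, the multiset of stored elements does not change, and the free slot (if any) merely moves between $h(x)$'s buffer and a child $c$'s buffer, both contained in $h(x)$'s subtree; moreover $x$ itself, which hashes to $h(x)$, sits in one of those two buffers afterward, so the free slot can always be rerouted back up to $h(x)$'s buffer. By Fact~A no failure indicator changes. For the configuration part I would check clause by clause that relocating $x$ between $h(x)$'s buffer and $c$'s buffer creates no violation for $h(x)$, for $c$, or for an ancestor $u$ of $h(x)$: an element hashing to $h(x)$ that lies in the buffer of one of $h(x)$'s children is exactly what the second clause permits; $x$ hashes into no child's subtree, so the first clause is untouched; and for ancestors $u$, $x$ never leaves $u$'s subtree and hashes to a strict descendant of $u$, so neither clause for $u$ is affected. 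Item~\ref{item:insert} is similar: inserting $x$ with $h(x)=s$ into the free slot in $s$'s buffer removes the free slot (irrelevant to configurations by Fact~B, and feasibility-easing by Fact~A) and raises the count of elements hashing to each $r\in\{s\}\cup\mathrm{anc}(s)$ by one; placing an element that hashes to $s$ into $s$'s own buffer preserves both weakly-common-case clauses for $s$ and for every ancestor of $s$ (same clause-check), so these subproblems stay in strongly-common-case configurations, and once the free slot is gone, being in a strongly-common-case configuration already certifies strong feasibility. So the invariant survives.

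Item~\ref{item:delete} is the crux. Deleting $x$ with $h(x)\in\{s,\Parent(s)\}$ from $s$'s buffer creates a free slot in $s$'s buffer and decreases the relevant element counts only for subproblems in $\{h(x)\}\cup\mathrm{anc}(h(x))\subseteq\{s\}\cup\mathrm{anc}(s)$. Removing an element from $s$'s buffer cannot break a weakly-common-case clause for $s$ or any ancestor of $s$ — it can only delete a would-be violator of the second clause, and it cannot displace a correctly-placed element out of its region — so in the \emph{current} arrangement $s$ and all its ancestors remain in weakly-common-case configurations. Since the new free slot lies in $s$'s \emph{own} buffer, which is not one of $s$'s descendants' buffers, this same current arrangement already witnesses that $s$ is weakly, hence strongly, feasible. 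The one delicate point is a strict ancestor $u$ of $s$: the free slot now sits inside $u$'s subtree, so the current arrangement does not witness $u$'s weak feasibility, and we must exhibit another arrangement that does. This is exactly where the well-supplied hypothesis enters: it supplies, for every strict ancestor $q$ of $s$, an element $y_q$ with $h(y_q)=q$ in $q$'s buffer, and sliding each $y_q$ down one level along the root-path of $s$ — into the free slot that has by then bubbled up to $q$'s child's buffer — pushes the free slot all the way up to $u$'s buffer while keeping $u$ (and every subproblem along the path) in a weakly-common-case configuration at each step. The resulting arrangement certifies that $u$ remains weakly feasible; hence every $r\in\{s\}\cup\mathrm{anc}(s)$ stays strongly feasible and, in the current arrangement, in a strongly-common-case configuration, so the Common-Case Invariant is preserved.

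I expect the main obstacle to be the argument in the last paragraph: carefully separating the \emph{existential} notion of feasibility (some legal arrangement exists) from the \emph{actual} arrangement the data structure currently holds, and recognizing that the well-supplied condition is precisely the combinatorial handle needed to bubble the freshly created free slot from $s$'s buffer up to the root without disturbing any weakly-common-case configuration. Once that is in place, everything else reduces to mechanically checking the two clauses in the definition of a weakly-common-case configuration against each elementary move.
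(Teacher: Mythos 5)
Your proposal is correct and follows essentially the same route as the paper's proof: reduce everything to the observations that none of the three modifications can break a weakly-common-case configuration, that feasibility is an existential (arrangement-independent) property, and that for deletions the well-supplied hypothesis supplies the hypothetical chain of type-(iii) moves that bubbles the new free slot up out of every ancestor's descendants' buffers, certifying that no feasibility status flips. The only (inconsequential) imprecision is in your Fact~A: since the witnessing arrangement is existentially quantified over all placements, weak feasibility depends only on the \emph{existence} of a free slot, not its location---which is exactly what your deletion argument correctly exploits.
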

\begin{proof}
Critically, the types of modifications described in all three items have the properties that: they do not change for any subproblem in the tree whether the subproblem is in a weakly common-case configuration.

Thus, for subproblems that were already strongly feasible (and thus already in a strongly common-case configuration), those subproblems continue to be in strongly common-case configurations. We will complete the proof by showing that the modifications described in each bullet point do not change which subproblems in the tree are weakly feasible. This means that, if the Common-Case Invariant held before the modification, then it continues to hold after.

The insertion in \ref{item:insert} risks changing the weak feasibility of one of $s$'s ancestors. However, we know that each of $s$'s ancestors $q$ are already in strongly common-case configurations prior to the insertion, and we know that the insertion does not change this, so $s$'s ancestors are still in strongly (and thus weakly) common-case configurations. Since, after the insertion, there is no free slot in table, this implies that $s$'s ancestors are all still weakly feasible---their feasibility statuses have not changed after all.

The deletion in \ref{item:delete} also risks changing the weak feasibility of one of $s$'s ancestors. Once again, we know that $s$'s ancestors are already in strongly common-case configurations prior to the deletion, and we know that the deletion does not change this, so $s$'s ancestors are still in strongly (and thus weakly) common-case configurations. Furthermore, since $s$ is well-supplied, each of $s$'s ancestors $q$ contains at least one element in its buffer that hashes to $q$. So, via moves of the third type, we could move the free slot that is currently in $s$ (due to our deletion) to be in the root's buffer without changing which subproblems are in weakly common-case configurations. Thus it is possible for $s$ and its ancestors all to be in weakly common-case configurations while having the free slot appear only in the root's buffer. This, in turn, implies that $s$ and its ancestors are all still weakly feasible, so, once again, the feasibility statuses have not changed for any subproblems.

Finally, the modification in \ref{item:move} does not change the set of elements present overall. Thus it also cannot change for any subproblem whether that subproblem is weakly feasible.
\end{proof}

\paragraph{Implementing queries. }
We can now describe how to implement queries. Suppose we wish to query an element $x$, and let $s = h(x)$ be the level-$C(x)$ subproblem that $x$ hashes to.

To query $x$, we begin by querying $s$'s buffer and the buffers of $s$'s children. Since these buffers are each implemented as rainbow cells, this takes expected time
\[O\left(b_{C(x)}^{3/4} + f_{C(x)} \cdot b_{C(x) - 1}^{3/4}\right).\]
If any of these queries finds $x$, then we are done. Otherwise, we continue with the following logic.

Having examined $s$'s failure indicator, we know at this point whether or not $s$ is strongly feasible. If $s$ is strongly feasible, then by the Common-Case Invariant, $s$ must be in a strongly common-case configuration. This means that the only way $x$ could be in the hash table would be for it to appear in one of $s$'s or $s$'s children's buffers---since this is not the case, we can conclude that $x$ is not present. On the other hand, if $s$ is not strongly feasible, then we complete the query with the following \defn{failure-mode probe sequence}.

In the failure mode, the query scans the entirety of $s$ (including all of the buffers of all of $s$'s descendants). It then checks whether $s$'s parent $s^{(1)}$ is strongly feasible. If so, then the query is complete; otherwise, the query scans the entirety of $s^{(1)}$ (including all of the buffers of all of $s^{(1)}$'s descendants). It then checks whether $s^{(1)}$'s parent $s^{(2)}$ is strongly feasible. If so, then the query is complete; otherwise, the query scans the entirety of $s^{(2)}$, etc. The query continues like this until either it finds $x$ or it encounters an ancestor of $s$ that \emph{is} strongly feasible, at which point the query can conclude by the Common-Case Invariant that $x$ is not present.

Before continuing with our description of rainbow hashing, it is worth taking a moment to verify the correctness and running time of queries.

\begin{lemma}
Supposing the Common-Case Invariant, rainbow-hashing queries will be correct.
\end{lemma}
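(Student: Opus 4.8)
The plan is to establish the two directions of query correctness and to observe that only the ``completeness'' direction has any content. A query answers ``present'' only after it has actually located $x$ in some slot --- either through a rainbow-cell query to a buffer (correct by Proposition~\ref{prop:cell}) or through an exhaustive scan of a subtree --- so if $x$ is not in the table the query can never answer ``present''. Since every query terminates (the first phase probes a fixed finite collection of buffers, and the failure-mode phase walks up the path from $s = h(x)$ to the root, a path of length $\overline{\ell} = \Theta(\log\log n)$, doing a bounded scan at each step), it must then answer ``not present''. So the whole task reduces to showing: \emph{if $x$ is present, the query finds it.}

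For this I would fix $s = h(x)$ and split on whether $s$ is strongly feasible. If $s$ is strongly feasible, the Common-Case Invariant forces $s$ into a strongly common-case configuration, so every element hashing to $s$ --- $x$ included --- lies in the buffer of $s$ or of one of $s$'s children; the first phase performs (correct) rainbow-cell queries on exactly these buffers, so it finds $x$. If $s$ is \emph{not} strongly feasible, then, again by the Common-Case Invariant, $s$'s failure indicator reports this, the query enters failure mode, and it scans $s = s^{(0)}, s^{(1)}, s^{(2)}, \ldots$ in full, halting the first time it reaches a strict ancestor $s^{(j)}$ that is strongly feasible. (If no such ancestor exists, the walk reaches the root and scans it in full; a full scan of the root is a full scan of the table, so $x$ is found --- hence assume $s^{(j)}$ exists. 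This also subsumes the case $s = $ root.) Thus before halting the query has scanned all of $s^{(0)}, \ldots, s^{(j-1)}$ in their entirety, while the node whose feasibility made it stop is $s^{(j)}$, the parent of $s^{(j-1)}$. By the Common-Case Invariant $s^{(j)}$ is in a strongly --- hence weakly --- common-case configuration; applying the first clause of the definition of a weakly common-case configuration to the child $c \defeq s^{(j-1)}$ of $s^{(j)}$, every element hashing to $c$ or a descendant of $c$ is stored in a buffer within the subtree rooted at $c$. Since $x$ hashes to $s$, and $s$ is either $c$ itself (when $j = 1$) or a descendant of $c$, the element $x$ lives in a buffer of the subtree of $s^{(j-1)}$ --- which the query scanned in full. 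Hence $x$ is found.

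The argument is essentially an unwinding of the definitions of weakly and strongly common-case configurations against the two clauses of the Common-Case Invariant, so I do not expect a genuine obstacle. The only places where care is needed are bookkeeping: keeping straight that the query \emph{stops at} $s^{(j)}$ but \emph{last fully scans} $s^{(j-1)}$, so that the element's location is covered; invoking ``$s^{(j)}$ is weakly common-case'' for the \emph{specific} child on the $s$-to-$s^{(j)}$ path rather than an arbitrary one; and checking that the degenerate cases ($j = 1$, $s$ equal to the root, and no strongly feasible strict ancestor at all) still go through --- they do, the last one because a full scan of the root is a full scan of the table.
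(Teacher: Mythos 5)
Your proposal is correct and follows essentially the same route as the paper's proof: the feasible case uses the strongly common-case configuration of $s=h(x)$ directly, and the infeasible case identifies the lowest strongly feasible ancestor (your $s^{(j)}$, the paper's $y$) and applies its weakly common-case configuration to the child on the path to $s$ (your $s^{(j-1)}$, the paper's $z$), whose subtree the query has fully scanned. The extra remarks on soundness and termination are fine but not needed beyond what the paper leaves implicit.
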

\begin{proof}
If $s = h(x)$ is strongly feasible, and thus is in a strongly common-case configuration, then every element that hashes to $s$ is guaranteed to be in the buffer of either $s$ or one of $s$'s children. Thus, by querying these buffers, the query will succeed.

Suppose $s \in h(x)$ is not strongly feasible. Let $y$ be the lowest ancestor of $s$ that is strongly feasible. (If $y$ does not exist, then the query scans the entire hash table and is thus necessarily correct.) Let $z$ be the child of $y$ whose subtree contains $s$. Then the query scans the entirety of $z$'s subtree, so it suffices to show that all $x$ satisfying $h(x) = s$ are contained in the subtree. The fact that $y$ is strongly feasible implies by the Common-Case Invariant that $y$ is in a strongly common-case configuration, which implies that every element $x$ that hashes to $z$ or $z$'s descendants is contained in the buffers of $z$ and its descendants. Therefore, if $x$ is in the hash table, it is contained in $z$ and its descendants, so the query will correctly ascertain whether $x$ is present.
\end{proof}

\begin{lemma}
Supposing the Common-Case Invariant, the expected time per query, overall, will be $O(1)$.
\end{lemma}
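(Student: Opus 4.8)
The plan is to fix the queried element $x$, condition on its color $i \defeq C(x)$ and on the subproblem $s \defeq h(x)$ it hashes to, and split the running time into two pieces that I bound separately (by linearity, so no correlation needs to be controlled): the \emph{common-case work} --- querying the rainbow cells of $s$ and of $s$'s $f_i$ children, plus the $O(1)$ to read $s$'s failure indicator --- and the \emph{failure-mode work}, which is incurred only when $s$ turns out not to be strongly feasible. The common-case work is immediate: by Proposition~\ref{prop:cell}, querying $s$'s buffer costs $O(b_i^{3/4})$ in expectation and querying each child's buffer costs $O(b_{i-1}^{3/4})$ in expectation, so conditioned on $C(x)=i$ this piece contributes $O\!\left(b_i^{3/4} + f_i b_{i-1}^{3/4}\right)$.

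For the failure-mode work, the key point is that the failure-mode probe sequence scans a subproblem only after reading a failure indicator that, by the Common-Case Invariant, correctly reports it as not strongly feasible; hence it scans the level-$j$ ancestor $a_j$ of $s$ only when $a_j$ is not strongly feasible, for each $j \ge i$ (and of course it enters failure mode at all only when $s=a_i$ is not strongly feasible). Scanning a level-$j$ subtree takes $O(n_j)$ time, and the at most $\overline{\ell} = O(\log\log n)$ indicator reads contribute a further $O(\log\log n)$, so the failure-mode work is at most
\[
\sum_{j \ge i} O(n_j)\cdot \ind[a_j \text{ not strongly feasible}] \;+\; O(\log\log n)\cdot \ind[s \text{ not strongly feasible}].
\]
Since each $a_j$ is determined by $h(x)$, Lemma~\ref{lem:updatefriendly} applies (its statement explicitly allows the subproblem to be chosen from $h(x)$, the $\pm1$ caveat being harmless) and gives $\Pr[a_j \text{ not strongly feasible}] \le n_j^{-\omega(1)}$. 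Thus the conditional expected failure-mode work is at most $\sum_{j\ge i} O(n_j)\,n_j^{-\omega(1)} + O(\log\log n)\cdot n_i^{-\omega(1)} = n_i^{-\omega(1)}$.

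Combining the two pieces gives $\E[\text{query time}\mid C(x)=i] = O\!\left(b_i^{3/4} + f_i b_{i-1}^{3/4}\right) + n_i^{-\omega(1)}$. Averaging over $C(x)$ with $\Pr[C(x)=i]=p_i=\Theta(b_{i-1}/n_{i-1})$ (Lemma~\ref{lem:pi}), the first term becomes exactly the sum $\sum_{i>1}\Theta(b_{i-1}/n_{i-1})\cdot O(b_i^{3/4}+f_i b_{i-1}^{3/4})$ already evaluated in the high-level overview above --- substituting $b_i=n_i^{0.51}$, $f_i=n_i^{0.01}$, $n_{i-1}=\Theta(n_i^{0.99})$ collapses it to $\sum_{i>1}O(n_i^{-0.1})=O(1)$ (with the $O(1)$-many levels having $n_i=O(1)$ trivially contributing $O(1)$) --- and the second term is at most $\sum_{i>1}n_i^{-\omega(1)}=o(1)$; this yields the claimed $O(1)$ bound. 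The only step that needs any care is the failure-mode estimate: one must charge an expensive level-$j$ scan against the \emph{level-$j$} failure probability $n_j^{-\omega(1)}$, since it is precisely the fact that $n_j\cdot n_j^{-\omega(1)}$ still decays that lets these rare-but-costly events be absorbed; everything else is a straightforward reprise of the common-case running-time calculation already performed above.
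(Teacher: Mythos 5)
Your proposal is correct and follows essentially the same route as the paper's proof: the same decomposition into buffer-query work (bounded via \cref{lem:pi} and the exponent calculation) plus failure-mode work (bounded via \cref{lem:updatefriendly} as $\sum_j n_j^{-\omega(1)} \cdot n_j = O(1)$). Your version is slightly more explicit about per-level indicators and the conditioning on $h(x)$, but there is no substantive difference.
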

\begin{proof}
If we condition on $C(x)$, then since $s$ and $s$'s children are implemented as rainbow cells, the expected time to query $s = h(x)$'s and $s$'s children's buffers is
$$O\left(n_{C(x)}^{3/4} + f_{C(x)} \cdot n_{C(x) - 1}^{3/4}\right).$$
Recalling that, for $i > 1$, Lemma \ref{lem:pi} tells us that $\Pr[C(x) = i] = p_i = \Theta(b_{i - 1} / n_{i - 1})$, the expected time to query $s$'s and $s$'s children's buffers is
\begin{align*}
&    O\left(1 + \sum_{i > 1} \frac{b_{i - 1}}{n_{i - 1}} \cdot \left (b_{i}^{3/4} + f_{i} \cdot b_{i - 1}^{3/4}\right)\right) \\ 
& =  O\left(1 + \sum_{i > 1} \frac{n_i^{0.99 \cdot 0.51}}{n_i^{0.99}} \cdot \left ((n_i^{0.51})^{3/4} + n_i^{0.01} \cdot ((n_{i}^{0.99})^{0.51})^{3/4}\right)\right) \\
& =  O\left(1 + 1\sum_{i > 1} \frac{n_i^{0.99 \cdot 0.51}}{n_i} \cdot \left (n_i^{0.01} \cdot (n_i^{0.51})^{3/4}\right)\right) \\
& =  O\left(1 + \sum_{i > 1} n_i^{0.99 \cdot 0.51 - 1 + 0.01 + 0.51 \cdot 3/4}\right) \\
& =  O\left(1 + \sum_{i > 1} n_i^{-0.1}\right) \\
& = O(1).
\end{align*}

Additionally, to handle the case where $s$ may not be strongly feasible, the query will spend additional time $O(n_j)$, where $j$ is the largest $j$ such that the level-$j$ subproblem containing $s$ is not strongly feasible. Define $X_j$ to be the indicator random variable that the level-$j$ subproblem containing $s$ is not strongly feasible. Then our time contribution from cases where $s$ is not strongly feasible is at most
$$O\left(\sum_{j} X_j \cdot n_j\right),$$
which by Lemma \ref{lem:updatefriendly} has expectation
$$O\left(\sum_{j \ge i} n_j^{-\omega(1)} \cdot n_j\right) = O(1).$$

Thus the overall expected query time is $O(1)$.
\end{proof}

\paragraph{A helper method for updates: the ``sampling trick''. }Before we describe how to implement updates, let us first describe a sub-task that will prove useful. Given a level-$i$ subproblem $S$, and a color $\ell \in \{i, i + 1\}$, the $\Sample(s, c)$ protocol either returns an element in $s$'s buffer with color $\ell$ or declares that no such element exists. (If $i = \overline{\ell}$, then the only valid value for $\ell$ is $i$, and if $i = 1$ then the only valid value is $2$.)

The protocol is implemented by simply performing (up to) $b_i$ random samples from the buffer (returning if it ever finds an element with color $\ell$), and then, if none of those samples succeed, scanning the buffer in $O(b_i)$ additional time.

The following basic lemma will allow us to reason about the behavior of \Sample.
\begin{lemma}
Let $x$ be an element, let $s$ be the level-$i$ ancestor of $h(x)$ for some $i$, and let $\ell \in \{\max(i, 2), \, \min(i + 1, \overline{\ell})\}$. If the Common-Case Invariant holds, then $\Sample(s, \ell)$ takes $O(1)$ expected time.
\label{lem:sample}
\end{lemma}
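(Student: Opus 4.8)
The plan is to condition on the high-probability structural event supplied by Lemma~\ref{lem:updatefriendly} applied to the subproblem $s$. This is legitimate even though $s$ is chosen using $h(x)$, since the last sentence of that lemma explicitly permits selecting the subproblem based on the hash of a fixed element. Let $\mathcal{G}$ denote the event that $s$ is in a strongly common-case configuration, that every element in $s$'s buffer has color $i$ or $i+1$, and that --- if $s$ is neither the root nor a leaf --- the fraction of color-$i$ elements in $s$'s buffer lies in $[0.3,0.7]$; by Lemma~\ref{lem:updatefriendly}, $\Pr[\mathcal{G}] \ge 1 - n_i^{-\omega(1)}$.

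The first step is to check that, on $\mathcal{G}$, at least a $0.3$ fraction of the elements of $s$'s buffer have color exactly $\ell$. If $s$ is the root then $i = \overline{\ell}$ and the only admissible value is $\ell = \overline{\ell}$; since colors lie in $\{2,\dots,\overline{\ell}\}$, color $i+1$ does not occur, so on $\mathcal{G}$ the entire buffer has color $\ell$. If $s$ is a leaf then $i=1$ and $\ell=2$, and color $i$ does not occur, so again the entire buffer has color $\ell$. If $s$ is internal, then on $\mathcal{G}$ the color-$i$ fraction is in $[0.3,0.7]$, so the color-$i$ fraction is at least $0.3$ and, since the only other color present is $i+1$, the color-$(i+1)$ fraction is at least $1-0.7 = 0.3$; either way the color-$\ell$ fraction is at least $0.3$.

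Next I would bound the expected time of $\Sample(s,\ell)$. Each of the (at most) $b_i$ random probes costs $O(1)$, and conditioned on $\mathcal{G}$ it hits a color-$\ell$ element independently with probability at least $0.3$, so the number of probes before the first success is dominated by a geometric variable with mean $O(1)$. The fallback $O(b_i)$-time scan is reached only if all $b_i$ probes miss, which on $\mathcal{G}$ happens with probability at most $0.7^{b_i}$, contributing at most $0.7^{b_i}\cdot O(b_i) = O(1)$ (the map $t \mapsto t\cdot 0.7^t$ is bounded on $[1,\infty)$). Hence $\E[\text{time}\mid \mathcal{G}] = O(1)$. On the complementary event, $\Sample(s,\ell)$ costs at most $O(b_i)$ in the worst case; if $b_i = O(1)$ this is already $O(1)$, and otherwise $n_i \to \infty$ and $\Pr[\neg\mathcal{G}] \le n_i^{-\omega(1)}$, so this term contributes $O(b_i)\cdot n_i^{-\omega(1)} = o(1)$. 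Summing gives expected time $O(1)$.

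I do not anticipate a genuine obstacle: the argument is just a union bound (already packaged in Lemma~\ref{lem:updatefriendly}) plus a geometric-tail estimate. The only points needing mild care are the ends of the color range, where Lemma~\ref{lem:updatefriendly}(ii) does not apply --- but there the buffer is, on $\mathcal{G}$, monochromatic in the unique admissible color, so the sampling bound holds with an even better constant --- and the degenerate case of bounded-level subproblems, where $b_i = O(1)$ makes $\Sample$ run in $O(1)$ worst-case time regardless.
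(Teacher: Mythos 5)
Your proof is correct and follows essentially the same route as the paper's: condition on the high-probability event from Lemma~\ref{lem:updatefriendly} that a constant fraction of $s$'s buffer has color $\ell$, bound the sampling time by a geometric variable, and absorb the failure case via $n_i^{-\omega(1)} \cdot O(b_i) = O(1)$. The only difference is that you spell out the root/leaf boundary cases and the geometric-tail estimate explicitly, which the paper leaves implicit.
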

\begin{proof}
By Lemma \ref{lem:updatefriendly}, we have with probability $1 - n_i^{-\omega(1)}$ that at least a constant-fraction of the elements in $s$'s buffer have color $\ell$. If this is the case, then the expected number of random samples needed to find such an element is $O(1)$. If this is not the case, which happens with probability $n_i^{-\omega(1)}$, then the \Sample procedure may take time as much as $O(b_i)$. Thus the overall expected time is
\[O(1) + O(n_i^{-\omega(1)} b_i) = O(1).
\qedhere\]
\end{proof}

\paragraph{Implementing insertions. } Suppose we wish to insert an element $x$. If the root $r$ is not strongly feasible, then we perform the insertion by rebuilding the entire table from scratch. Otherwise, we invoke a recursive function $\Insert(r, x)$ that we will now define. In general, the function $\Insert(s, x)$ takes two inputs:
\begin{itemize} 
\item a strongly feasible subproblem $s$ that currently contains the hash table's only free slot;
\item and an element $x$ that hashes to either $s$ or one of $s$'s descendants. 
\end{itemize}
It then implements the insertion of $x$ while preserving the Common-Case Invariant. The protocol for $\Insert(s, x)$ is:
\begin{enumerate}
    \item If $x$ hashes to $s$, then insert $x$ into $s$'s free slot, and return. By Lemma \ref{lem:invariantmoves}, this preserves the Common-Case Invariant.
    
    Otherwise, let $c$ be the child of $s$ on the path from $s$ to $h(x)$. Compute $y = \Sample(c, j)$, where $j$ is the level of $s$.
    \item If either $c$ is not strongly feasible or $y = \Null$, then rebuild $s$ from scratch to perform the insertion and preserve the Common-Case Invariant. This is possible by Lemma \ref{lem:commoncasefix}.
    \item Otherwise, move $y$ into the free slot in $s$, creating a free slot in $c$. (By Lemma \ref{lem:invariantmoves}, this preserves the Common-Case Invariant.) Now $c$ is a strongly feasible subproblem that contains the hash table's only free slot, so we can complete the insertion by calling $\Insert(c, x)$.
 \end{enumerate}

 \paragraph{Implementing deletions. } Now suppose we wish to delete an element $x$ (and create a free slot in the root subproblem). We will assume that we already know where the element is in the hash table, since this can be determined with an $O(1)$-expected-time query. If the root $r$ is not strongly feasible, then we perform the deletion by rebuilding the entire table from scratch. Otherwise, we invoke a recursive function $\Delete(r, x)$ that we will now define. In general, the function $\Delete(s, x)$ takes two inputs:
\begin{itemize} 
\item a strongly feasible subproblem $s$ that is well-supplied, as defined in Lemma \ref{lem:invariantmoves};
\item and an element $x$ that hashes to either $s$ or one of $s$'s descendants. 
\end{itemize}
It then implements the deletion of $x$ and creates a free slot in $s$'s buffer, all while preserving the Common-Case Invariant. The protocol for $\Delete(s, x)$ is:
\begin{enumerate}
    \item If $x$ is in $s$, then delete $x$ and return. By Lemma \ref{lem:invariantmoves}, this preserves the Common-Case Invariant.
    
    Otherwise, let $c$ be the child of $s$ on the path from $s$ to the subproblem whose buffer contains $x$. Compute $y = \Sample(s, j)$, where $j$ is the level of $s$.
    \item If either $c$ is not strongly feasible or $y = \Null$, then rebuild $s$ from scratch to delete $x$, place a free slot in $s$'s buffer, and preserve the Common-Case Invariant. This is possible by Lemma \ref{lem:commoncasefix}.
    \item Otherwise, since $s$ is well-supplied and $y$ exists, we can conclude that $c$ is well-supplied. Since, furthermore, $c$ is strongly feasible, we can legally invoke $\Delete(c, x)$. Doing so creates a free slot in some position $p$ of $c$'s buffer and (by induction) preserves the Common-Case Invariant. Finally, we move $y$ from $s$'s buffer to position $p$ of $c$'s buffer. This move creates a free slot in $s$'s buffer, as desired, and preserves the Common-Case Invariant by Lemma \ref{lem:invariantmoves}.
 \end{enumerate}

\paragraph{Analyzing insertions and deletions.}
Because the \Insert and \Delete protocols are so similar, we combine their analyses into a single lemma:
  \begin{lemma}
     The insertion and deletion protocols each take $O(\log \log n)$ expected time and preserve the Common-Case Invariant.
 \end{lemma}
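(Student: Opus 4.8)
The plan is to establish both halves of the statement---correctness (preservation of the Common-Case Invariant) and the $O(\log\log n)$ expected time bound---with essentially the same induction on the level $j$ of the subproblem passed to the recursive routine, treating \Insert and \Delete in parallel since they share the same three-case shape. For correctness the induction hypothesis is: started from a state obeying the Common-Case Invariant and the routine's stated preconditions on $s$ (strongly feasible and holding the unique free slot, for \Insert; strongly feasible and well-supplied, for \Delete), the routine performs the intended insertion/deletion, restores the Common-Case Invariant, and---for \Delete---leaves any free slot in $s$'s buffer. The running-time bound I would then argue separately by summing expected costs along the recursion.

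For the correctness step, the base case is when $x$ hashes to $s$ (\Insert) or already sits in $s$'s buffer (\Delete): this is exactly one move of type~\ref{item:insert} or~\ref{item:delete} of Lemma~\ref{lem:invariantmoves}, and its hypotheses follow from the preconditions together with the fact that strong feasibility forces a strongly-common-case configuration (which in turn forces $h(x)\in\{s,\Parent(s)\}$). If the child $c$ on the path toward $x$ is not strongly feasible, or the \Sample call returns \Null, the routine rebuilds $s$'s subtree; the Common-Case Invariant provides the hypothesis of Lemma~\ref{lem:commoncasefix} (every strongly-feasible subproblem outside the subtree is already in a strongly-common-case configuration), so the rebuild restores the invariant while performing the update and parking any free slot at the subtree's root. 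Otherwise the routine moves the sampled element $y$---a type-\ref{item:move} move, invariant-preserving by Lemma~\ref{lem:invariantmoves}, using that $y$'s colour pins $h(y)=s$---hands the free slot down to $c$, recurses, and, for \Delete, finishes with one more type-\ref{item:move} move; here one checks the recursive preconditions ($c$ strongly feasible---just tested; $c$ holds the unique free slot, for \Insert; $c$ well-supplied, for \Delete, since $s$ was well-supplied and $y$ witnesses an element of $s$'s buffer hashing to $s$) and invokes the induction hypothesis. Throughout one uses that $c$ is never a leaf or the root---colours lie in $\{2,\dots,\overline{\ell}\}$, so $h(x)$ is never a leaf and the recursion halts at $h(x)$---which is what lets Lemma~\ref{lem:updatefriendly} (in particular part~\ref{item:fraction}) be applied to $c$.

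For the running time, the recursion descends from level $\overline{\ell}=\Theta(\log\log n)$ down to level $C(x)\ge 2$, so it makes $O(\log\log n)$ recursive calls, and in the call at level $j$ the \Sample invocation costs $O(1)$ expected time by Lemma~\ref{lem:sample} (the Common-Case Invariant holds at that moment) and the $O(1)$ element moves cost $O(1)$; the only other possibility is a rebuild of the level-$j$ subproblem, of cost $O(n_j\log\log n)$ by Lemma~\ref{lem:commoncasefix}, and after such a rebuild the routine returns, so at most one rebuild occurs per operation. A rebuild at level $j$ happens only if the level-$(j-1)$ child $c$ fails to be strongly feasible or its \Sample returns \Null; by Lemma~\ref{lem:updatefriendly}---applied to $c$, using its robustness to conditioning on $h(x)$---this has probability $n_{j-1}^{-\omega(1)}$ whenever $n_{j-1}=\omega(1)$. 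Hence the expected cost is at most
\[
\sum_{j} O(1)\;+\;\sum_{j\,:\,n_j=\omega(1)} n_{j-1}^{-\omega(1)}\cdot O(n_j\log\log n)\;+\;\sum_{j\,:\,n_j=O(1)} O(n_j\log\log n)\;=\;O(\log\log n),
\]
since $n_j=\Theta(n_{j-1}^{1/0.99})$ makes $n_{j-1}^{-\omega(1)}\cdot n_j$ super-polynomially small in $n_{j-1}$ (and the $n_j$ grow doubly-exponentially in $j$), while only $O(1)$ levels have $n_j=O(1)$. The same estimate absorbs the edge case where the root is not strongly feasible at the outset (a full rebuild of cost $O(n\log\log n)$, occurring with probability $n^{-\omega(1)}$), and for deletions there is an additional $O(1)$ expected-time query to locate $x$.

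The bulk of the correctness argument is routine precondition-chasing once Lemmas~\ref{lem:commoncasefix},~\ref{lem:invariantmoves},~\ref{lem:updatefriendly} and~\ref{lem:sample} are in hand; the genuinely delicate part is the rebuild accounting, which needs care in three places. First, the $O(1)$ ``tiny'' levels $j$ with $n_j=O(1)$ have to be charged crudely, because there Lemma~\ref{lem:updatefriendly}'s failure bound $n_j^{-\omega(1)}$ is vacuous---one instead uses that a rebuild there costs only $O(\log\log n)$ and that there are $O(1)$ of them. Second, Lemma~\ref{lem:commoncasefix} charges $|T'|\log\log n$ rather than $|T'|\log\log|T'|$, so the stray $\log\log n$ factor must be soaked up by the rapid decay of the $n_{j-1}^{-\omega(1)}$ probabilities. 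Third, the recursion path and the events ``$c$ is in a good configuration'' depend on the hashes of the moved elements, which is precisely why Lemma~\ref{lem:updatefriendly} is stated so as to tolerate conditioning on one element's hash---invoked once per level, for the child $c$.
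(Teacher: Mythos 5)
Your proposal is correct and follows essentially the same route as the paper's proof: invariant preservation via Lemmas \ref{lem:invariantmoves} and \ref{lem:commoncasefix}, and a time bound that charges $O(1)$ expected cost per level for \Sample plus a rebuild term bounded by the product of the failure probability and the rebuild cost. The only substantive difference is that where you rule out $\Sample = \Null$ directly via Lemma \ref{lem:updatefriendly}\ref{item:fraction} applied to the relevant buffer, the paper instead proves a small intermediate claim that $\Sample = \Null$ forces $s$ to be non-strongly-feasible \emph{after} the operation and then applies Lemma \ref{lem:updatefriendly} to that event; both yield the same $n_{j-1}^{-\omega(1)}$ bound.
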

 \begin{proof}
 The preservation of the Common-Case Invariant has already been established via Lemmas \ref{lem:commoncasefix} and \ref{lem:invariantmoves}.  So it suffices to prove the time bounds.

 The slow case for insertions/deletions is if the operation is forced to rebuild an entire subtree. This can happen if either the root $r$ is not strongly feasible, or if a call to either $\Insert(\cdot, \cdot)$ or $\Delete(\cdot, \cdot)$ triggers a rebuild of a subproblem in Step 2 of either protocol.

  Let $T_1$ denote the time spent on rebuilding entire subtrees and $T_2$ denote the other time spent on the insertion/deletion. The second quantity $T_2$ is dominated by making $O(\log \log n)$ calls to the $\Sample$ function, each of which we know by Lemma \ref{lem:sample} takes $O(1)$ expected time. So $\E[T_2] \le O(\log \log n)$. To complete the proof, we must also show that $\E[T_1] \le O(\log \log n)$.
 
 To bound $\E[T_1]$, we must first prove the following claim.
 \begin{claim}
  In the $\Insert(s, x)$ protocol, if Step 2 performs a rebuild because $\Sample(c, j) = \Null$, then $s$ is not strongly feasible after the insertion.

  Similarly, in the $\Delete(s, x)$ protocol, if Step 2 performs a rebuild because $\Sample(s, j) = \Null$, then $s$ is not strongly feasible after the deletion.
 \end{claim}
 \begin{proof}
  We begin by proving the claim for insertions. If $\Sample(c, j) = \Null$, then (prior to the insertion) there are no color-$j$ elements in $c$'s buffer. Since (prior to the insertion) $s$ is in a strongly common-case configuration, we know that all of the non-color-$j$ elements in the buffers of $c$ and its descendants hash to $c$ and its descendants. It follows that, after the insertion, the total number of items that hash to $c$ and its descendants will be $n_{j - 1} + 1$. It is therefore not possible for $s$ to be in a strongly (or even weakly) common-case configuration after the insertion, so $s$ is no longer strongly (or weakly) feasible.

  Next, we prove the claim for deletions. If $\Sample(s, j) = \Null$, then (prior to the deletion) there are no color-$j$ elements in $s$'s buffer (and therefore no elements that hash to $c$ or its descendants). Since (prior to the deletion) $s$ is in a strongly common-case configuration, all of the elements that hash to $s$ and $s$'s descendants are contained in the buffers of $s$ and $s$'s descendants. Since there are no such elements in $s$'s buffer, it follows that the total number of elements that hash to $s$ and $s$'s descendants is at most $n_i - b_i$. After the deletion, the number of such items will be at most $n_j - b_i - 1$, which means that the buffers of $s$'s descendants \emph{cannot} be occupied only by these items. It follows that (after the deletion) $s$ is no longer strongly (or weakly) feasible.
  \end{proof}

  From the preceding claim, we can conclude that, if either $\Insert(s, x)$ or $\Delete(s, x)$ performs a rebuild in Step 2, then at least one of $s$ or its child $c$ must be non-strongly-feasible either before or after the insertion or deletion. We know from Lemma \ref{lem:updatefriendly} that the probability of this happening for $s$ in a given level $j$ is at most $n_{j - 1}^{-\omega(1)} = n_j^{-\omega(1)}$. If a rebuild is performed, then it takes at most $\poly(n_j)$ time, so the expected contribution of each level $j$ to $T_1$ is at most 
  $$n_j^{-\omega(1)} \cdot \poly(n_j) \le O(1 / 2^j).$$
  This allows us to bound
  \[\E[T_1] \le \sum_j O(1/ 2^j) = O(1). \qedhere\]
\end{proof}

Putting the pieces together, we have proven Proposition \ref{prop:rainbow}.

\section{Dynamic Resizing Without Increasing Update Time}\label{sec:resizing}

In this section, we will extend Basic Rainbow Hashing to support dynamic resizing. As in the previous section, we shall continue to focus on hash tables that operate at load factor $1$. In this context, what dynamic resizing means is that, as the total number $ n $ of elements changes over time, the hash table automatically reconfigures itself to use exactly the first $n$ slots in memory. Perhaps surprisingly, we shall see that this seemingly stringent resizing property can be achieved without changing the timing characteristics of the hash table. In particular, we will prove the following proposition:

\begin{proposition}
The basic rainbow hash table can be extended to support dynamic resizing with a continual load factor of $1$, while preserving an expected query time of $O(1)$ and an expected update time of $O(\log \log n)$, where $n$ denotes the current size of the hash table. 
\label{prop:resizing}
\end{proposition}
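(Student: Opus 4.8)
The plan is to decouple the hash table's capacity from its recursive \emph{shape}: we keep the parameters $\{n_i\},\{b_i\},\{f_i\}$ essentially fixed and instead let the counts $m_i$ of level-$i$ subproblems grow and shrink as $n$ changes. This is exactly what the slack in \eqref{eq:pi} was inserted for --- since each subtree need only satisfy $\E[q_s]\in t_s-[0.4,0.6]\,b_i$ rather than an exact equality, the $p_i$'s (equivalently the $m_i$'s) can be re-chosen as $n$ drifts without disturbing the invariant, and the top level is allowed to hold a small, slowly growing number $m_{\overline{\ell}}$ of subproblems rather than exactly one (with a tiny overflow buffer for the remainder). As $n$ increments we update the canonical tuple $(m_{\overline\ell},\ldots,m_1)$ like a mixed-radix counter, so that each operation triggers at most one \emph{structural change} --- the creation or destruction of a single subproblem --- and a structural change at level $i$ occurs only once per $\Theta(n_i)$ operations. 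When the $m_i$'s have drifted by a constant factor we \emph{re-anchor}: rebuild the whole table with scaled-up shape parameters, an $O(n\log\log n)$ event that happens once per $\Theta(n)$ operations and so costs $O(\log\log n)$ amortized.

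The real work is that creating a level-$i$ subproblem needs $\Theta(n_i)$ slots and $\Theta(n_i\log\log n)$ rearrangement (Lemma~\ref{lem:commoncasefix}) all at once, whereas a single insertion supplies only one slot. I would handle this with shadow construction: maintain a modest \emph{staging region} into which newly inserted elements flow, and over the $\Theta(n_i)$ operations preceding the next level-$i$ structural change, incrementally assemble the next level-$i$ subproblem there (gathering buffer elements of the right colors with the sampling trick, and recursively assembling its $f_i$ level-$(i-1)$ children), so that the structural change itself is just an $O(\log\log n)$-time commit step that splices the finished subproblem into the tree; deletions run the mirror image, incrementally dismantling a subproblem that feeds the slot at position $n$. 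To keep this from hurting queries, elements hash into a subproblem only once it is committed --- so a query's oblivious probe sequence (a function of $x$ and $N$ only) always targets a genuine in-invariant subproblem --- and the staging regions are kept small and are themselves structured (recursively, bottoming out at $O(1)$ size) so that the staging content a query must also examine is dominated by the query's main cost. Summing up: each level contributes $O(\log\log n)$ amortized from its once-per-$\Theta(n_i)$-operations rebuild of size $O(n_i\log\log n)$, plus $O(1)$ from its sampling-trick steps, for $O(\log\log n)$ amortized per update, which de-amortization upgrades to the stated expected bound; queries remain $O(1)$ by the same telescoping sum as in Section~\ref{sec:rainbowhashing}.

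The Common-Case Invariant carries over essentially verbatim, the one change being that Lemma~\ref{lem:updatefriendly} must be re-proved with the $m_i$'s promoted from constants to slowly-varying quantities; here the $[0.3,0.7]$ window in \eqref{eq:qs} together with the $[0.4,0.6]$ slack in \eqref{eq:pi} is precisely what lets the Chernoff-plus-union-bound argument absorb both the constant-factor drift of $m_i$ between re-anchorings and the $\pm1$ perturbation from conditioning on one element's hash. I expect the main obstacle to be the bookkeeping of the de-amortized build/teardown: proving that partially-built subproblems never leak into $h(\cdot)$, that the incremental schedule keeps pace even against an adversarial burst of rebuild-triggering inserts (an adaptation of the claim embedded in the insertion/deletion lemma), and that re-anchoring interacts cleanly with in-progress staging. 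Once those are pinned down, the timing and concentration arguments are a routine rerun of Section~\ref{sec:rainbowhashing}.
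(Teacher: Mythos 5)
Your proposal diverges from the paper's construction in a fundamental way, and the divergence creates a genuine gap. You propose to grow and shrink the subproblem counts $m_i$ at \emph{all} levels (mixed-radix counter, staged construction of new level-$i$ subproblems). But elements of color $i\ge 2$ hash to one of the $m_i$ level-$i$ subproblems, so if $m_i$ changes with $n$, the hash distribution $h(\cdot)$ changes with $n$. Your fix---``elements hash into a subproblem only once it is committed''---makes $h(x)$ depend on the value of $n$ at the time $x$ was inserted, which a query (constrained to a probe sequence depending only on $x$ and the \emph{current} $N$) cannot reconstruct; alternatively, re-evaluating $h(x)$ under the current shape forces a physical migration of elements into each newly committed subproblem, which is exactly the unbounded bookkeeping you flag as ``the main obstacle'' and do not resolve. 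The paper avoids this entirely: it fixes $m_i$, $b_i$, $f_i$ for all $i\ge 2$ (so the hash distribution never changes within a resize window) and resizes by changing only $m_1$, the number of leaf slots, to which no element ever hashes. A bit-reversed embedding of the tree into the array (Lemma~\ref{lem:bitreversed}) guarantees that the added or removed leaf slots are spread evenly ($\pm 1$) over all subtrees; each insertion then just adds one leaf slot and migrates the resulting free slot to the root via the existing deletion machinery, and each deletion removes one leaf slot and re-inserts the evicted element. The $[0.4,0.6]$ slack in \eqref{eq:pi} is spent not on letting $m_i$ drift, but on keeping \eqref{eq:preq} valid while the $t_s$'s grow by a factor of up to $1.1$ with the $p_i$'s held fixed. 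Changes of $n$ beyond a $1.1$ factor are handled, as in your ``re-anchoring,'' by full rebuilds at randomized thresholds.

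Two further quantitative problems with your accounting, even granting the staging machinery: a level-$i$ commit costing $O(n_i\log\log n)$ (via Lemma~\ref{lem:commoncasefix}) once per $\Theta(n_i)$ operations contributes $O(\log\log n)$ amortized \emph{per level}, and summing over the $\overline{\ell}=\Theta(\log\log n)$ levels gives $O((\log\log n)^2)$ per update, not the claimed $O(\log\log n)$. And the commits in a mixed-radix counter occur at deterministic times, so the resulting bound is amortized rather than per-operation expected; the paper gets a genuinely expected bound because its only lumpy events (the factor-$1.1$ rebuilds) are triggered at a uniformly random threshold, so each individual update triggers one with probability $\Theta(1/n)$.
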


The main challenge in resizing is to allow $n$ to change by a constant factor, that is to support $n$ changing within a range of the form, say, $[N, \, 1.1 \cdot N]$ for some $N$. This will be our focus for most of the section.

\paragraph{Embedding the recursion tree into an array with bit-reversed ordering.}
In order to describe our resizing approach, it will be helpful to adopt a specific layout for how we embed the buffers in the recursion tree into an array of size $ n $. We will refer to this layout as the \defn{bit-reversed layout} for reasons that will become clear shortly.

Without loss of generality, we can choose $b_i, m_i$ to all be powers of two for all $i > 1$. We can also defer any rounding errors to the leaf subproblems. That is, we can ensure that every level-$i$ subproblem, $i > 1$, has buffer size \emph{exactly} $b_i$ and fanout \emph{exactly} $f_i$, while allowing some leaf subproblems to have sizes that differ by $\pm 1$ from each other. Finally, just to simplify our discussion of the layout, we will think of every \emph{slot} in the bottom layer of the tree as representing its own subproblem (notice that this doesn't change the behavior of the data structure at all), and we will think of the fanout of any level-$2$ subproblem as being simply the sum of the sizes of the level-$1$ subproblems that it contains. (Note that, in doing this, we are implicitly redefining $f_2$ to be what was formerly $f_1 \cdot n_1$, we are redefining $n_1 = b_1 = 1$, and we are allowing the fanouts of level-$2$ subproblems to be within $1$ of $f_2$.)

With these WLOG assumptions in mind, place the buffers in levels $\overline{\ell}, \overline{\ell} - 1, \ldots$ in sub-arrays $A_{\overline{\ell}}, A_{\overline{\ell} - 1}, \ldots$ that appear one after another from left to right. Each $A_i$ with $i > 1$ will have a power-of-two size, but $A_1$ may not.

We label the subproblems in level $i$ using integers $0, 1, \ldots, m_i - 1$, and refer to them as ``the $j$-th subproblem in $A_i$'' for $0 \le j < m_i$.
For $i < \overline{\ell}$, it is tempting to declare the parent of the $j$-th subproblem in $A_i$ to be the $\lfloor j / f_{i + 1} \rfloor$-th subproblem in $A_{i + 1}$. Rather than using the standard layout, however, we will use a \defn{bit-reversed layout}: for $i < \overline{\ell}$, the parent of the $j$-th subproblem in $A_i$ is the $k$-th subproblem in $A_{i + 1}$, where $k$ is given by the \emph{low-order} $\log m_{i + 1}$ bits of $j$. Conversely, the $k$-th subproblem in $A_{i + 1}$ has as children the subproblems with indices of the form $r \cdot m_{i + 1} + k$ in level $i$. But important subtlety here is the relationship between levels $1$ and $2$. If level $1$ has size $m_1$, then the children of subproblem $k$ in level $2$ are the level-1 subproblems (i.e., slots in $A_1$) with indices
\[\{j = r \cdot m_2 + k \mid 0 \le j < m_1\}.\]
Conveniently, the restriction $j < m_1$ automatically handles rounding errors---it guarantees that the total number of subproblems/slots in level $1$ is exactly $m_1$, and dictates the assignment of those subproblems to level-2 parents. We can confirm that the layout handles rounding errors correctly, giving each level-$i$ subproblem the same total number of leaf slots up to $\pm 1$, with the following lemma.

\begin{lemma}
Using a bit-reversed layout, every level-$i$ subproblem has the same total number of leaf slots up to $\pm 1$.
\label{lem:bitreversed}
\end{lemma}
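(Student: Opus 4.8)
The plan is to prove the stronger, explicit statement: under the bit-reversed layout, the set of leaf slots contained in the subtree rooted at the $k$-th level-$i$ subproblem is \emph{exactly}
$R_i(k) := \{\, j : 0 \le j < m_1,\ j \equiv k \pmod{m_i} \,\}$.
The lemma is then immediate, because as $k$ ranges over $\{0,1,\dots,m_i-1\}$ the size $|R_i(k)|$ equals either $\lfloor m_1/m_i\rfloor$ or $\lceil m_1/m_i\rceil$, and these differ by at most $1$.

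I would prove the claim about $R_i(k)$ by induction on $i$. For the base case $i=1$, each level-$1$ subproblem is a single slot, and the $k$-th one has leaf-slot set $\{k\} = R_1(k)$ since $0 \le k < m_1$. For the step from level $1$ to level $2$, the layout definition states directly that the children of the $k$-th level-$2$ subproblem are the slots $j < m_1$ with $j \equiv k \pmod{m_2}$, which is precisely $R_2(k)$; note that here no divisibility hypothesis is needed.

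For the inductive step from level $i$ to level $i+1$ with $i \ge 2$: by the bit-reversed parent rule, the children of the $k$-th level-$(i+1)$ subproblem are the level-$i$ subproblems whose index $k'$ satisfies $0 \le k' < m_i$ and $k' \equiv k \pmod{m_{i+1}}$, so by the induction hypothesis its leaf-slot set is the union of $R_i(k')$ over those $k'$. A slot $j \in [0,m_1)$ lies in this union iff its unique residue $k' = j \bmod m_i$ satisfies $k' \equiv k \pmod{m_{i+1}}$. Since $m_i/m_{i+1} = f_{i+1}$ is an integer (indeed a power of two, by the WLOG normalization that all $m_\cdot$ with index $>1$ are powers of two), we have $j \bmod m_i \equiv j \pmod{m_{i+1}}$, so this condition is equivalent to $j \equiv k \pmod{m_{i+1}}$, i.e. $j \in R_{i+1}(k)$. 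This closes the induction, and plugging in the count of $R_i(k)$ finishes the lemma.

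The argument is essentially bookkeeping, so I do not expect a serious obstacle; the one place that needs care is the level-$1$/level-$2$ interface, where $m_2$ need not divide $m_1$ — but there the leaf slots coincide with the children themselves, so the divisibility used in the generic step is not required, and the truncation $j < m_1$ is exactly what absorbs the rounding error. A secondary point worth stating cleanly is \emph{why} the bit-reversal (rather than the naive contiguous layout) makes this work: reversing the low-order bits is precisely what turns ``being a descendant of subproblem $k$'' into the residue condition $j \equiv k \pmod{m_i}$, which distributes the $m_1 \bmod m_i$ surplus leaf slots one per subproblem instead of concentrating them in a few subproblems.
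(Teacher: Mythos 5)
Your proposal is correct and follows essentially the same route as the paper: both identify the leaf slots of the $k$-th level-$i$ subproblem as exactly the residue class $\{j : 0 \le j < m_1,\ j \equiv k \pmod{m_i}\}$ and then count that class. The paper simply asserts this characterization as an immediate consequence of the layout, whereas you verify it by induction on the level (using $m_{i+1} \mid m_i$), which is a harmless elaboration of the same argument.
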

\begin{proof}
For the $k$-th level-$i$ subproblem in $A_i$, the leaf slots in $k$'s subtree are the slots in $A_1$ whose low-order bits are given by $k$, that is, the indices of the form 
\[\{j = r \cdot m_{i} + k \mid 0 \le j < m_1\}.\]
The number of such indices is exactly 
\[1 + \lfloor (m_1 - k) / m_i \rfloor.\]
Since $k \in \{0, 1, \ldots, m_i - 1\}$, this value varies by at most $\pm 1$ for different subproblems $k$ in $A_j$. 
\end{proof}

What is nice about this layout is not just that it handles rounding errors cleanly (a fact that should be viewed as a minor detail), but rather that, as we will now see, it enables a surprisingly simple resizing approach. 

\paragraph{Resizing by changing $m_1$ only.}
Suppose we wish to allow $n$ to change within the range $[N, \, 1.1 \cdot 
N]$ for some $N$. We will achieve this by simply changing the value of $m_1$ (i.e., the number of slots in $A_1$). Before an insertion, we first increment $n$ (and thus $m_1$). This creates a free slot in some leaf, which we can then migrate to the buffer of the root subproblem using the same protocol that we used to migrate free slots up the recursion tree for deletions. Similarly, after deletion, we decrement $n$ (and thus $m_1$). This removes a slot from some leaf---the element that gets evicted from that slot can be re-inserted using the standard insertion procedure (notice, in particular, that because we have just performed a deletion, there is a free slot in the root buffer of the tree). 

With these modifications in mind, the only point that we must be careful about is that, as $n$ changes within the range $[N, \, (1 + 0.1)N]$, the values of $p_i$ do \emph{not} change. For each subproblem $s$, let $t_s$ denote the size of the subproblem. Note that when we change $n$ and thus $m_1$, this also implicitly changes some values of $t_s$.

The only parts of the probabilistic analysis in Section \ref{sec:rainbowhashing} that use the relationship between the $p_i$'s and the other parameters are the proofs of \cref{lem:pi,lem:updatefriendly}. The fact that Lemma \ref{lem:pi} holds for $n = N$ directly implies that it holds for $n \in [N, \, 1.1 \cdot N]$. Lemma \ref{lem:updatefriendly} requires a bit more care, however, as it needs the $p_i$'s to satisfy \eqref{eq:corereq}, which for a level-$i$ subproblem $s$, with $1 < i < \overline{\ell}$, expands to
\begin{equation}(p_2 + \cdots + p_i)n / m_i \in t_s - [0.4, 0.6] \cdot b_i.
\label{eq:preq}
\end{equation}

To recover the proof of Lemma \ref{lem:updatefriendly}, it suffices to show that, as $n$ changes within the range $[N, \, 1.1 \cdot N]$, even though the values $\{p_i\}$ do not change, \eqref{eq:preq} continues to hold.
\begin{lemma}
Let $1 < i < \overline{\ell}$ and let $s$ be a level-$i$ subproblem. Let $T_s$ be the value of $t_s$ when $n = N$, and suppose that
$$(p_2 + \cdots + p_i)N / m_i = T_s - 0.5 \cdot b_i \pm 1.$$
Then, we claim that for all $n \in [N, \, 1.1 \cdot N]$, if we change $m_1$ so that the total number of slots is $n$ (and calculate the values of $t_s$ based on $n$), then we have
\[(p_2 + \cdots + p_i)n / m_i \in t_s - [0.4, 0.6] \cdot b_i.\]
\end{lemma}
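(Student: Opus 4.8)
The plan is to track how the two sides of the desired inclusion move as $n$ grows from $N$. Write $P \defeq p_2 + \cdots + p_i$ and $\delta \defeq n - N \in [0, 0.1\,N]$, and set $g(n) \defeq P n / m_i - t_s$, so that the conclusion is exactly the statement $g(n) \in [-0.6, -0.4]\cdot b_i$. Since $b_i$, $m_i$ and all higher-level parameters are fixed powers of two, the only quantity that moves when $n$ changes is $m_1$; and because every leaf slot has size $1$, we have $m_1 = n - \sum_{j \ge 2} m_j b_j$, so increasing $n$ by $\delta$ increases $m_1$ by exactly $\delta$.

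First I would compute $t_s - T_s$. The subtree size $t_s$ of $s$ is its leaf-slot count plus the (fixed) total size of the buffers in levels $2, \ldots, i$ inside it, so only the leaf-slot count moves. If $s$ is the $k$-th level-$i$ subproblem, then by the proof of Lemma~\ref{lem:bitreversed} this count equals $1 + \lfloor (m_1 - k)/m_i \rfloor$, so increasing $m_1$ by $\delta$ changes it by $\lfloor (m_1 + \delta - k)/m_i \rfloor - \lfloor (m_1 - k)/m_i\rfloor = \delta/m_i \pm 1$. Combining this with the hypothesis $P N / m_i = T_s - 0.5\, b_i \pm 1$ gives
\[ g(n) = \left(\tfrac{P N}{m_i} - T_s\right) + \tfrac{P \delta}{m_i} - (t_s - T_s) = -0.5\, b_i - (1 - P)\tfrac{\delta}{m_i} \pm 2. \]
Since $P \le 1$ and $\delta \ge 0$, the correction $(1-P)\delta/m_i$ is nonnegative, so at once $g(n) \le -0.5\, b_i + 2 \le -0.4\, b_i$ once $b_i$ is a sufficiently large constant (the regime $i = O(1)$ being routine, handled as in Lemma~\ref{lem:pi}). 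It remains to prove the lower bound, i.e.\ that $(1-P)\delta/m_i \le 0.1\, b_i - 2$.

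For that, the key estimate is $(1-P) N / m_i \le 0.6\, b_i$. I would derive it from the hypothesis together with the fact that every level-$i$ subproblem has the same subtree size up to $\pm 1$ (Lemma~\ref{lem:bitreversed} for the leaf part; the buffer part is literally identical across level-$i$ subproblems). The $m_i$ level-$i$ subtrees partition exactly the slots in levels $\le i$, so $T_s = (N - \sum_{j > i} m_j b_j)/m_i \pm 1$, and hence $(1-P)N/m_i = N/m_i - T_s + 0.5\, b_i \pm 2 = \sum_{j > i} m_j b_j / m_i + 0.5\, b_i \pm 2$. Finally $\sum_{j > i} m_j b_j = o(m_i b_i)$: consecutive cumulative buffer sizes shrink by a polynomial factor in the subproblem size, since $m_{j+1} b_{j+1} / (m_j b_j) = f_{j+1}^{-1}\,(b_{j+1}/b_j) = n_{j+1}^{-0.01}\cdot \Theta(n_{j+1}^{0.0051}) = n_{j+1}^{-\Theta(1)}$, so the tail sum is a geometric series dominated by $m_{i+1} b_{i+1} = o(m_i b_i)$. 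Thus $(1-P)N/m_i \le 0.5\, b_i + o(b_i) \le 0.6\, b_i$, and therefore $(1-P)\delta/m_i \le 0.1\,(1-P)N/m_i \le 0.06\, b_i \le 0.1\, b_i - 2$ for $b_i$ large enough, which finishes the proof.

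The main obstacle is the bookkeeping in the first two paragraphs: pinning down $t_s - T_s = \delta/m_i \pm 1$ cleanly (this is precisely what the bit-reversed layout was engineered to guarantee) and being careful about where the $\pm O(1)$ rounding slack lands. Conceptually the heart is the estimate $(1-P)N/m_i \approx 0.5\, b_i$: it says that the color mass $1-P$ assigned to levels above $i$ equals, per level-$i$ subproblem, almost exactly the $0.5\, b_i$ deficit we deliberately built into \eqref{eq:corereq}, which is exactly why a $10\%$ change in $n$ perturbs $g(n)$ by only $\approx 0.05\, b_i \ll 0.1\, b_i$. I would also verify the edge case $i = O(1)$ (where $b_i$ may be a small constant) separately, the way the paper treats small $i$ in Lemma~\ref{lem:pi}.
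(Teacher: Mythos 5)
Your proof is correct, and it rests on the same two facts as the paper's: the hypothesis at $n=N$, and the bit-reversed-layout guarantee (Lemma~\ref{lem:bitreversed}) that each level-$i$ subtree's size tracks $n/m_i$ up to $\pm O(1)$. The execution differs, though: the paper argues multiplicatively in one line --- since $T_s = N/m_i \pm O(1)$ and $t_s = n/m_i \pm O(1)$, we have $\frac{n}{N}T_s = t_s \pm O(1)$, so multiplying the hypothesis by $\frac{n}{N}\in[1,1.1]$ immediately yields $(p_2+\cdots+p_i)n/m_i = t_s - \frac{n}{N}\cdot 0.5\,b_i \pm O(1) \in t_s - [0.5,0.55]\,b_i \pm O(1)$, and one is done. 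You instead expand additively in $\delta = n-N$ and are left having to bound the coefficient $(1-P)N/m_i$, for which you invoke the tail estimate $\sum_{j>i} m_j b_j = o(m_i b_i)$. That detour is sound but avoidable: $(1-P)N/m_i = N/m_i - T_s + 0.5\,b_i \pm 1 = 0.5\,b_i + o(b_i) \pm O(1)$ follows directly from the hypothesis and Lemma~\ref{lem:bitreversed}, which is exactly the cancellation the paper's rescaling exploits implicitly. Your version does make explicit one point the paper glosses over --- $t_s$ equals $(n - \sum_{j>i}m_j b_j)/m_i \pm 1$ rather than literally $n/m_i \pm 1$, the discrepancy being the $o(b_i)$ you compute --- and both proofs share the same (unaddressed in the paper) need for $b_i$ to be large enough to absorb the additive $O(1)$ slack, which you at least flag for the $i=O(1)$ regime.
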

\begin{proof}
Observe that
\begin{align*}
    & (p_2 + \cdots + p_i)n / m_i \\
    & = \frac{n}{N} (p_2 + \cdots + p_i) N / m_i \\
    & = \frac{n}{N} \cdot T_s - \frac{n}{N} \cdot 0.5 \cdot b_i \pm O(1) \\
    & \in \frac{n}{N} \cdot T_s - [0.5, 0.55] \cdot b_i \pm O(1).
\end{align*}
To complete the proof, we will show that
\[\frac{n}{N} \cdot T_s = t_s \pm O(1).\]
Since, by Lemma \ref{lem:bitreversed}, $t_s = n / m_i \pm 1$ and $T_s = N / m_i \pm 1$, it suffices to show that
$$\frac{n}{N} \cdot \frac{N}{m_i} = \frac{n}{m_i} \pm O(1),$$
which holds trivially.
\end{proof}

Having recovered Lemma \ref{lem:updatefriendly}, the rest of the analysis from Section \ref{sec:rainbowhashing} holds without modification. This gives us the following proposition:
\begin{proposition}
Given a parameter $N$, the basic rainbow hash table can be extended to support dynamic resizing with a continual load factor of $1$ and with $n$ varying in the range $[N, \, 1.1 \cdot N]$. Furthermore, this preserves an expected query time of $O(1)$ and an expected update time of $O(\log \log n)$, where $n$ denotes the current size of the hash table.
\label{prop:resizing0}
\end{proposition}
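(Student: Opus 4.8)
The plan is to show that almost everything in Section~\ref{sec:rainbowhashing} carries over unchanged, so that the only genuinely new work is (a) spelling out the resize step and charging its cost, and (b) confirming that the probabilistic guarantees survive while $n$ drifts over $[N,\,1.1\cdot N]$ with the probabilities $\{p_i\}$ held frozen at their $n=N$ values.

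First I would fix, once and for all, the hash functions, the colors, and all of $\{b_i\},\{f_i\},\{m_i\}$ for $i>1$ as chosen for $n=N$, and let only $m_1$ (the number of leaf slots in $A_1$) vary. To insert $x$, increment $n$ and hence $m_1$; by the bit-reversed layout and the index formula in the proof of Lemma~\ref{lem:bitreversed}, this appends a single new leaf slot whose parent is one already-existing level-$2$ subproblem, and it changes no other leaf's parent---so along exactly one root-to-leaf path each subproblem gains one extra slot, all of which is free at the new leaf, and everything else is untouched. Migrate that free slot up to the root's buffer using precisely the free-slot migration procedure that deletions use in Section~\ref{sec:rainbowhashing}, then run the ordinary insertion on $x$. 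To delete $x$, run the ordinary deletion (creating a free slot in the root), then decrement $n$ and $m_1$, which removes the leaf slot of index $m_1-1$; if that slot held an element $z$, re-insert $z$ by the ordinary insertion procedure, which is legal since the root buffer currently contains a free slot. Thus each operation performs $O(1)$ additional free-slot migrations and $\Insert$/$\Delete$ calls, each costing $O(\log\log n)$ expected time by the analysis already in Section~\ref{sec:rainbowhashing} and each preserving the Common-Case Invariant by Lemmas~\ref{lem:invariantmoves} and~\ref{lem:commoncasefix}.

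Next I would re-establish Lemmas~\ref{lem:pi} and~\ref{lem:updatefriendly} for all $n\in[N,\,1.1\cdot N]$. Lemma~\ref{lem:pi} only asserts $\Theta$-bounds and the $\{p_i\}$ and $n$ have moved by at most a constant factor, so it is immediate. For Lemma~\ref{lem:updatefriendly}, the sole use of the interaction between the $\{p_i\}$ and the structural parameters is the estimate $\E[q_s]\in t_s-[0.4,0.6]\cdot b_i$, i.e.\ \eqref{eq:corereq}/\eqref{eq:preq}; so it suffices to verify \eqref{eq:preq} throughout the range, which is exactly the preceding lemma: writing $T_s$ for $t_s$ at $n=N$, one has $(p_2+\cdots+p_i)n/m_i=\frac{n}{N}T_s-\frac{n}{N}\cdot 0.5\,b_i\pm O(1)$, and since $n/N\in[1,1.1]$ the shift lies in $[0.5,0.55]\,b_i\pm O(1)\subseteq[0.4,0.6]\,b_i$ once $b_i$ exceeds an absolute constant (the $O(1)$ bottommost levels are handled by brute force, as in the base cases of Lemmas~\ref{lem:pi} and~\ref{lem:updatefriendly}); and $t_s=n/m_i\pm1$, $T_s=N/m_i\pm1$ by Lemma~\ref{lem:bitreversed} give $\frac{n}{N}T_s=t_s\pm O(1)$, so the perturbation of $t_s$ induced by changing $m_1$ is itself absorbed. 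The Chernoff-plus-union-bound argument of Lemma~\ref{lem:updatefriendly} then goes through verbatim.

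Finally, every later argument of Section~\ref{sec:rainbowhashing}---query correctness and $O(1)$ expected query time, $\Insert$/$\Delete$ correctness and $O(\log\log n)$ expected update time (including the claim bounding the rebuild time $T_1$)---uses the parameters only through Lemmas~\ref{lem:pi} and~\ref{lem:updatefriendly}, so all of it applies here, and adding the $O(\log\log n)$ resize overhead from the second paragraph yields the proposition (cf.\ Proposition~\ref{prop:rainbow}). The one place that demands care---and the main obstacle---is the bookkeeping of what changing $m_1$ actually does: one must use the bit-reversed layout to confirm that it touches only a single leaf and a single ancestor path (so no element is forced between subproblems except through the explicit migration, and no non-leaf subproblem's occupancy changes otherwise), and one must check that the resulting $\pm O(1)$ wobble in every $t_s$ is swallowed by the $[0.4,0.6]$ window deliberately built into \eqref{eq:pi}; both are exactly what Lemma~\ref{lem:bitreversed} and that slack were designed for.
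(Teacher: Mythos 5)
Your proposal is correct and follows essentially the same route as the paper: freeze the $\{p_i\}$ and all parameters for levels $i>1$ at their $n=N$ values, vary only $m_1$, handle the added/removed leaf slot via the existing free-slot-migration and insertion procedures, and re-verify \eqref{eq:preq} by observing that the shift $\frac{n}{N}\cdot 0.5\,b_i$ stays within the $[0.4,0.6]\,b_i$ window while Lemma~\ref{lem:bitreversed} absorbs the $\pm O(1)$ wobble in $t_s$. The computation you give is the same as the paper's unnamed lemma preceding Proposition~\ref{prop:resizing0}, so no further comparison is needed.
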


\paragraph{Allowing $n$ to change by more than a factor of $1.1$.}
Finally, we can use standard rebuilding techniques to allow for $n$ to change by more than a factor of $1.1$. Let $r \in (0.99, 1)$ be uniformly random. Let $N_i := \lfloor 1.09^i \cdot r \rfloor$. Whenever $n$ crosses from $\le N_i$ to $> N_i$ for some $i$, we will rebuild the entire hash table to use $N = N_i$. Such a rebuild can be performed in place and in $O(n \log \log n)$ expected time using Lemma \ref{lem:commoncasefix} (with $T'$ equal to the entire tree). Since each value $n$ has probability $\Theta(1/n)$ of being within $\pm 1$ of some $N_i$, the probability of a given update triggering a rebuild is $\Theta(1/n)$. The expected time spent per update on these rebuilds is therefore
$$\Theta(1/n) \cdot O(n \log \log n) = O(\log \log n).$$
Thus, we can extend Proposition \ref{prop:resizing0} to allow $n$ to change arbitrarily over time, while only adding $O(\log \log n)$ additional expected time per update, as desired. This completes the proof of Proposition \ref{prop:resizing}.

\section{Supporting Load Factor \texorpdfstring{$1 - \epsilon$}{1 - ε}}\label{sec:loadfactor}

In this section, we give a black-box transformation that takes the resizable rainbow hash table construction from Proposition \ref{prop:resizing} (which operates at load factor $1$) and uses it to construct a dynamically-resized hash table that operates at load factor $1 - \epsilon$, supports $O(1)$ expected-time queries, and supports $O(\log \log \epsilon^{-1})$ expected-time updates. 

Our main result will be the following theorem:
\begin{theorem}
There exists a classical open-addressed hash table that is dynamically resized to maintain a load factor of $\ge 1 - \epsilon$ while supporting queries in $O(1)$ expected time and updates in $O(\log \log \epsilon^{-1})$ time.
\label{thm:dynamic}
\end{theorem}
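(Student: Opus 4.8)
The plan is to reduce the load-factor-$(1-\epsilon)$ problem to the load-factor-$1$ resizable rainbow hash table of Proposition~\ref{prop:resizing}, treating $\epsilon^{-1}$ as the new ``size parameter'' in place of $n$. Concretely, I would partition the $N$ array slots into $\Theta(\epsilon N)$ \emph{blocks}, each of $\Theta(\epsilon^{-1})$ slots. Each element $x$ hashes to a block via a fresh hash function. Within a block, I run the resizable rainbow hash table: the block currently holds some number $n_B$ of elements in its first $n_B$ slots (load factor $1$ \emph{inside} the occupied prefix), and the dynamic-resizing machinery of Proposition~\ref{prop:resizing} lets $n_B$ grow and shrink with $O(1)$ queries and $O(\log\log n_B)$ updates. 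Since each block has $O(\epsilon^{-1})$ slots, $n_B = O(\epsilon^{-1})$ always, so updates inside a block cost $O(\log\log\epsilon^{-1})$ and queries cost $O(1)$, matching the target. The outer structure is just ``which block'' — an $O(1)$-time computation — so the probe sequence for $x$ is: compute $x$'s block, then follow the rainbow-hash probe sequence within that block's slots. This is still a classical open-addressed hash table: the probe sequence depends only on $x$, $N$ (hence the block count and block size), and random bits.

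The next step is the load-balancing argument: with $n \le (1-\epsilon)N$ elements thrown into $\Theta(\epsilon N)$ blocks of capacity $\Theta(\epsilon^{-1})$, I need every block to stay within its capacity. If blocks have capacity $c\epsilon^{-1}$ and there are $\epsilon N / c'$ blocks, the expected load of a block is $c' \cdot n/(\epsilon N) \cdot \epsilon^{-1} \cdot \text{(stuff)}$; choosing the constants so the expected load is, say, $\epsilon^{-1}$ while capacity is $3\epsilon^{-1}$, a Chernoff bound gives overflow probability $\epsilon^{-\omega(1)}$ per block (when $\epsilon^{-1} = \omega(1)$; the case $\epsilon = \Omega(1)$ is trivial since then everything is $O(1)$). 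A union bound over $\Theta(\epsilon N)$ blocks still leaves a global failure probability of $\epsilon^{-\omega(1)} \cdot \text{poly}(N)$ — which is \emph{not} negligible when $N$ is much larger than $\epsilon^{-1}$. This is the main obstacle, and I expect to handle it exactly as the rainbow cell and rainbow hash table handle analogous failures: when a block overflows, it declares a \textbf{block failure} (recorded implicitly, e.g.\ in the order of two elements, so queries stay oblivious), and queries/updates to a failed block fall back to an $O(\text{capacity}) = O(\epsilon^{-1})$ scan — or, if one prefers a cleaner fix, a failed block spills its overflow into a small secondary structure (another rainbow table) shared across blocks. Either way, since a block failure is a $\text{per-block}$ event of probability $\epsilon^{-\omega(1)}$, its expected contribution to the cost of any one operation is $\epsilon^{-\omega(1)} \cdot O(\epsilon^{-1}) = o(1)$, so the expected time bounds are preserved even though with small global probability some block is overflowing.

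With load balancing in hand, the remaining steps are bookkeeping. For dynamic resizing of the \emph{whole} table: as $n$ changes, maintain $N = \lceil n/(1-\epsilon)\rceil$ (up to the usual geometric-rebuild slack) so the block count and block capacity scale together; this is the same standard rebuild-on-geometric-thresholds argument already used at the end of Section~\ref{sec:resizing}, contributing $O(\log\log\epsilon^{-1})$ amortized expected time per update (a rebuild of $n$ elements costs $O(n\log\log\epsilon^{-1})$ and is triggered with probability $\Theta(1/n)$). Deletions assume the element's position is known, which is fine because a query costs $O(1)$ expected time. One has to check the edge cases $\epsilon$ close to $1$ (then $O(\log\log\epsilon^{-1}) = O(1)$ and a block is $O(1)$ slots — trivial) and $\epsilon$ subconstant but $\epsilon^{-1}$ still small relative to $N$ (the block-failure fallback handles this, as above). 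Assembling: $O(1)$ to route to a block, $O(1)$ expected for an in-block query, $O(\log\log\epsilon^{-1})$ expected for an in-block update, plus $o(1)$ from block-failure fallbacks and $O(\log\log\epsilon^{-1})$ amortized from global rebuilds, yielding Theorem~\ref{thm:dynamic}.
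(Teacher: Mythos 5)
Your high-level reduction---hash each key to one of many sub-tables, run the load-factor-$1$ resizable rainbow table inside each, and treat sub-table overflow as a rare failure---is the same skeleton as the paper's proof. But your choice of sub-table size $\Theta(\epsilon^{-1})$ is fatally too small, and the arithmetic in your own sketch betrays it: ``expected load $\epsilon^{-1}$ while capacity is $3\epsilon^{-1}$'' is a table at load factor $1/3$, not $1-\epsilon$. To actually achieve load factor $1-\epsilon$, a block of capacity $K$ may waste only $O(\epsilon K)$ slots on average, i.e.\ $O(1)$ slots when $K = \Theta(\epsilon^{-1})$; meanwhile the number of keys hashing to that block fluctuates by $\Theta(\sqrt{K}) = \Theta(\sqrt{\epsilon^{-1}})$. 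So at your parameters a constant fraction of blocks overflow (and a constant fraction are badly under-full), the per-block failure probability is $\Theta(1)$ rather than superpolynomially small in $\epsilon^{-1}$, and both your Chernoff step and your ``$o(1)$ expected fallback cost'' computation collapse. Even just to make the total wasted space $\le \epsilon N$ you need $K \ge \epsilon^{-2}$; the paper takes $k = \poly(\epsilon^{-1})$ large enough that the $\tilde O(\sqrt{k})$ fluctuation is dominated by the $\epsilon k$ slack.

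The second missing piece is the mechanism that absorbs these fluctuations, which the paper flags as ``the only tricky part.'' The paper does \emph{not} let each sub-table $H_i$ track its own random load: all $H_i$ are kept at identical, deterministically chosen capacities $(1-\epsilon)\frac{n}{N}k \pm 1$ (interleaved in the array so probe sequences stay oblivious), each $H_i$ is kept exactly full in the common case, and the $\Theta(\epsilon k)$ excess keys per sub-table go into per-block overflow buffers $O_i$ (linear-probing arrays of size $2\epsilon k$ at constant load) backed by a shared array $B$ of size $\epsilon N$ for the rare event that an $O_i$ itself overflows; the conditional run-length analysis of $B$ (Lemma 6.6) is a real argument, not a one-line appeal to rarity. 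Your alternative of letting each block resize to its own occupied prefix also has an obliviousness problem: the layout would then depend on the random loads of the blocks, i.e.\ on the hashes of other keys. So the proposal as written does not go through; to fix it you would need to (i) enlarge blocks to $\poly(\epsilon^{-1})$, (ii) keep them at a fixed common size rather than individually resized, and (iii) actually construct and analyze the overflow structure you gesture at.
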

\noindent At the end of the section, we will also prove the analogous result for fixed-capacity hash tables.

Throughout the section, we will assume that $\epsilon = n^{-o(1)}$, since when $\epsilon = n^{-\Omega(1)}$, we are okay with $O(\log \log n)$-time updates, so we can keep the hash table at load factor $1$. Furthermore, as in Section \ref{sec:resizing}, it suffices to handle $n \in [N, \, 1.1 \cdot N]$ for some known parameter $N$, since we can use the random-threshold rebuild technique from Section \ref{sec:resizing} to handle larger changes in $n$. Finally, we will satisfy ourselves with a load factor of the form $1 - \Theta(\epsilon)$, since this is equivalent to $1 - \epsilon$ up to constant-factor changes in $\epsilon$.

\paragraph{The basic setup: rainbow hash tables with overflow handling.}
Our basic construction will be as follows. Let $k = \poly \epsilon^{-1}$.  We will maintain $N / k$ dynamically resized rainbow hash tables $H_1, H_2, \ldots, H_{N/k}$ that are each allocated space $(1 - \epsilon) \frac{n}{N} \cdot k \pm 1$ at any given moment. (The slots of the hash tables can be interleaved so that, if we wish to increase the space in each hash table by $1$, we just need to extend the size of our array overall by $N / k$.) As a convention, we will set $k' = \frac{n}{N} k$ (so $k'$ changes over time). 

Each element $x$ uses a random hash $g(x) \in [N/k]$ to select which hash table $H_{g(x)}$ it belongs in. If the hash table $H_{g(x)}$ is full when $x$ is inserted (which will be the common case), then $x$ is placed in a \defn{overflow buffer} $O_{g(x)}$ (whose implementation we will specify later). If an element in some $H_i$ is deleted, and the overflow buffer $O_i$ is non-empty, then an element in $O_i$ is moved to $H_i$. Similarly, if the amount of space allocated to $H_i$ is incremented, and $|O_{g(x)}| > 0$, then an item from $O_{g(x)}$ is moved to $H_i$ (so that the increase in the size of $H_i$ corresponds to an insertion from $H_i$'s perspective); and if the amount of space allocated to $H_i$ is decremented (and $H_i$ was full beforehand), then a random item will be deleted from $H_i$ (so that the decrease in size of $H_i$ corresponds to a deletion from $H_i$'s perspective) and that item will be placed in $O_i$.

The overall rule will be that, if $O_i$ is non-empty, then $H_i$ is full. Thus, if we use $z_i$ to denote the capacity of $H_i$ at any given moment, and $r_i$ to denote the number of elements $x$ satisfying $g(x) = i$ at any given moment, then $|O_i| = \max(0, \, r_i - z_i)$.

\begin{lemma}
With probability $1 - k^{-\omega(1)}$, we have
\[|O_i| \in [0.5 \cdot \epsilon k', \, 1.5 \cdot \epsilon k'].\]
Furthermore, $\E[\max(0, \, |O_i| - 2 \epsilon k)] = o(1)$.
\label{lem:overflow}
\end{lemma}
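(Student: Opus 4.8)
The claim concerns the size of a single overflow buffer $O_i$, where $|O_i| = \max(0,\, r_i - z_i)$, with $z_i = (1-\epsilon)k' \pm 1$ the capacity of $H_i$ and $r_i = |\{x : g(x) = i\}|$ the number of elements routed to the $i$-th sub-table. Since $g$ is a fully random hash into $[N/k]$, each element lands in $H_i$ independently with probability $k/N$, so $r_i$ is a sum of $n$ independent indicators with $\E[r_i] = nk/N = k'$. The plan is: (1) pin down $\E[r_i] = k'$ and note $z_i = (1-\epsilon)k' \pm 1$, so that $\E[r_i - z_i] = \epsilon k' \mp 1$; (2) apply a Chernoff bound to $r_i$ to show concentration within a $\tilde O(\sqrt{k'})$ window, which is $o(\epsilon k')$ since $\epsilon = n^{-o(1)}$ and $k = \poly\,\epsilon^{-1}$ forces $\epsilon k' = \epsilon \cdot \Theta(k) \gg \sqrt{k}$; (3) conclude the high-probability sandwich $|O_i| \in [0.5\epsilon k',\, 1.5\epsilon k']$; and (4) handle the expectation tail bound separately by integrating the Chernoff tail.

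**The high-probability bound.** For step (2)–(3): by a multiplicative Chernoff bound, $\Pr[\,|r_i - k'| > \delta k'\,] \le 2\exp(-\Omega(\delta^2 k'))$. Choosing $\delta$ so that $\delta k' = C\sqrt{k' \log k}$ for a large constant $C$ gives failure probability $k^{-\omega(1)}$ (using $k' = \Theta(k)$, since $n/N \in [1, 1.1]$). On this event, $r_i - z_i = (r_i - k') + (k' - z_i) = \pm C\sqrt{k'\log k} + \epsilon k' \pm 1$. Because $k = \poly\,\epsilon^{-1}$ can be taken with a large enough exponent (this is a free parameter we get to choose), $\sqrt{k\log k} = o(\epsilon k)$, so the deviation term is $o(\epsilon k')$ and swallowed into the $[0.5, 1.5]$ slack. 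Hence $r_i - z_i \in [0.9\epsilon k',\, 1.1\epsilon k'] \subseteq [0.5\epsilon k',\, 1.5\epsilon k']$, and since this quantity is positive on the good event, $|O_i| = r_i - z_i$ lies in the stated interval.

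**The expectation tail.** For step (4), we want $\E[\max(0,\, |O_i| - 2\epsilon k)] = o(1)$. Note $|O_i| - 2\epsilon k \le r_i - z_i - 2\epsilon k = (r_i - k') + \epsilon k' - 2\epsilon k \pm 1 \le (r_i - k') - \epsilon k + O(1)$ using $k' \le 1.1 k$ so $\epsilon k' - 2\epsilon k \le -0.9\epsilon k$. Thus $\max(0, |O_i| - 2\epsilon k)$ is nonzero only when $r_i - k' \ge 0.9\epsilon k - O(1) =: \tau$, and on that event its value is at most $r_i - k' - \tau$ (roughly). So $\E[\max(0,|O_i|-2\epsilon k)] \le \E[(r_i - k' - \tau)\cdot \mathbbm{1}[r_i - k' \ge \tau]] = \int_0^\infty \Pr[r_i - k' \ge \tau + u]\,du$. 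Each probability is at most $\exp(-\Omega((\tau+u)^2/k'))$ by Chernoff (since $\tau + u \le k'$ in the relevant range, and we can bound the far tail crudely by $r_i \le n$ with probability contributing negligibly). Since $\tau = \Theta(\epsilon k) = \omega(\sqrt{k\log k})$ by choice of the exponent in $k = \poly\,\epsilon^{-1}$, we have $\tau^2/k' = \omega(\log k)$, so $\Pr[r_i - k' \ge \tau] = k^{-\omega(1)}$, and the integral is dominated by $k^{-\omega(1)} \cdot k' = o(1)$. One must also confirm $k = \omega(1)$, which follows from $\epsilon = n^{-o(1)}$ together with $\epsilon \le$ some small constant (or if $\epsilon$ is constant, $k$ is a large constant and the bound still holds by the same Chernoff estimate).

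**Main obstacle.** The only delicate point is bookkeeping the relationship between $\epsilon$, $k$, $k'$, and $n$ to make sure the Chernoff deviation $\sqrt{k'\log k}$ really is negligible compared to $\epsilon k'$ and compared to the $0.9\epsilon k$ gap in the tail estimate — i.e., verifying that choosing $k = \epsilon^{-c}$ for a sufficiently large constant $c$ (say $c \ge 3$) makes everything work, and that the $\pm 1$ rounding in $z_i$ and the $[N, 1.1N]$ slack in $n/N$ never dominate. This is entirely routine once the parameter $k$'s exponent is chosen generously; there is no conceptual difficulty, only the need to be careful that the constants in the target intervals $[0.5, 1.5]$ and the threshold $2\epsilon k$ leave enough room.
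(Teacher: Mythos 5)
Your proposal is correct and follows essentially the same route as the paper: both identify $|O_i| = \max(0, r_i - z_i)$ with $r_i$ binomial of mean $k'$, apply a Chernoff bound to show the deviation of $r_i$ from $k'$ is negligible compared to the gap $\epsilon k' = k' - z_i \pm O(1)$ (using that $k = \poly(\epsilon^{-1})$ with a sufficiently large exponent), and handle the expectation claim by integrating the Chernoff tail beyond the threshold $\Theta(\epsilon k)$. The only cosmetic difference is that you quantify the deviation as $O(\sqrt{k'\log k})$ while the paper uses $(k')^{3/4}$; both choices work for the same reason.
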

\begin{proof}
Let $o_i = X_i - |H_i|$, and note that $|O_i| = \max(0, o_i)$.  By design, $o_i = X_i - |H_i| = X_i - (1 - \epsilon)k' \pm O(1)$ where $X_i$ is a binomial random variable with mean $k'$. The amount that $o_i$ deviates from its mean of $\epsilon k' \pm 1$ is at most the amount that $X_i$ deviates from its mean of $k'$. By a Chernoff bound, the probability of $X_i$ deviating from its mean by $\Omega(\epsilon k') > \Omega((k')^{3/4})$ is at most $(k')^{-\omega(1)} = k^{-\omega(1)}$; it follows that, with probability $1 - k^{-\omega(1)}$, we have $o_i \in [0.5 \cdot \epsilon k', \, 1.5 \cdot \epsilon k']$ and therefore also $|O_i| \in  [0.5 \cdot \epsilon k', \, 1.5 \cdot \epsilon k']$.

To obtain the final claim in the lemma, we can again apply a Chernoff bound to $X_i$ to obtain that the expected value of $\max(0, \, X_i - \E[X_i] - \Omega(\epsilon k))$ is $o(1)$. Finally, since 
\[o_i - 2 \epsilon k = X_i - (1 - \epsilon)k' - 2 \epsilon k \pm 1 = X_i - \E[X_i] - 2 \epsilon k + \epsilon k' = X_i - \E[X_i] - \Omega(\epsilon k),\]
it follows that $\E[\max(0, \, |O_i| - 2\epsilon k)] = \E[\max(0, \, o_i - 2 \epsilon k)] = o(1)$.
\end{proof}

\paragraph{Handling under-filled $H_i$'s.}
If $H_i$ is full, at any given moment, it can be implemented directly as a resizable rainbow hash table. If $H_i$ is not full (which by Lemma \ref{lem:overflow} happens with probability $k^{-\omega(1)}$), then $H_i$ is said to incur an \defn{under-fill error}. In this case, slot $1$ of $H_i$ is left empty so that queries and updates can examine it and determine that an under-fill error has occurred. When an under-fill error has occurred, queries to elements $x$ satisfying $g(x) = i$ read all of $H_i$; and updates to $H_i$ spend $O(\poly |H_i|)$ time checking if $H_i$ is still experiencing an under-fill error, and rebuilding $H_i$ appropriately based on whether it is or is not still experiencing an under-fill error.

For any given element $x$, the probability that $H_{g(x)}$ is incurring an under-fill error at any given moment is $k^{-\omega(1)}$ by Lemma \ref{lem:overflow}. So the increase in expected query and update times due to under-fill errors is $o(1)$.

\paragraph{Implementing the overflow buffers.}
The only tricky part of the data structure is implementing the overflow buffers. We can afford to use $O(\epsilon N)$ total space to implement the $O_i$'s, and we wish to implement them in such a way that we can support the following operations in $O(1)$ expected time each:
\begin{itemize}
    \item $\Sample(O_i)$: samples a random element from $O_i$, if $|O_i| > 0$, and returns $\Null$ if $|O_i| = 0$.
    \item $\Query(O_i, y)$: determines if $y \in O_i$. 
    \item $\Insert(O_i, y)$: inserts $y$ into $O_i$.
    \item $\Delete(O_i, y)$: deletes $y$ from $O_i$.
\end{itemize}

Each $O_i$ is allocated a $2 \epsilon k$-size array $A_i$ that it uses to store its elements (unless $|O_i| > 2 \epsilon k$). The elements stored in $A_i$ treat $A_i$ as a linear-probing hash table.
By Lemma \ref{lem:overflow}, the load factor of $A_i$ is between $0.1$ and $0.9$ with probability $1 - k^{-\omega(1)}$. Thus, the expected time to perform queries/inserts/deletes in $A_i$ is $O(1)$. Additionally, if we wish to sample a random element from $A_i$, we can just randomly sample slots until we find one that is occupied; since, with probability $1 - k^{-\omega(1)}$ the number of occupied slots in $A_i$ is $\ge 0.1 \cdot |A_i|$, the expected time of this sampling procedure is $O(1)$.

The only question is what we should do when $A_i$ itself overflows, that is, when $|O_i| \ge |A_i| = 2 \epsilon k$. Note that, since this occurs with only a small probability ($k^{-\omega(1)}$), we are okay with having relatively expensive (say, $\poly(k)$-time) operations in this case.

To handle overflowed $A_i$'s, we allocate an array $B$ of size $\epsilon N$. If $A_i$ overflows, its overflow elements $x$ treat $B$ as a linear-probing hash table, where the hash of $x$ within $B$ is calculated by a ``hash function''
\[\overline{h}(i) := \epsilon \cdot k \cdot i.\]
Note that, since $A_i$ is indexed by $i$ ranging over $i \in [N/k]$, the quantity $\overline{h}(i)$ is a valid index in $[|B|] = [\epsilon N]$. 

Since all of the overflow elements $x$ from $A_i$ have the same hash $\overline{h}$ as each other, it suffices to show that, even if we condition on $A_i$ overflowing, the expected length of the run in $B$ that contains the overflow elements is at most $\poly(k)$. This will allow us to implement operations on $O_i$ in $\poly(k)$ expected time when $A_i$ overflows, as desired.

Thus, to complete our discussion of how to implement the $O_i$'s, it suffices to prove the following lemma:
\begin{lemma}
    Conditioned on $A_i$ overflowing, the expected length of the run in $B$ that contains the overflow elements from $A_i$ is at most $\poly(k)$.
\end{lemma}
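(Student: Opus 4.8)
The key point is that the run in $B$ containing the overflow elements from $A_i$ is bounded by contributions from the other arrays $A_{i'}$ that also overflow and happen to land nearby in $B$. Since $A_i$ has only $2\epsilon k$ slots and $\overline{h}$ spaces the arrays $\epsilon k$ apart in $B$, each $A_{i'}$ that overflows contributes at most $r_{i'}$ elements to $B$, placed starting near position $\overline{h}(i')$. So I would bound the run length containing $A_i$'s overflow elements by $\sum_{i'} (\text{overflow of } A_{i'})$ over all $i'$ whose hash $\overline{h}(i')$ lies within distance, say, $\poly(k)$ of $\overline{h}(i)$ --- but actually, since a long run could in principle propagate a long way, the cleaner approach is: the run containing $A_i$'s overflow has length at most $\sum_{i'\in W} \max(0, r_{i'} - z_{i'})$ where $W$ is a window of $O(\text{run length}/\epsilon k)$ consecutive indices around $i$, plus we must rule out the run being enormous.

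\textbf{Step 1: Set up the window / self-bounding argument.} I would argue that if the run in $B$ through $\overline{h}(i)$ has length $L$, then it is ``supported'' by the overflow elements of arrays $A_{i'}$ with $\overline{h}(i')$ in an interval of length $O(L)$ around $\overline{h}(i)$ --- equivalently $i' \in W_i$ for an index window $W_i$ of size $O(L/(\epsilon k)) + O(1)$. This is the standard fact that a linear-probing run of length $L$ is entirely fed by insertions whose home addresses lie in an interval of length $L$. Hence $L \le \sum_{i' \in W_i} \max(0, r_{i'} - 2\epsilon k)$, noting that only the \emph{overflow beyond $A_{i'}$} (i.e.\ beyond $2\epsilon k$) reaches $B$. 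By Lemma~\ref{lem:overflow}, $\E[\max(0, |O_{i'}| - 2\epsilon k)] = \E[\max(0, r_{i'} - z_{i'} - 2\epsilon k)] = o(1)$ for each $i'$ (using $z_{i'} = (1-\epsilon)k' \pm O(1)$ and the last claim of that lemma, with a Chernoff tail strong enough to survive summing over $W_i$).

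\textbf{Step 2: Turn the self-bounding inequality into a tail bound.} This is the main obstacle: the window $W_i$ depends on $L$ itself, so I cannot just take expectations. I would instead do a dyadic/union-bound argument: for each threshold $L = 2^t$, the event ``run length $\ge L$'' implies $\sum_{i' \in W} \max(0, r_{i'} - z_{i'} - 2\epsilon k) \ge L$ for the specific window $W$ of $O(L/(\epsilon k)) = O(L/\poly(\epsilon^{-1})\cdot\epsilon^{-1}) $ indices --- wait, $k = \poly(\epsilon^{-1})$, so $\epsilon k = \poly(\epsilon^{-1})$ and a window of $m = \Theta(L/(\epsilon k))$ indices. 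The sum of $m$ independent (across distinct $i'$, since the $r_{i'}$ depend on disjoint sets of elements up to the $\pm O(1)$ coupling) nonnegative variables, each with mean $o(1)$ and sub-exponential upper tail inherited from the binomial $X_{i'}$, exceeds $L \gg m \cdot o(1)$ only with probability $\exp(-\Omega(L))$ by a Chernoff bound on the sum (or Bernstein, since the summands are sub-exponential). Taking a union bound over $t$ and multiplying by $L = 2^t$ gives $\E[L \cdot \ind(\text{run} \ge L \text{ happens}) ] $ geometrically decreasing, so the conditional expectation $\E[L \mid A_i \text{ overflows}]$ is $O(\poly(k))$ --- in fact $O(\poly(k))$ is very loose; one gets something like $O(\epsilon k) = O(\poly(\epsilon^{-1}))$, which suffices.

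\textbf{Step 3: Handle the conditioning on $A_i$ overflowing.} Conditioning on $A_i$ overflowing only inflates probabilities by a factor $1/\Pr[A_i \text{ overflows}] = k^{\omega(1)}$, which would be fatal against a mere $\poly(k)$ tail. The fix is that the event ``$A_i$ overflows'' is (up to the $\pm O(1)$ coupling) independent of the variables $r_{i'}$ for $i' \ne i$, and the contribution of $A_i$'s own overflow to the run is at most $r_i \le O(k)$ deterministically once we further condition (since $r_i = X_i$ is binomial with mean $k'$, and conditioned on overflow it is still $O(k)$ with overwhelming probability, contributing negligibly to the tail). So I would split $L \le (\text{contribution of } A_i) + (\text{contribution of } i' \ne i \text{ in } W)$; the first term is $O(k)$ w.h.p.\ even under conditioning, and the second term's analysis from Step 2 is essentially unaffected by conditioning on the $i$-only event. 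Assembling these, $\E[L \mid A_i \text{ overflows}] = O(\poly(k))$, completing the lemma. The one subtlety to be careful about is the $\pm O(1)$ dependence between the $r_{i'}$'s (from a shared element $x$); this is handled exactly as in Lemma~\ref{lem:updatefriendly}, since it shifts each count by at most $1$ and cannot affect asymptotics.
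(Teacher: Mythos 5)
Your proposal is correct and follows essentially the same route as the paper's proof: localize the run to a window of $O(t/(\epsilon k))$ indices around $\overline{h}(i)$, split the required overflow into the contribution of $A_i$ itself versus the other $A_{i'}$'s, neutralize the conditioning on ``$A_i$ overflows'' via domination/(near-)independence, and apply a Chernoff-type bound to the sum of the overflow variables $Y_{i'}$ (which have $o(1)$ means and geometric tails). The only quibble is your claimed tail $\exp(-\Omega(L))$ in Step 2, which is too strong for sub-exponential summands of scale $O(k)$ --- the correct exponent is $\Omega(L/\poly(k))$ or, as the paper settles for, $\Omega(\sqrt{L}/\poly(k))$ --- but any $e^{-\omega(\log L)}$ decay suffices for the conclusion, so the argument stands.
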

\begin{proof}

It suffices to show that, conditioned on $A_i$ overflowing, we have for each run-length $t > \poly(k)$ (where $\poly(k)$ is a large polynomial of our choice) that: the probability that the overflow elements from $A_i$ are in a run in $B$ of length $t$ is at most
\[
e^{-\omega(\log t)}. \numberthis \label{eq:prob_run_len_t}
\]
In order for the elements to be in a run of length $t$, there must be some contiguous interval $I \ni \overline{h}(i)$ of $t$ slot indices in $B$ such that the number of elements in $B$ that hash into $I$ is exactly $|I| = t$. As there are only $t$ options for $I$ satisfying $|I| = t$ and $\overline{h}(i) \in I$, it suffices to show that each individual option has probability at most $e^{-\omega(\log t)}$ of occurring. Since $t > \poly(k)$ for a polynomial of our choice, this, in turn, reduces to bounding the probability of a given interval $I$ occurring by, say,
\[e^{-\Omega(\sqrt{t} / \poly(k))}.\]
For the rest of the proof, fixing some interval $I$ of $t$ slots in $B$ containing $\overline{h}(i)$, we wish to bound the probability that, conditioned on $A_i$ overflowing, there are at least $t$ elements that overflow from $A_j$'s satisfying $\overline{h}(j) \in I$.

Define $X_j$ as the number of items that hash to $H_j$. (So, if we were to not condition on anything, then $X_j$ would be a binomial random variable with mean $k'$.) Define $Y_j$ as the number of items that overflow from $A_j$, that is,
\[Y_j = \max(0, \, X_j - |A_j| - |H_j|).\]
Recall that we are conditioning on $Y_i \ge 1$.

There are $O(\epsilon t / k)$ values of $j$ such that $\overline{h}(j) \in I$. Let $J$ be the set of such values $j$, excluding $i$. Then, in order for $I$ to have $t$ elements in it, we would need
\[Y_i + \sum_{j \in J} Y_j \ge t,\]
which implies either that
\begin{equation} 
Y_i \ge t/2
\label{eq:Y1}
\end{equation}
or that
\begin{equation}
\sum_{j \in J} Y_j \ge t/2.
\label{eq:Y2}
\end{equation}
Again, we are interested in the probabilities of these events conditioned on $Y_i \ge 1$.

To bound the probability of \eqref{eq:Y1}, observe that the random variable $(Y_i - 1 \mid Y_i \ge 1)$ is dominated by $X_i$ (not conditioned on anything).\footnote{This is an application of the more general fact that, for a binomial random variable $X$, $\Pr[X \ge r \mid X \ge r - 1]$ is a monotonically decreasing function in $r$.} Therefore,
\[\Pr[Y_i \ge t/2 \mid Y_i \ge 1] \le \Pr[X_i \ge t/2 - 1].\]
Since $X_i$ is a binomial random variable with mean $\Theta(k)$, this latter probability is, by a Chernoff bound, at most $e^{-\Omega(t / k)}$. 

To bound the probability of \eqref{eq:Y2}, observe that 
\[\Pr\left[\sum_{j \in J} Y_j \ge t/2 \;\middle|\; Y_i \ge 1\right] < \Pr\left[\sum_{j \in J} Y_j \ge t/2\right],\]
where the latter probability does not condition on anything.

The values $\{Y_j \mid j \in J\}$ are negatively associated random variables that each has expected value $o(1)$ (by Lemma \ref{lem:overflow}) and that are each bounded above by a geometric random variable with mean $O(k)$ (since even $X_j$ is, by even a very weak Chernoff bound, bounded above by such a geometric random variable). By a Chernoff bound for sums of negatively associated geometric random variables \cite{wajc2017negative}, the probability that $\sum_{j \in J} Y_j$ exceeds its mean by more than $r \cdot \sqrt{|J|} k$, for a given $r > 0$, is at most $e^{-\Omega(r)}$. Since the mean of this sum is $|J| \cdot o(1) \le |J|$, it follows that 
\[\Pr\left[\sum_{j \in J} Y_j > |J| + r \cdot \sqrt{|J|} k\right] \le e^{-\Omega(r)}.\]
Since $|J| = O(\epsilon t / k) = o(t)$, this lets us bound 
\begin{align*}
    \Pr\left[\sum_{j \in J} Y_j \ge t/2\right] & 
    < \Pr\left[\sum_{j \in J} Y_j \ge |J|  + t/4\right] 
     \\ 
    & = e^{-\Omega(t / (k\sqrt{|J|}))} \\
    & = e^{-\Omega(t / (k\sqrt{\epsilon t / k}))} \\
    & = e^{-\Omega(\sqrt{t} / \poly(k))},
\end{align*}
as desired. 
\end{proof}

In the case where $B$ overflows, which is an extremely rare event, we say the \defn{global failure} has occurred. When this happens, we keep the entire hash table in an arbitrary state (i.e., placing all keys in arbitrary slots), and for each insertion/deletion/query, we spend $\Theta(n)$ time scanning through the table and checking if it can return to the normal case.
A special boolean is stored to indicate whether the global failure is occurring, which is again encoded by the relative order of two special elements. According to \eqref{eq:prob_run_len_t}, at any given moment, the probability that $B$ overflows is at most $e^{-\omega(\log (\eps N))} = N^{-\omega(1)}$, so the global failure contributes a negligible amount to the expected insertion/deletion/query time.

\paragraph{Putting the pieces together.} 
We can now summarize the full procedure for implementing insertions/deletions/queries.

To query an element $x$, we must query $O_{g(x)}$ and $H_{g(x)}$. The expected time needed to query $H_{g(x)}$ is $O(\log \log k) = O(\log \log \epsilon^{-1})$; and the expected time needed to query $O_{g(x)}$ is $O(1)$. To keep the query within the model of classical open addressing, we can simply interleave the two probe sequences and stop when both queries have completed. Overall, the expected query time is $O(1)$.

To insert an element $x$, we first check if $H_{g(x)}$ is under-filled. If so, we follow the protocol described earlier to handle under-fill errors. Otherwise, we insert $x$ into $O_{g(x)}$. Each of the above steps takes $O(\log \log k) = O(\log \log \epsilon^{-1})$ expected time.

To delete an element $x$, we first determine where it is. If $x \in O_{g(x)}$, we delete it in $O(1)$ expected time. Otherwise, if $x \in H_{g(x)}$, we delete it from $H_{g(x)}$; we use $\Sample(O_{g(x)})$ to find an element $y$ that we can move from $O_{g(x)}$ into $H_{g(x)}$; and we insert $y$ into $H_{g(x)}$ in order to keep $H_{g(x)}$ at load factor $1$. If $y$ does not exist, then the free slot in $H_{g(x)}$ is placed in its slot $1$ to indicate that it is experiencing an under-fill error. Each of the above steps takes $O(\log \log k) = O(\log \log \epsilon^{-1})$ expected time.

Finally, as insertions and deletions are performed, we must not only implement those operations, but also add/remove slots to the $H_i$'s in order to maintain the invariant that each $H_i$ is given $(1 - \epsilon) \frac{n}{N} \cdot k \pm 1$ slots at any given moment.

Adding a slot to some $H_i$ is implemented as follows. A call to $\Sample(O_i)$ is made to either get a random element of $O_i$ or determine that $|O_i| = 0$. If $|O_i| = 0$, then the addition of a slot to $H_i$ will cause it to be under-filled. In this case, $H_i$ is rebuilt from scratch (but remember that, since under-filled $H_i$'s are very rare, this contributes negligibly to our expected time bound). In the more common case where $|O_i| > 0$, we move an item from $O_i$ into $H_i$---this allows $H_i$ to view its increase in size as being due to an insertion, which in turn takes $O(\log \log k) = O(\log \log \eps^{-1})$ expected time.

Deleting a slot from some $H_i$, on the other hand, is implemented as follows. If $H_i$ is under-filled, then $H_i$ is simply rebuilt (again, since under-filled $H_i$s are very rare, this contributes negligibly to our expected time bound). Otherwise, a random element is removed from $H_i$ and placed into $O_i$. This allows $H_i$ to view its decrease in size as being caused by a deletion, which in turn takes $O(\log \log k) = O(\log \log \eps^{-1})$ expected time.

Putting the time bounds together, we have proven Theorem \ref{thm:dynamic}.

\paragraph{Supporting all load factors $< 1 - \epsilon$ for the fixed-capacity case.}
Finally, we conclude the section by describing how to handle arbitrary load factors in a fixed-capacity hash table. 

\begin{theorem}
There exists a classical open-addressed hash table that has a fixed capacity $N$, and that allows for load factors of up to $1 - \epsilon$ while supporting queries in $O(1)$ expected time and updates in $O(\log \log \epsilon^{-1})$ time.
\label{thm:fixed}
\end{theorem}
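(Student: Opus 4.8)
The plan is to derive the fixed-capacity theorem from \cref{thm:dynamic} by ``de-obliviousifying'' the footprint. Recall that the dynamic-resizing model and the fixed-capacity model differ only in that a dynamically-resized table may let the probe sequence of $x$ depend on the current footprint, whereas a fixed-capacity table must let it depend only on $x$, the fixed capacity $N$, and the hash bits. Since the update/query algorithms are always allowed to read the current count $n$, it suffices to reorganize the construction behind \cref{thm:dynamic} so that: (i) every structural parameter --- the current footprint, the number of sub-tables, the size of each sub-table, and every internal resizing ``band'' reference --- is a fixed function of $n$, $N$, and the hash bits, with no dependence whatsoever on the operation history; and (ii) the probe sequence of $x$ is one fixed interleaving, over all the (finitely many, history-independent) configurations that can arise, of the corresponding probe sequences, so that a query or update first computes the active configuration from $n$ in $O(1)$ time and then follows only the relevant sub-sequence.

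Concretely, I would nest $O(\log N)$ instances of (a variant of) the \cref{thm:dynamic} construction in the prefixes $[\,1,\,N/4^{\,j}\,]$ for $j = 0,1,2,\ldots$, declaring level $j$ active exactly when $n$ lies in a range of the form $\bigl(\,\sim N/4^{\,j+1},\; \sim (1-\Theta(\epsilon))\,N/4^{\,j}\,\bigr]$; choosing the endpoints of these ranges to be randomly jittered (with the jitter read off the hash function, hence ``free'') keeps the active level a history-independent function of $n$ while, against an oblivious adversary, ensuring $n$ is within $\pm 1$ of a level boundary with probability only $\Theta(1/n)$ per operation. Within level $j$ the outer band reference is pinned to $N/4^{\,j}$, which fixes the number of sub-tables $H_i$ to $\Theta(N/(4^{\,j}k))$ and makes each $|H_i| = \Theta(k) = \poly(\epsilon^{-1})$ a deterministic function of $n$; each $H_i$, being a resizable load-$1$ rainbow table of $\poly(\epsilon^{-1})$ size, is in turn configured so that its own internal band reference is a (jittered but history-independent) function of $|H_i|$, hence of $n$. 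With every parameter pinned to $n$ in this way, the data structure at any instant is a legitimate instance of the \cref{thm:dynamic} construction, so correctness, the $O(1)$ expected query time, and the $O(\log\log\epsilon^{-1})$ non-rebuild part of the update time are inherited. For $n$ below the smallest level's range we simply recurse into the capacity-$N/4^{\,j}$ prefix, which terminates after $O(\log N/\log\epsilon^{-1})$ levels at an $O(1)$-size base case.

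What remains is the rebuild accounting. A level switch (triggered when $n$ crosses a level boundary) rebuilds a capacity-$\Theta(n)$ structure whose recursion tree has depth $O(\log\log k) = O(\log\log\epsilon^{-1})$ (it is made of $\poly(\epsilon^{-1})$-size pieces, not $n$-size pieces), so it costs $O(n\log\log\epsilon^{-1})$ via the analogue of \cref{lem:commoncasefix} and, by the jittering, occurs with probability $\Theta(1/n)$ per operation --- contributing $O(\log\log\epsilon^{-1})$ amortized. Likewise a single insertion/deletion changes the intended size of only $O(1)$ of the $H_i$'s, and each such change costs $O(\log\log k)$ plus, with probability $\Theta(1/k)$, an $O(k\log\log k)$ rebuild of that $H_i$; summed over the $\Theta(n/k)$ sub-tables this again contributes $O(\log\log\epsilon^{-1})$. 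I expect the main obstacle to be step (i): re-engineering the resizing machinery of \cref{sec:cell,sec:rainbowhashing,sec:resizing} so that no parameter depends on history, and re-verifying the probabilistic lemmas (the analogues of \cref{lem:pi,lem:updatefriendly,lem:overflow} and the run-length bound for the array $B$) as well as the rebuild accounting under the constraint that all thresholds are deterministic-but-$n$-dependent and the adversary is merely oblivious to the hash bits. Everything else is bookkeeping --- checking that the interleaved probe sequence is a well-defined element of $[N]^{\infty}$, that the active configuration is recoverable from $n$ in $O(1)$ time, and that the constants are chosen so the pinned footprint never exceeds $N$ while each level still carries load $1-\Theta(\epsilon)$.
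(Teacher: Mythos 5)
Your high-level plan---derive the fixed-capacity table from \cref{thm:dynamic} by pinning every structural parameter to a history-independent function of $n$ and amortizing mode-switch rebuilds against randomly jittered thresholds---is the same reduction the paper uses, but there is a genuine gap in how you handle the probe sequence. In classical open addressing a query must examine the positions $h_1(x), h_2(x), \ldots$ \emph{in order}, and its cost is the index at which it stops; it cannot ``compute the active configuration from $n$'' and then ``follow only the relevant sub-sequence.'' (That ability is precisely what distinguishes \emph{non-oblivious} open addressing, which the paper treats as a separate model.) Your construction has $\Theta(\log N)$ nested levels, one per factor of $4$ in $n$, each with a genuinely different probe sequence over a different prefix, so any fixed interleaving forces a level-$j$ query to pay a multiplicative overhead equal to the sparsity of level $j$'s entries in the merged sequence---$\Theta(\log N)$ for a uniform interleaving, or $\Theta(2^{j})$ for a geometrically weighted one. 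Either way the $O(1)$ expected query bound is lost. Relatedly, the one fact that makes the reduction legitimate even within a single level is relegated to your ``bookkeeping'': the dynamic construction has the special property that, as $n$ ranges over a factor-$1.1$ band, the probe sequences it uses are subsequences of one fixed sequence differing only in $O(1)$ initial entries, so no real interleaving is needed there. Without that property the passage from dynamic to fixed capacity does not go through at all, and interleaving is not a substitute for it.

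The paper sidesteps your interleaving problem entirely by observing that once $n$ drops below $0.95N$ no clever structure is needed: it picks a single random threshold $T \in [0.9N, \, 0.95N]$, runs the dynamic construction $D'$ (parameterized for $N' = 0.95N$) while the size is above $T$, and rebuilds to plain linear probing when the size crosses below $T$. Linear probing at load factor $\le 0.95$ already gives $O(1)$ expected time for all operations and has a probe sequence independent of $n$, so only two probe sequences per key ever need to coexist, and the threshold-crossing rebuilds cost $O(\log \log \epsilon^{-1})$ amortized exactly as in your accounting. Replacing your $\Theta(\log N)$-level recursion with this single fallback, and explicitly verifying the subsequence property of $D'$'s probe sequences within the band $[0.95N, \, (1-\epsilon)N]$, would repair the argument.
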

\begin{proof}
Let us begin by assuming that the size is guaranteed to stay in $[0.95 N, \, (1 - \epsilon) N]$. Then, we will implement our fixed-capacity data structure, which we will call $D$, by actually using the dynamically resized data structure, which we will call $D'$, already described earlier in the section. The data structure $D'$ lives in a prefix of the memory allocated to $D$, and is parameterized to preserve a load factor of $1 - \epsilon$ as the number of elements varies within the range $[N', \, (1 - \epsilon)N < 1.1 N']$ for $N' = 0.95 N$. As we have already established, $D'$ supports $O(1)$ expected-time queries and $O(\log \log \epsilon^{-1})$ expected-time updates.

There is one point that we must be very careful about, however. One cannot, \emph{in general}, use a dynamically-resized classical-open-addressed hash table to implement a fixed-capacity classical open-addressed hash table. This is because, in the dynamic-resizing case, the probe sequence for an item $x$ is permitted to depend on both $x$ and the \emph{current} value of $N$; but in the fixed-capacity case, the probe sequence must depend on only $x$ and a \emph{fixed} $N$. Fortunately, in our construction of $D'$, the probe sequences that we use for $n \in [N', \, 1.1\cdot N']$ are subsequences of the probe sequences that we use for $n = 1.1 \cdot N$ (and the difference for each probe sequence is just the addition/removal of $O(1)$ probes in the first $O(1)$ entries of the sequence). Therefore, in our case, we \emph{can} use $D'$ within $D$.

Finally, we must also handle cases where our size drops below $0.95 N$. Let $T$ be a random threshold in $[0.9 N, \, 0.95 N]$. Whenever the number of elements in the hash table crosses below $T$, we rebuild the hash table to use standard linear probing. Whenever the number of elements crosses above $T$, we rebuild the hash table to use $D'$ as described above. Each rebuild takes $O(N \log \log \epsilon^{-1})$ time, and the probability of a given insert/deletion crossing the threshold $T$ is $O(1/ N)$, so the contribution of rebuilds to the expected update time is $O(\log \log \epsilon^{-1})$.
\end{proof}

\section{The Lower Bound}\label{sec:lower}

In this section, we prove an $\Omega(\log \log \epsilon^{-1})$ lower bound on the amortized expected time per insertion/deletion in any classical open-addressing hash table that supports (even moderately) efficient queries.

\begin{restatable}{theorem}{thmlower}
  \label{thm:lowerbound/nonuniform}
  Suppose the universe size $U = \poly n$ is a large polynomial of $n$.
  If a classical open-addressing hash table stores $n$ keys with load factor $1 - \eps$, then the expected amortized time per operation is at least $\Omega(\log \log \eps^{-1})$. Moreover, as long as the expected query time is $O\bk[\big]{2^{\sqrt{\log \eps^{-1}}}}$, the expected amortized time per insertion/deletion is at least $\Omega(\log \log \eps^{-1})$.
\end{restatable}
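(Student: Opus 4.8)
The plan is to prove the quantitative (``moreover'') half of the theorem — an amortized update lower bound of the shape $(\text{amortized update time}) + O(\log\log(\text{expected query time})) = \Omega(\log\log\eps^{-1})$ — from which both halves follow: the first half follows by additionally noting that if queries are \emph{not} $O(\log\log\eps^{-1})$-time then a pure query sequence already witnesses $\Omega(\log\log\eps^{-1})$ amortized cost. As the text advertises, this extends the potential-function technique of \cite{benderhashing}; the new ingredients are a ``crowding'' combinatorial lemma tuned to $\eps$, a $\log$-capped potential with $\Theta(\log\log\eps^{-1})$ scales, and an adaptively chosen churning input. The first step is the crowding lemma. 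Fix the (fully random) hash functions; since $U = \poly n$, the probe sequences of sufficiently many distinct keys behave like independent uniform sequences over $[N]$. For a configuration $\chi$ and threshold $k$, let $L_{>k}(\chi)$ be the set of present keys with query cost $>k$, i.e.\ that are not stored among their first $k$ probes. I would first show that, with probability $1-o(1)$ over the hash functions, \emph{every} valid configuration satisfies $|L_{>k}| \ge N(e^{-\Theta(k)} - \eps)$ — equivalently, at most $N(1-e^{-\Theta(k)})$ keys can sit among their first $k$ probes. This is a Hall's-theorem argument: a configuration exhibits a matching saturating the ``cheap'' keys in the bipartite graph (cheap keys)$\times$(slots) whose edges are first-$k$ probes, so Hall's condition holds; one then combines this with the fact that the union of $m$ independent length-$k$ probe-prefixes covers $\approx N(1-e^{-mk/N})$ slots (concentrated), plus a union bound over the $\le 2^{n}$ subsets of present keys that can arise along the input.

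Next I set up the potential and the input. Put scales $\sigma_j = 2^{\,j}$ for $0 \le j \le J$, where $\sigma_J = \Theta(\log\eps^{-1})$ is chosen just small enough that the crowding bound is non-vacuous at every scale, so that $J = \Theta(\log\log\eps^{-1})$. (These scales mirror the $\Theta(\log\log\eps^{-1})$-deep recursion of rainbow hashing, down which the free slot must descend.) Define $\Phi(\chi) = \sum_{x\text{ present}}\varphi(\mathrm{cost}_\chi(x))$ with $\varphi(c) = |\{j : \sigma_j < c\}|$ — an $O(\log\log\eps^{-1})$-capped, essentially $\log_2$-of-cost function — so that $\Phi = \sum_{j=0}^{J}|L_{>\sigma_j}|$. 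The hard input runs $T = \Theta(n \cdot \poly\log\log\eps^{-1})$ rounds; each round deletes one present key, chosen \emph{adaptively} as a function of the current configuration so as to force the largest potential drop the algorithm cannot cheaply undo, and then inserts the next key from an infinite stream of fresh keys with independent uniform probe sequences.

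The amortized accounting then has three legs, all in terms of the \emph{number of keys moved}, which is why the bound applies even to data structures carrying arbitrary metadata. First, a single move changes one key's cost and hence $\Phi$ by at most the number of scales it crosses; since fresh keys arrive at cost $\Theta(\eps^{-1})$, above all scales, and are only cheapened by work, one can charge the potential increase to the work so that each unit of work is charged $O(1)$ of $\Phi$-increase after the insertions are accounted for. Second, by the crowding lemma there are at every moment keys above \emph{every} scale $\sigma_j$, and a fresh insert lands above scale $\sigma_j$ with probability $\gtrsim e^{-\Theta(\sigma_j)}$ unless work is spent to prevent it, so an adversary spreading its deletions across scales can force $\Phi$ down by $\Omega(1)$ per round on each scale it can attack. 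Third, the scales the adversary cannot profitably attack are exactly those at which the algorithm is ``parking'' many keys; but if the expected query time is $Q$ then at most $nQ/\sigma_j$ keys sit at cost $\ge\sigma_j$, so by concavity of $\varphi$ there are only $O(\log\log Q)$ such scales. Combining, $W \gtrsim \sum_{\text{rounds}}(\text{deletion-induced drop} - \text{insertion-induced rise}) \gtrsim T(J - O(\log\log Q))$, and dividing by the number of operations gives amortized update time $\ge \Omega(\log\log\eps^{-1}) - O(\log\log Q)$; for $Q = O(2^{\sqrt{\log\eps^{-1}}})$ one has $\log\log Q = \tfrac12\log\log\eps^{-1} + O(1)$, so this is $\Omega(\log\log\eps^{-1})$.

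I expect the main obstacle to be legs two and three. The static crowding lemma furnishes only a $\Theta(n)$-sized potential — its mass concentrates at the smallest scales — so the entire factor $J = \Theta(\log\log\eps^{-1})$ has to come from the dynamics: one must show the $J$ scales can be ``attacked'' essentially independently, so their per-round $\Omega(1)$ contributions add rather than overlap, while simultaneously keeping the adaptivity of the input under control against the randomness of both the hash functions and the fresh-key stream (so that the crowding bound and the ``an insert lands high unless work is spent'' estimate both hold conditionally on the algorithm's history). Making the charging in leg one tight enough that the insertion-induced rise does not cancel the deletion-induced drop, and nailing the constant that fixes the $2^{\sqrt{\log\eps^{-1}}}$ query regime, is where nearly all of the technical difficulty lies.
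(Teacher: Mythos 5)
Your proposal takes a genuinely different route from the paper (which uses a non‑adaptive random insert/delete stream, doubly‑exponential level thresholds, and a potential built from ``stanzas''---disjoint augmenting paths in a key--slot graph), but as stated it has two gaps I believe are fatal. First, your scales sit in the wrong range. Every threshold satisfies $\sigma_j \le \sigma_J = \Theta(\log\eps^{-1})$, which is far \emph{below} the permitted expected query time $Q = 2^{\sqrt{\log\eps^{-1}}}$, so the query‑time hypothesis gives no upper bound on $|L_{>\sigma_j}|$ for \emph{any} scale $j$ (Markov gives $nQ/\sigma_j \gg n$ throughout). An algorithm that parks every key at cost $\Theta(\log\eps^{-1})$ stays comfortably within the query budget, saturates $\Phi$ at $\approx n(J+1)$, and makes every deleted key carry the same $\varphi$-value $J+1$ as every inserted key, so your accounting forces nothing. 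Note that your crowding lemma only \emph{lower}-bounds $|L_{>\sigma_j}|$, which is the wrong direction for arguing that deleted keys are cheap; the needed upper bound must come from the query budget, and below $Q$ it is vacuous. The paper's thresholds are $2^{2^{L}}, 2^{2^{L+1}},\ldots,2^{2^{2L}}$, running from $\Theta(2^{\sqrt{\log\eps^{-1}}})$ up to $\Theta(\eps^{-1})$, precisely so that Markov against the query budget forces a random present (hence deleted) key to have expected level $O(1)$ while a fresh key, whose first empty probe position is $\approx \eps^{-1}$, must enter at the top level; this is what yields the forced $\Theta(L)$ level‑decrease per round in \cref{lem:impact}.

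Second---and you correctly flag this as the main obstacle, but offer no mechanism for it---the raw potential $\sum_x \varphi(\mathrm{cost}(x))$ can change by $J$ in a \emph{single} move, so ``work $\ge$ potential flux'' loses exactly the factor $J = \Theta(\log\log\eps^{-1})$ the theorem is about: nothing in your argument prevents the algorithm from relocating a freshly inserted key from cost $\eps^{-1}$ down to cost $O(1)$ with one short chain of displacements. Closing this is the entire technical content of the paper's proof: the stanza potentials $\Phi_i$ are engineered so that a move of impact $r$ decreases them by $r\pm O(1)$ in total (\cref{lem:prop-2}), while the path‑counting argument (\cref{lem:path}) shows that a random key admits no ``$i$-short'' path to a high‑level or empty slot, so each operation replenishes the potential by only $o(1)$. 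Your static crowding lemma cannot substitute for this, as it says nothing about the reachability structure that governs what one move can accomplish. Two smaller points: after Yao's principle the probe‑sequence function is an arbitrary deterministic function of the key, so ``probe prefixes behave like independent uniform sequences'' is unjustified---the paper needs a separate reduction to nearly uniform probe sequences (\cref{lem:non-uniform})---and your adaptive deletion rule both complicates the application of Yao's principle and appears to point the wrong way (to force work you want deleted keys to be \emph{cheap}, which a uniformly random deletion already achieves given the query budget).
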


The hard distribution used for proving Theorem \ref{thm:lowerbound/nonuniform} will simply be a sequence of $n^2$ random insertions and deletions  (\cref{dist:hard}). Under this operation sequence, we will show that, if a classical open-addressing hash table has low average probe complexity, it must relocate a large number of keys to other slots during the operations (hence giving it a large insertion/deletion time). By Yao's minimax principle, we also assume the hash table is deterministic.

\begin{center}
  \begin{algorithm}[H]
    \AlgorithmCaption{Distribution}{Hard distribution}{dist:hard}
    \DontPrintSemicolon

    Initialize the hash table with $n$ random keys from the universe $[U]$\;
    \Repeat{$M = n^{2}$ \textup{times}}{
      Delete a random element from the current key set $S$\;
      Insert a random element in $[U] \setminus S$\;
    }
  \end{algorithm}
\end{center}

Recall that $n$ is the maximum number of keys the hash table can store, and assume the load factor is $1 - \eps$ for some $1/n \le \eps \le \Theta(1)$, so the number of slots is $N = (1 + \Theta(\eps)) \cdot n$. There is a \emph{deterministic} function $h$ that maps each key $x \in [U]$ to a probe sequence $\BK{h_i(x)}_{i \ge 1}$. For technical reasons, we allow each entry $h_i(x)$ to be either a slot $s \in [N]$ or ``$\Null$''. Without loss of generality, we assume that there is a \emph{special slot}, say slot $N$: When we insert any key $x$, key $x$ is first put into the special slot, then the algorithm will arrange the keys to move $x$ to a normal slot.

Recall that a key $x$ storing in slot $s = h_i(x)$ is said to have \defn{probe complexity} $i$ (assuming $h_i(x)$ is the first occurrence of $s$ in the probe sequence). When a key is stored in the special slot, we say the key has probe complexity $N$. Any key with probe complexity $i$ will cost the query algorithm $O(i)$ time. Thus, the \emph{average probe complexity} over the current key set measures the query time of the hash table. During insertions and deletions, the hash table may move some keys to other slots, and we define the \defn{switching cost} of this operation to be the number of moved keys. The switching cost is a lower bound of the time spent on the operation.

\smallskip

For any probe-sequence function $h$, integer $i \ge 1$, and slot $s \in [N]$, we define
\[
  q(h, i, s) \defeq n \Pr_{x \in [U]} [\ProbeComplexity(x, s) \le i] = n \Pr_{x \in [U]} [h_k(x) = s \ \textup{for some}\ k \le i].
\]
We say $h$ is \defn{nearly uniform} if $q(h, i, s) \le O(i^{10})$ for all $i$ and $s$. We first assume the function $h$ is nearly uniform and prove the following lower bound. At the end of this section, we will remove this assumption.

\begin{theorem}
  \label{thm:lowerbound/uniform}
  Suppose the universe size $U = \poly n$ is a large polynomial of $n$. Assume there is a classical open-addressing hash table, which stores $n$ keys with load factor $1 - \eps$, uses a nearly uniform probe-sequence function $h$, and has average probe complexity $O\bk[\big]{2^{\sqrt{\log \eps^{-1}}}}$ in expectation at any given moment, then the expected amortized switching cost during each insertion or deletion must be $\Omega(\log \log \eps^{-1})$.
\end{theorem}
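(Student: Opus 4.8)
The plan is to run a potential-function argument against the hard distribution \cref{dist:hard}, in the spirit of the average-log-probe-complexity lower bound of \cite{benderhashing} but re-centered around $\eps^{-1}$ (the number of free slots) rather than the universe size. Set $k \defeq \lfloor \log_2\log_2 \eps^{-1}\rfloor$ and fix a tower of levels $\ell_0 < \cdots < \ell_k$ with $\ell_j = 2^{2^j}$, so $\ell_0 = O(1)$ and $\ell_k = \Theta(\eps^{-1})$. To each stored key $x$ attach the weight $\psi(x) \defeq \max\!\bk{0,\ k - \log_2\log_2 \ProbeComplexity(x)} \in [0,k]$; this is a \emph{convex, decreasing} function of the probe complexity that drops by $\Theta(1)$ each time the probe complexity passes a level $\ell_j$. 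The base potential is $\Phi \defeq \sum_{x\in S}\psi(x)$. (I expect to augment this with a free-slot correction term — see the last paragraph — but let me first lay out the skeleton.) The first observation is that the hypothesis \emph{pins} $\Phi$ from below: applying Jensen's inequality twice to the convex decreasing $\psi$ — once over the current key set, once over the randomness of the operation sequence — and using that the expected average probe complexity is $O(\tau)$ for $\tau = 2^{\sqrt{\log\eps^{-1}}}$, gives $\E[\Phi] \ge n\bk{k - \log_2\log_2 O(\tau)} = n\bk{\tfrac12 k - O(1)}$ at every step, while deterministically $\Phi \le nk$. Thus $\Phi$ lives in a band of width $\Theta(nk)$, and burning through this band is what will generate the lower bound.

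Next comes the drift lemma, which is the technical core. Consider one operation: delete a uniformly random present key $x^-$, then insert a uniformly random absent key $x^+$. Deleting $x^-$ removes $\psi(x^-)$ from $\Phi$; conditioned on the configuration its expected weight equals the current $\Phi/n$, whose expectation is $\ge \tfrac12 k - O(1)$, so deletions alone drain $\Omega(k)$ units of potential per operation in expectation. Since $\Phi$ must stay in its band, each operation's insertion-plus-rearrangement must, on average, \emph{re-create} $\Omega(k)$ units of potential, i.e. relocate keys to substantially smaller probe complexity. The claim I would prove is that this is expensive: if the operation moves $W$ keys, the expected potential it can create is $O(W) + o(k)$. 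Balancing drain against creation then forces the amortized expected switching cost $\E[W]$ to be $\Omega(k)$. The reason creation is expensive is free-slot scarcity: since $x^+$ is fresh and $U = \poly n$ is a large polynomial, near-uniformity of $h$ makes the probe sequence $h(x^+)$ behave like a sequence of near-independent near-uniform slots that is essentially independent of the current configuration, so the first $\ell_j$ probes of $x^+$ contain a free slot with probability only $\Theta(\ell_j \eps) = o(1)$ for every $j < k$. Hence routing $x^+$ (or any key displaced along the way) down past a level almost always forces a displacement rather than a free placement; a displaced key either cascades further down or is itself demoted, paying back most of the potential it would have granted. Carefully telescoped, the ``free'' escapes contribute a convergent geometric sum of total size $O(1)$ per operation and the ``displacement'' escapes telescope against potential losses, leaving the $O(W)$ bound.

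The remaining step is bookkeeping: summing the drift lemma over the $M = n^2$ operations, the expected potential drained by deletions is $\Omega(kM)$, the expected potential created is $O\!\bk{\sum_t\E[W_t]} + o(kM)$, and the two differ by at most $|\Phi_M - \Phi_0| \le nk$. Since $M = n^2 \gg n$, this forces $\sum_t \E[W_t] = \Omega(kM)$, i.e. the amortized expected switching cost per operation is $\Omega(k) = \Omega(\log\log\eps^{-1})$, which is the claimed statement (recall that throughout we are working under the assumed near-uniformity of $h$, which is removed later in the section).

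The genuinely hard part is the ``$O(W)+o(k)$'' bound, and specifically the fact glossed over above: once the table is in a high-potential (query-efficient) configuration, its free slots are systematically pushed \emph{out} of shallow probe positions, so $\Pr[h_i(x^+)\text{ is free}]$ is in fact \emph{below} $\eps$ for small $i$ — exactly the regime where re-creation is needed. Capturing this correlation is, I believe, what forces an augmented potential: I would add a term $\sum_{\text{free }s}\rho(s)$, where $\rho(s)$ rewards a free slot for lying deep in the probe sequences of many keys and penalizes one lying in shallow, valuable positions, tuned so that the combined potential satisfies a Kirchhoff-type conservation law — any key relocation that spends no work leaves it unchanged. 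Making $\rho$ simultaneously (i) controllable via near-uniformity, (ii) still compatible with $\Phi$ being pinned in a $\Theta(nk)$-wide band under the hypothesis, and (iii) sharp enough for the telescoping, is where I expect the ``several significant changes'' over \cite{benderhashing} to live. A secondary subtlety is that a rearrangement need not be a simple augmenting path — it can be an arbitrary permutation touching many free slots at once — so the charging in the drift lemma must be carried out on (key, slot) incidences rather than along a single path.
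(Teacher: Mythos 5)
Your skeleton is the same as the paper's: a bounded potential that each operation drains by $\Omega(\log\log\eps^{-1})$ in expectation (your deletion-drain step is the paper's ``impact'' lemma, with signs flipped), combined with a drift lemma saying that re-creating potential costs $\Omega(1)$ moves per unit except for an $o(\log\log\eps^{-1})$ leakage term. The gap is that this drift lemma --- your ``$O(W)+o(k)$'' claim --- \emph{is} the theorem; everything before it is bookkeeping, and you have asserted rather than proven it. Two specific points fail as stated. First, ``a displaced key either cascades further down or is itself demoted, paying back most of the potential'' is not true move-by-move: the algorithm can evict a key $y$ from slot $s$ and park it in another slot where $y$ has equally small probe complexity, paying nothing back; whether such an escape exists depends on the adversarially arranged configuration, so the telescoping does not close without a structure that certifies such escapes are globally scarce. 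Second, ``$h(x^+)$ behaves like a sequence of near-uniform slots essentially independent of the current configuration'' is exactly the point that needs an argument: the configuration is chosen by the algorithm \emph{after} seeing $h$, so it can be heavily correlated with the probe sequences of not-yet-inserted keys. What saves the argument is not independence but counting --- the number of keys in $[U]$ from which one can reach an empty-or-high-level slot through a short chain of low-level displacements is a vanishing fraction of $n$, hence a uniformly random $x^+$ (or $x^-$) is unlikely to be one of them.

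The paper's construction makes both points precise with a different object than your proposed free-slot correction $\rho(s)$. It defines, for each level $i$, a potential $\Phi_i$ equal to the maximum total potential of a collection of \emph{disjoint stanzas}: augmenting-path-like slot sequences $s_1,\dots,s_j$ whose endpoints have level $\ge i$ and whose internal hops all have level $\le i-10$, each stanza worth $1$ minus an exponentially weighted penalty for its hops. Property 2 (a move of impact $r$ decreases $\sum_i\Phi_i$ by $r\pm O(1)$) is proven by explicit surgery on the maximizing collection --- splitting or concatenating the stanzas through the affected slot --- and Property 1 (each random insertion/deletion creates only $o(1)$ expected new potential) is proven by bounding the number of vertices reachable via ``$i$-short paths'' in a key-slot graph: after discarding high-degree slots (controlled via near-uniformity and a Chernoff bound), the count of short paths from the $n/2^{\Omega(2^{L+i})}$ ``black'' endpoints is small enough that a random key avoids them with probability $1-1/\poly L$. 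Your $\rho(s)$ idea would need to reproduce exactly this reachability/counting analysis to satisfy your conditions (i)--(iii), and you correctly flag that this is where the difficulty lives --- but as written the proposal does not supply it, so the proof is incomplete at its central step.
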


To prove this theorem, we let the hash table take \cref{dist:hard} as input, and show a lower bound on the average switching cost per insertion and deletion. We start by setting up main concepts in our analysis.

\paragraph{Levels.}
We define $L \defeq \ceil{(\log \log \eps^{-1}) / 2}$. Suppose a key $x$ is stored in slot $s = h_k(x)$ with probe complexity $k$, i.e., $k = \min \BK{k' \in \N_+ \mid h_{k'}(x) = s}$. We define the \defn{level} of key $x$ stored in slot $s$, written $\l(x, s)$, as follows:
\begin{itemize}
\item If $k \le 2^{2^{L}}$, the level is $\l(x, s) = 0$.
\item If $k > 2^{2^{2L}}$, the level is $L$. As a special case, any key stored in the special slot has level $L$.
\item Otherwise, if $k \in (2^{2^{L + i - 1}}, 2^{2^{L + i}}]$ for some $i \in \N_+$, the level is $i$.
\end{itemize}
Moreover, we define the level of a slot $s$ in a given state to be the level of the key stored in slot $s$, if slot $s$ is not empty; the level of an empty slot is defined as $L$. It is clear that the level of the special slot is always $L$.

When the algorithm moves a key $x$ from one slot to another, it may change the level of the key. We define the \defn{impact} of the move to be how much the level of $x$ \emph{decreases} (negative impact means the level increases). We have the following lemma from \cite{benderhashing} with almost the same proof.

\begin{lemma}[{\cite[Lemma 5]{benderhashing}}]
  \label{lem:impact}
  Let $\Psi$ be the sum of impacts of all the moves that the algorithm performs during the $M$ insertions and deletions. Then
  \[
    \E[\Psi] = \Theta(ML).
  \]
\end{lemma}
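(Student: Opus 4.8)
The plan is to run a potential-function argument, essentially the one behind \cite[Lemma~5]{benderhashing}, adapted to the present level definition. For any state of the hash table I would define the potential $\Phi \defeq \sum_{s \in [N]} \ell(s)$, where $\ell(s)$ is the level of slot $s$ (the level of the key it stores, or $L$ if $s$ is empty), so that $0 \le \Phi \le NL$ at all times. I would then track how $\Phi$ changes as Distribution~\ref{dist:hard} runs, decomposing every operation into ``raw'' insert/delete steps and ``moves'': (i) the raw insertion of a key first parks it in the special slot, which has level $L$ whether empty or occupied, so this step contributes $0$ to $\Phi$; (ii) the raw deletion of a key $x$ from slot $s$ changes $\Phi$ by exactly $L - \ell(x,s) \in [0,L]$; and (iii) by the very definition of impact, each move of a key contributes $-\mathrm{impact}$ to $\Phi$ (this holds for every move, including ones that temporarily park a key in the special slot). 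Telescoping over the whole run, with $\Phi^{(0)}$ the potential just after initialization and $\Phi^{(M)}$ the potential at the end, gives the identity
\[
  \Psi \;=\; \sum_{t=1}^{M}\bigl(L - \ell(x^{(t)}_{\mathrm{del}}, s^{(t)}_{\mathrm{del}})\bigr)\;-\;\bigl(\Phi^{(M)} - \Phi^{(0)}\bigr).
\]
From this the upper bound is immediate: the sum has $M$ terms each in $[0,L]$ and $|\Phi^{(M)}-\Phi^{(0)}| \le NL = O(nL)$, so $\E[\Psi] \le ML + O(nL) = O(ML)$ since $M = n^2$ and $N = \Theta(n)$.

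For the matching lower bound, the key observation is that the deleted key $x^{(t)}_{\mathrm{del}}$ is a uniformly random element of the current key set, so $\E[\ell(x^{(t)}_{\mathrm{del}}, s^{(t)}_{\mathrm{del}})]$ equals the expected average level of the keys in the table at time $t$; I would show this expected average level is $O(1)$. A key has level $\ge 2$ only if its probe complexity exceeds $2^{2^{L+1}} = \bigl(2^{2^L}\bigr)^2$, whereas $2^{2^L} \ge 2^{\sqrt{\log \eps^{-1}}}$ by the choice of $L$, and the hypothesis of Theorem~\ref{thm:lowerbound/uniform} bounds the expected average probe complexity at any moment by $O(2^{\sqrt{\log\eps^{-1}}}) = O(2^{2^L})$. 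Markov's inequality then shows the expected fraction of keys at level $\ge 2$ is $O\bigl(2^{2^L}/2^{2^{L+1}}\bigr) = O(2^{-2^L})$. Since $\ell(x,s) \le 1$ whenever a key's probe complexity is at most $2^{2^{L+1}}$ and $\ell(x,s) \le L$ always, the expected average level is at most $1 + L\cdot O(2^{-2^L}) = 1 + o(1)$. Hence $\E[L - \ell(x^{(t)}_{\mathrm{del}}, s^{(t)}_{\mathrm{del}})] \ge L - 1 - o(1) = \Omega(L)$ in the regime of interest where $\eps$ is small and so $L$ is large; summing over the $M$ deletions and subtracting the $O(nL) = o(ML)$ boundary term yields $\E[\Psi] = \Omega(ML)$. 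Combining the two bounds gives $\E[\Psi] = \Theta(ML)$.

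I expect the main difficulty to be twofold. First, one must check carefully that \emph{every} elementary action the algorithm can take contributes to $\Phi$ exactly as claimed, so that the telescoping identity is exact; the slightly delicate points are that insertions are, by design, defined to route the new key through the special slot (making their raw contribution $0$ rather than $-L$), and that transient moves into and out of the special slot still obey the ``$-\mathrm{impact}$'' accounting. Second, and more importantly, the lower bound must produce $\E[L - \ell_{\mathrm{del}}] = \Omega(L)$ rather than merely a positive quantity, and this is exactly where the $2^{\sqrt{\log\eps^{-1}}}$ query-time assumption is essential: it is crucial that the probe-complexity threshold separating level $\le 1$ from level $\ge 2$ is \emph{quadratically} larger than the permitted average probe complexity, so that Markov's inequality forces the level-$\ge 2$ fraction to be doubly-exponentially small in $L$, and therefore contributes only $o(1)$ to the average level even after multiplying by $L$.
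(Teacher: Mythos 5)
Your proof is correct and follows essentially the same potential argument as the paper's (sketched) proof: you track the sum of levels, observe that each raw insertion/deletion contributes $+\,(L - \ell_{\mathrm{del}})$ while each move contributes exactly $-\,\mathrm{impact}$, bound the boundary term by $O(nL) = o(ML)$, and use the hypothesis on expected average probe complexity together with Markov's inequality (exploiting the quadratic gap between $2^{2^{L}}$ and the level-$2$ threshold $2^{2^{L+1}}$) to show $\E[\ell_{\mathrm{del}}] = O(1)$. The only cosmetic difference is that you sum levels over slots (with empty slots at level $L$) rather than over keys, which is equivalent up to the bookkeeping of the boundary term.
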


\begin{proofsketch}
  Let $J$ be the sum of levels of all keys stored in the current hash table. When we insert a key $x$, it is first placed in the special slot with level $L$, thus increasing $J$ by $L$. When we delete a key $x$, the expected level of $x$ is $O(1)$ since the expected average probe complexity at any moment is small, so removing $x$ from the slot containing it will decrease $J$ by $O(1)$ in expectation. Therefore, the $M$ insertions and deletions will increase $J$ by $\Theta(ML)$, implying that the hash table should rearrange the keys to decrease the sum of levels by $\Theta(ML)$. The lemma follows.
\end{proofsketch}

\paragraph{Potential function.}
To prove \cref{thm:lowerbound/uniform}, we will construct a potential function $\Phi$ of any given state of the hash table, satisfying three properties:
\begin{itemize}
\item \textbf{Property 1.} Each insertion or deletion increases $\Phi$ by $O(1)$ in expectation.
\item \textbf{Property 2.} If the algorithm performs a move with impact $r$ (i.e., the level of the moved key is decreased by $r$), $\Phi$ will be decreased by $r \pm O(1)$.
\item \textbf{Property 3.} $0 \le \Phi \le O(nL)$ always holds.
\end{itemize}

Once we have a potential function $\Phi$ satisfying all three properties, the following proposition from \cite{benderhashing} will imply \cref{thm:lowerbound/uniform}.

\begin{proposition}[Implicit in \cite{benderhashing}]
  \label{prop:prop-imply-thm}
  If \cref{lem:impact} holds and there is a potential function $\Phi$ satisfying Properties 1 to 3 above, then \cref{thm:lowerbound/uniform} holds.
\end{proposition}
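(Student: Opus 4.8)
The plan is to derive \cref{thm:lowerbound/uniform} from the three properties by a standard amortized-analysis / potential-function argument, which is precisely what \cref{prop:prop-imply-thm} asserts is ``implicit in \cite{benderhashing}''. First I would set up the bookkeeping: let $\Phi_t$ denote the potential after the $t$-th operation of \cref{dist:hard}, and track how $\Phi$ evolves across the $M = n^2$ operations. Each operation consists of a ``placement'' part (inserting the new key into the special slot, or removing the deleted key) and a ``rearrangement'' part (a sequence of moves of individual keys). By Property~1, the placement part contributes $O(1)$ to $\E[\Phi_t - \Phi_{t-1}]$, summing to $O(M)$ over all operations. By Property~2, each move of impact $r$ changes $\Phi$ by $-(r \pm O(1))$, so if the $t$-th operation performs $c_t$ moves with total impact $\Psi_t = \sum r$, then the rearrangement part changes $\Phi$ by $-\Psi_t \pm O(c_t)$. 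Telescoping over all $M$ operations gives
\[
\Phi_M - \Phi_0 = O(M) - \E\!\left[\sum_{t} \Psi_t\right] \pm O\!\left(\E\!\left[\sum_t c_t\right]\right).
\]

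Next I would invoke the two ingredients we are allowed to assume. By Property~3, the left-hand side satisfies $|\Phi_M - \Phi_0| \le O(nL)$, and since $L = \Theta(\log\log\eps^{-1})$ and $M = n^2 \gg nL$, this term is negligible compared to $M$. By \cref{lem:impact}, $\E[\sum_t \Psi_t] = \E[\Psi] = \Theta(ML)$. Substituting and rearranging, we get
\[
\Theta(ML) = \E[\Psi] = O(M) + O\!\left(\E\!\left[\sum_t c_t\right]\right) + O(nL),
\]
and since $ML$ dominates both $M$ and $nL$ (as $L \ge 1$ and $M \gg n$), we conclude $\E[\sum_t c_t] = \Omega(ML)$. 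Finally, $\sum_t c_t$ is exactly the total switching cost over all $M = n^2$ insertions and deletions, and since switching cost lower-bounds operation time, the amortized expected switching cost per operation is $\Omega(ML / M) = \Omega(L) = \Omega(\log\log\eps^{-1})$, which is the statement of \cref{thm:lowerbound/uniform}.

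The main obstacle here is not really in this derivation — which is genuinely routine once the three properties and \cref{lem:impact} are in hand — but rather in making sure the accounting is airtight, in particular the handling of the $\pm O(c_t)$ slack terms from Property~2. One has to verify that the cumulative slack $O(\E[\sum_t c_t])$ can be absorbed: this works because it appears with a small constant relative to the $\Theta(ML)$ main term, so after moving it to the other side it only changes the implied constant in $\Omega(ML)$ (formally, one picks the constants so that the slack is at most, say, half of the $\Theta(ML)$ lower bound coming from $\E[\Psi]$ minus the $O(M + nL)$ terms). A secondary subtlety is the direction of \cref{lem:impact}: it gives $\E[\Psi] = \Theta(ML)$, and we only need the lower bound $\E[\Psi] = \Omega(ML)$ here, so that direction is what drives the argument. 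Since \cref{prop:prop-imply-thm} explicitly packages all of this, I would simply present the telescoping computation above as the proof and cite \cref{lem:impact} and Properties~1--3 at the appropriate points.
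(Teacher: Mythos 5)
Your proposal is correct and is essentially the paper's own argument: the paper tracks the combined quantity $\Psi + \Phi$, notes it rises by $\Theta(ML)$ over the input (by \cref{lem:impact} and Property~3) while each operation and each move can raise it by only $O(1)$ (Properties~1 and~2), and concludes there are $\Omega(ML)$ moves; your telescoping of $\Phi$ followed by rearranging is the same computation written out. The handling of the $\pm O(c_t)$ slack and the observation that only the lower-bound direction of \cref{lem:impact} is needed are both fine.
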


\begin{proofsketch}
  At a given moment, let $\Psi$ be the total impact the algorithm has made so far, and let $\Phi$ be the potential function on the current state. By \cref{lem:impact} and Property 3 of $\Phi$, the value $\Psi + \Phi$ is increased by $\Theta(ML)$ during all $M$ insertions and deletions, but each operation and each move can only increase it by $O(1)$. Therefore, the number of moves during all operations is at least $\Omega(ML)$.
\end{proofsketch}

The only remaining step to prove \cref{thm:lowerbound/uniform} is to construct such a potential function.

\subsection{Constructing the potential function}

In this subsection, we construct the potential function $\Phi$ based on the concept of \emph{stanzas}.

\begin{definition}[Stanzas]
  Fix a state of the hash table. Let $10 \le i \le L$ and $j \ge 2$ be integers. We say a sequence of slots $s_1, s_2, \ldots, s_j$ is an $i$-\defn{stanza}, if the following conditions hold:
  \begin{itemize}
  \item $s_1$ and $s_j$ have level $\ge i$, while $s_2, \ldots, s_{j-1}$ are \emph{distinct} slots with level $\le i - 10$ (we allow $s_1 = s_j$).
  \item Slots $s_1, \ldots, s_{j - 1}$ are non-empty. Assuming keys $x_1, \ldots, x_{j - 1}$ are stored in slots $s_1, \ldots, s_{j - 1}$ respectively, then $\l(x_k, s_{k + 1}) \le i - 10$ for every $k \in [j - 1]$.
  \end{itemize}
  In such a stanza, we call $s_1$ the \defn{starting slot}, $s_2, \ldots, s_{j - 1}$ the \defn{internal slots}, and $s_j$ the \defn{final slot}. We say a collection of stanzas is \defn{disjoint} if each slot with level $\ge i$ is used at most once as the starting slot and at most once as the final slot, whereas each slot with level $\le i - 10$ is used at most once as the internal slot.
\end{definition}

\begin{definition}[Potential of stanzas]
  \label{def:potential-stanza}
  For an $(i + 10)$-stanza $s_1, \ldots, s_j$, assuming $x_k$ is the key stored in slot $s_k$ for $1 \le k \le j - 1$, we define its \defn{potential} to be
  \[
    \phi(s_1, \ldots, s_j) \defeq 1 - \sum_{k = 2}^{j - 1} \sqrt{2}^{\l(x_k, s_k) - i} - \sum_{k = 1}^{j - 1} \sqrt{2}^{\l(x_k, s_{k+1}) - i}.
  \]
\end{definition}

\begin{definition}[Potential function]
  We define the potential of a collection of disjoint stanzas to be the sum of potentials of the stanzas in the collection. For $10 \le i \le L$, let $\Phi_i$ denote the maximum potential of a collection of disjoint $i$-stanzas in the current state. We define the potential function
  \[
    \Phi \defeq \sum_{i = 10}^{L} \Phi_i.
  \]
\end{definition}

To analyze the properties of $\Phi$, we define the \defn{key-slot graph} $G$ as follows.

\begin{definition}[Key-slot graph]
  At any given moment, we use $A_1$ to denote the set of $n$ keys in the current key set; use $A_2$ to denote $n$ random keys in $[U] \setminus A_1$, in which we will insert a random one in the next insertion. The \defn{key-slot graph} of the given state is a bipartite graph $G$ with $A \defeq A_1 \cup A_2$ being the left-vertices and the set $[N]$ of all slots being the right-vertices. For each key $a \in A$ and each slot $s \in [N]$, if $\l(a, s) < L$, there is an edge in $G$ connecting them with level $\l(a, s)$.
\end{definition}

\begin{definition}
  For each vertex $x \in G$, we define its \defn{level-$i$ degree} as the number of edges with level \emph{at most} $i$ associated with $x$, written $\deg_i(x)$. We say vertex $x$ is \defn{low-degree}, if for all $0 \le i < L$, there is $\deg_i(x) \le 2^{2^{L + i + 5}}$. Otherwise, we say $x$ is \defn{high-degree}.
\end{definition}

Note that the level-$i$ degree of a \emph{left-vertex} $a \in A$ is at most $2^{2^{L + i}}$ deterministically, so all left-vertices are low-degree. For right-vertices, the expected level-$i$ degree is also bounded by $\Theta\bk[\big]{(2^{2^{L + i}})^{10}}$ since the probe-sequence function is nearly uniform.

\begin{lemma}
  \label{lem:right-degree}
  Let $S$ be the set of high-degree (right-)vertices in $G$. With probability $1 - 1 / \poly L$, the total number of neighbors of vertices in $S$ is
  \[
    \sum_{s \in S} \deg_{L - 1}(s) \le \frac{n}{2^{\Omega(2^{2^{L}})}}.
  \]
\end{lemma}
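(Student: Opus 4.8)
The plan is to bound the expectation of $\sum_{s\in S}\deg_{L-1}(s)$ and then apply Markov's inequality; the failure probability this produces will in fact be doubly exponentially small in $2^L$, far stronger than the $1/\poly L$ claimed. We may assume $L$ exceeds any fixed constant, since otherwise \cref{thm:lowerbound/uniform} only asserts an $\Omega(1)$ bound and is trivial. Since each left-vertex $a\in A$ satisfies $\deg_i(a)\le 2^{2^{L+i}}$ deterministically, all left-vertices are low-degree, so $S$ consists only of slots. Two distributional facts drive the argument: the set $A=A_1\cup A_2$ is a uniformly random $2n$-subset of $[U]$ (the hard distribution keeps the current key set uniformly random), hence each element of $A$ is marginally uniform on $[U]$, and the indicators $\BK{x\in A}_{x\in[U]}$ are negatively associated.

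\emph{Step 1 (per-slot high-degree probability).} Fix a slot $s$ and a level $i\in\{0,\dots,L-1\}$, and write $\deg_i(s)=\sum_{x\in[U]}\mathbf{1}[x\in A]\,\mathbf{1}[\ProbeComplexity(x,s)\le 2^{2^{L+i}}]$. Marginal uniformity together with near-uniformity bounds the mean by
\[
  \mu\defeq\E[\deg_i(s)]=2\,q\bk[\big]{h,2^{2^{L+i}},s}\le O\bk[\big]{2^{10\cdot 2^{L+i}}}.
\]
This is a sum of negatively associated indicators, so a Chernoff bound applies; the crucial point is that the high-degree threshold $T\defeq 2^{2^{L+i+5}}=2^{32\cdot 2^{L+i}}$ exceeds $\mu$ by a factor $2^{\Omega(2^{L+i})}$, so $\Pr[\deg_i(s)>T]\le(e\mu/T)^T\le 2^{-2^{32\cdot 2^{L+i}}}$. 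A union bound over $i$ (the $i=0$ term dominates) gives $\Pr[s\in S]\le 2^{-2^{31\cdot 2^L}}$.

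\emph{Step 2 (expected edges into $S$, then Markov).} For each slot $s$, Cauchy--Schwarz gives $\E[\deg_{L-1}(s)\,\mathbf{1}[s\in S]]\le\E[\deg_{L-1}(s)^2]^{1/2}\Pr[s\in S]^{1/2}$. Negative association yields $\Var(\deg_{L-1}(s))\le\E[\deg_{L-1}(s)]\le O(2^{10\cdot 2^{2L-1}})=2^{O((2^L)^2)}$, hence $\E[\deg_{L-1}(s)^2]^{1/2}=2^{O((2^L)^2)}$; since $(2^L)^2$ is negligible next to $2^{31\cdot 2^L}$, combining with Step 1 gives $\E[\deg_{L-1}(s)\,\mathbf{1}[s\in S]]\le 2^{-2^{30\cdot 2^L}}$. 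Summing over the $N=O(n)$ slots, $\E\bk[\big]{\sum_{s\in S}\deg_{L-1}(s)}\le O(n)\cdot 2^{-2^{30\cdot 2^L}}$, and Markov's inequality with threshold $n/2^{2^{2^L}}$ yields
\[
  \Pr\bk*{\sum_{s\in S}\deg_{L-1}(s)>\frac{n}{2^{2^{2^L}}}}\le O(1)\cdot 2^{2^{2^L}}\cdot 2^{-2^{30\cdot 2^L}}\le 2^{-2^{29\cdot 2^L}}\ll\frac1{\poly L},
\]
using that $2^{2^L}$ is negligible next to $2^{30\cdot 2^L}$. Taking the constant hidden in the lemma's $\Omega(\cdot)$ to be $1$ finishes the proof.

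\emph{Main obstacle.} The only real work is the bookkeeping of iterated exponentials. One must check that the ``$+5$'' slack built into the high-degree threshold $2^{2^{L+i+5}}$ makes it a factor $2^{\Omega(2^{L+i})}$ larger than the mean $2^{O(2^{L+i})}$, which is exactly what upgrades the Chernoff estimate to a \emph{doubly} exponential tail $2^{-2^{\Omega(2^L)}}$; this tail is then strong enough to absorb, simultaneously, the $N$-fold sum over slots, the second-moment factor $2^{O((2^L)^2)}$, and the $2^{2^{2^L}}$ loss incurred by Markov's inequality. Everything else (the negative-association Chernoff bound, the variance bound, and Cauchy--Schwarz) is routine.
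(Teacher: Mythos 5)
Your proposal is correct and follows essentially the same route as the paper's proof: bound $\E[\deg_i(s)]$ via near-uniformity, apply a Chernoff bound against the threshold $2^{2^{L+i+5}}$ to get a doubly-exponentially small high-degree probability, bound $\E\bigl[\sum_{s\in S}\deg_{L-1}(s)\bigr]$, and finish with Markov. The only (immaterial) difference is in how the event $\{s\in S\}$ is decoupled from the value $\deg_{L-1}(s)$: you use Cauchy--Schwarz with a second-moment bound, whereas the paper splits $S$ into slots whose level-$(L-1)$ degree exceeds its threshold (handled by integrating the Chernoff tail) and the rest (handled by $\E[|S_2|]$ times the deterministic cap $2^{2^{2L+4}}$); both losses are polynomial and are absorbed by the $2^{-2^{\Omega(2^L)}}$ tail.
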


\begin{proof}
  We fix a right-vertex $s \in [N]$ and analyze its level-$i$ degree. Since $U$ is a large polynomial of $n$, without loss of generality, the left-vertices can be viewed as $2n$ independently random keys from $[U]$, written $\{a_1, \ldots, a_{2n}\}$. Let $X_k \defeq \ind[\l(a_k, s) \le i]$, then
  \[
    \E[X_k] = \Pr_{a_k \in [U]}\Bk[\big]{\textup{probe-complexity}(a_k, s) \le 2^{2^{L + i}}} \le \Theta\bk[\bigg]{\frac{(2^{2^{L + i}})^{10}}{n}} \le \frac{2^{2^{L + i + 4}}}{n}
  \]
  by the near-uniformity of the probe-sequence function.
  Then, $X \defeq \sum_{k=1}^{2n} X_k$ is the level-$i$ degree of $s$, whose expectation does not exceed $O\bk{2^{2^{L + i + 4}}}$. By a Chernoff bound, for each $D \ge \Omega(2^{2^{L + i + 4}})$,
  \[
    \Pr[X > D] \le 2^{-\Omega(D)}. \numberthis \label{eq:chernoff}
  \]
  Substituting $D = \Theta(2^{2^{L + i + 5}})$ into the inequality, we know
  \[
    \Pr[X > 2^{2^{L + i + 5}}] \le 2^{-\Omega(2^{2^{L + i + 5}})} \le 2^{-\Omega(2^{2^L})}.
  \]
  Taking a union bound over $0 \le i < L$, we know that
  \[
    \Pr[\textup{$s$ is high-degree}] \le 2^{-\Omega(2^{2^L})} \cdot L = 2^{-\Omega(2^{2^L})}. \numberthis \label{eq:high-degree-prob}
  \]
  Moreover, taking summation of \eqref{eq:chernoff} over all integers $D > 2^{2^{L + i + 5}}$, we know
  \[
    \E\Bk[\Big]{\deg_i(s) \cdot \ind\Bk[\big]{\deg_i(s) > 2^{2^{L + i + 5}}}} \le 2^{-\Omega(2^{2^L})}. \numberthis \label{eq:sum-high-degree}
  \]

  To bound the number of neighbors of $S$ (the set of high-degree nodes), we divide $S$ into two parts: Let $S_1$ be the set of vertices $x$ satisfying $\deg_{L - 1}(x) > 2^{2^{2L + 4}}$, i.e., they are high-degree because their level-$(L - 1)$ degrees are too large; let $S_2 = S \setminus S_1$. We bound the number of neighbors of the two parts separately.

  For $S_1$, we take summation of \eqref{eq:sum-high-degree} over all right-vertices with $i = L - 1$, obtaining
  \[
    \E\Bk*{\sum_{s \in S_1} \deg_{L - 1}(s)} \le n \cdot 2^{-\Omega(2^{2^L})}. \numberthis \label{eq:deg-s1}
  \]

  For $S_2$, we take summation of \eqref{eq:high-degree-prob} to obtain
  \[
    \E\Bk*{\sum_{s \in S_2} \deg_{L - 1}(s)} \le \E\Bk*{|S_2|} \cdot 2^{2^{2L + 4}} \le n \cdot 2^{-\Omega(2^{2^L})} \cdot 2^{2^{2L + 4}} = n \cdot 2^{-\Omega(2^{2^L})}. \numberthis \label{eq:deg-s2}
  \]

  Adding \eqref{eq:deg-s1} and \eqref{eq:deg-s2} together and applying a Markov inequality, the lemma follows.
\end{proof}

\paragraph{\boldmath $i$-short paths.}

Assume $s_1, \ldots, s_{j}$ is an $(i + 10)$-stanza with positive potential, and $x_k$ is stored in slot $s_k$ for $k \in [j - 1]$. Then, there must be a path in $G$ from $x_1$ to $s_j$ consisting of:
\begin{itemize}
\item no edges with level $> i$;
\item at most 1 edge with level $i$;
\item at most $\sqrt{2}$ edges with level $i - 1$;
\item at most $\sqrt{2}^2$ edges with level $i - 2$;
\item $\cdots$
\end{itemize}
Formally, for every integer $0 \le \delta \le i$, the number of edges with level $i - \delta$ in the path cannot exceed $\sqrt{2}^{\delta}$. We say a path is \defn{$i$-short} if it satisfies this condition.

Next, we show that a random key $a_1 \in A_1$ or $a_2 \in A_2$ is unlikely to reach any right-vertex (slot) $s \in [N]$ with level $\ge i + 10$ via an $i$-short path.

\begin{lemma}
  \label{lem:path}
  Let $a_1 \in A_1$ and $a_2 \in A_2$ be random keys, and $0 \le i \le L - 10$ be an integer. With probability $1 - 1 / \poly L$, neither $a_1$ nor $a_2$ can reach a right-vertex (slot) in the key-slot graph $G$ with level $\ge i + 10$ via an $i$-short path.
\end{lemma}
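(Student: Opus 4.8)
The plan is to bound the probability that a random left-vertex reaches a far-away right-vertex via an $i$-short path by a branching-process/first-moment argument, starting from the observation that $i$-short paths are very sparse: for every $\delta$, the path uses at most $\sqrt{2}^{\delta}$ edges of level $i - \delta$, so the total number of edges is at most $\sum_{\delta \ge 0} \sqrt 2^{\delta}$ truncated appropriately—but more importantly, the number of \emph{level-0 through level-$i$} edges is governed geometrically. First I would condition on the high-probability event from \cref{lem:right-degree} that the high-degree right-vertices have only $n / 2^{\Omega(2^{2^L})}$ neighbors in total, and also condition (again with probability $1 - 1/\poly L$, via a union bound over the polynomially many moments in \cref{dist:hard}) on every right-vertex we actually traverse being low-degree; equivalently, I would first show that the probability a random $a_1$ or $a_2$ even \emph{touches} a high-degree right-vertex along any short path is $1/\poly L$, using that the total neighborhood of $S$ is a $2^{-\Omega(2^{2^L})}$ fraction of all keys.

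Next, restricting to the low-degree regime, I would set up the enumeration over $i$-short paths explicitly. An $i$-short path from $a_1$ to a level-$\ge i+10$ slot $s$ alternates key-vertices and slot-vertices; I would walk along it and bound the number of choices at each step. At a left-vertex $a$, the number of level-$(i-\delta)$ edges is at most $2^{2^{L + i - \delta}}$ deterministically; at a low-degree right-vertex $s'$, the number of level-$(i-\delta)$ edges is at most $2^{2^{L+i-\delta+5}}$. Meanwhile the \emph{probability} that a designated edge out of a left-vertex lands on a designated slot is $\Theta(\text{something}/n)$, and—crucially—for the edges leaving a right-vertex toward a fresh left-vertex, since left-vertices are a random $2n$-subset of $[U]$, the probability that a specific further key is adjacent at level $\le i-\delta$ to a given slot is $\le 2^{2^{L+i-\delta+4}}/n$. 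So each additional "hop" contributes a branching factor of roughly $(\text{degree bound}) \times (\text{per-key probability}) = 2^{2^{L+i-\delta+O(1)}} \cdot 2^{2^{L+i-\delta+O(1)}}/n = 2^{2^{O(L+i)}}/n$, which is $\ll 1$ because $n$ is a large polynomial and $2^{2^{O(L+i)}} = 2^{2^{O(\log\log\eps^{-1})}} = \poly(\eps^{-1}) = n^{o(1)}$ under the running assumption $\eps = n^{-o(1)}$. Summing the resulting geometric series over all path lengths and over all target slots $s$ with level $\ge i+10$ (there are at most $N$ of them), the union bound gives total probability $n^{o(1)} \cdot N \cdot (n^{o(1)}/n)^{\Omega(1)} = n^{-\Omega(1)} \le 1/\poly L$; the "$\Omega(1)$" exponent comes from the fact that reaching level $i+10$ from level $0$ forces at least a constant number of hops (indeed, an $i$-short path cannot climb from level $0$ to level $i+10$ in fewer than some constant number of edges once we account for the $\sqrt 2^{\delta}$ budget—reaching a slot whose level is $\ge i+10$ means the path's last edge has level $< L$ but the slot's stored key has probe complexity $> 2^{2^{2L}}$, which I would argue cannot happen at all via edges of level $\le i$, or happens only through a genuinely long excursion). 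Finally I would take a union bound over both $a_1 \in A_1$ and $a_2 \in A_2$ (a factor of $2$) and over the at-most-$\poly n$ moments, absorbing everything into $1/\poly L$ by choosing $L = \Theta(\log\log\eps^{-1})$ large relative to any fixed polynomial in the exponents—wait, more carefully, the $1/\poly L$ in the statement is weak enough that the dominant $n^{-\Omega(1)}$ bound easily suffices.

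The main obstacle I expect is making the branching-factor bookkeeping honest: one has to be careful that the "per-hop probability $\le 2^{2^{L+i-\delta+4}}/n$" for edges into fresh left-vertices is applied only to keys not yet used on the path (so that independence/negative-association of the random key choices is available), and that the deterministic degree bound $2^{2^{L+i}}$ for left-vertices is used for the \emph{enumeration} of which next slot to go to while the near-uniformity bound is used for the \emph{probability}. A secondary subtlety is handling the slot $s$ itself: since we want $s$ to have level $\ge i+10$, and the last edge of the path has level $\le i$, the key currently stored in $s$ is \emph{different} from the key the path arrives with, so $s$'s level being large is a statement about a separate stored key and does not constrain the path's arrival edge—this means the "level $\ge i+10$" target condition is essentially free (it just restricts which of the $\le N$ right-vertices we sum over, and in fact by \cref{lem:right-degree} most slots of level $\ge i+10$ that are reachable are low-degree), so it does not help the bound but also does not hurt it. Once the per-hop contraction factor $n^{-1+o(1)}$ is established and the path length is seen to be at least a constant, the geometric sum and union bounds are routine.
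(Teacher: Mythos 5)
There is a genuine gap, and it is the central one: your argument never uses the hypothesis that the expected average probe complexity is $O\bk[\big]{2^{\sqrt{\log\eps^{-1}}}}$, and without it the lemma cannot be proved. You union-bound over ``all target slots $s$ with level $\ge i+10$ (there are at most $N$ of them)'' and hope to win via a per-hop contraction factor. But, as you yourself observe in your second paragraph, a slot's level is a property of its \emph{occupant} (or of being empty), not of the arriving edge, so a one-edge $i$-short path $a_1 - h_1(a_1)$ is a perfectly valid way to reach a level-$\ge(i+10)$ slot; your claim that ``reaching level $i+10$ from level $0$ forces at least a constant number of hops'' contradicts this and is false. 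For such length-one paths your bound degenerates to $N\cdot n^{-1+o(1)} = n^{o(1)}$, which is vacuous. The paper's proof instead \emph{counts the targets}: it shows that the number of ``black'' nodes --- non-empty slots of level $\ge i+10$ (at most $n/2^{510\cdot 2^{L+i}}$, by Markov applied to the total probe complexity, i.e.\ exactly where the query-time hypothesis enters), plus empty slots (at most $O(\eps n)$, using the load-factor hypothesis), plus neighbors of high-degree vertices (via \cref{lem:right-degree}) --- is tiny, and then counts, \emph{deterministically in the realized low-degree graph}, the at most $2^{111\cdot 2^{L+i}}$ vertices reachable from each black node by an $i$-short path (walking \emph{backwards} from the targets). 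The product is $n/2^{\Omega(2^{L+i})}\ll n$, and only then is the randomness of $a_1,a_2$ used. You never bound the number of high-level or empty slots, so this step is missing entirely.

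A secondary but also real problem is your per-hop ``contraction factor.'' You compute $(\text{degree bound})\times(\text{per-key probability}) = 2^{2^{L+i+O(1)}}\cdot 2^{2^{L+i+O(1)}}/n \ll 1$, but to continue a path from a slot you must sum the per-key adjacency probability over all $\Theta(n)$ candidate fresh keys, which cancels the $1/n$: the expected branching factor of the key-slot graph per hop at these levels is $2^{2^{\Theta(L+i)}}\gg 1$, not $\ll 1$. A forward first-moment/branching argument from $a_1$ therefore does not contract; the gain in the paper comes solely from the scarcity of valid endpoints, traded off against the (large but controlled) count $C_1\cdot C_2$ of short paths per endpoint. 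To repair your proof you would need to (a) bound the number of level-$\ge(i+10)$ slots, including empty ones, using the query-time and load-factor hypotheses, and (b) replace the forward branching process by the paper's backward reachability count (or otherwise fix the accounting so that the number of intermediate-key choices is not silently dropped).
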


\begin{proof}
  We first build an auxiliary graph $G'$ from $G$ as follows:
  \begin{itemize}
  \item for every high-degree vertex $s$, we color all its neighbors in black;
  \item for each right-vertex $s \in [N]$ with level $\ge i + 10$, we color it in black;
  \item we delete all high-degree vertices.
  \end{itemize}
  After these modifications, we call the obtained graph $G'$, which contains a subset of vertices colored black. There are two properties of $G'$:
  \begin{enumerate}
  \item every node in $G'$ is low-degree;
  \item if a node $a \in A$ can reach a slot $s \in [N]$ in $G$ with level $\ge i + 10$ via an $i$-short path, then it can reach a black node in $G'$ via an $i$-short path (including the case where $a$ itself was colored black).
  \end{enumerate}
  The second property is because we have colored all neighbors of the deleted nodes in black, thus any valid $i$-short path in $G$ will still lead to a black node in $G'$.

  Next, we count the number of vertices that can be reached from a black node via an $i$-short path. Since every node in $G'$ is low-degree, fixing a black node as the starting node, an $i$-short path of length $p > 0$ can be encoded with the following parameters:
  \begin{gather*}
    \l_1, \ldots, \l_p \in [0, L - 10] \cap \Z, \qquad t_k \in \Bk[\big]{2^{2^{L + \l_k + 5}}} \ \textup{ for } \ k \in [p], \\
    \textup{s.t.} \qquad \BK{k \in [p] \mid \l_k = i - \delta} \le \sqrt{2}^{\delta} \ \textup{ for all } \ 0 \le \delta \le i.
  \end{gather*}
  Given these parameters, the $i$-short path is determined edge-by-edge: $\l_k$ denotes the level of the $k$-th edge on the path; $t_k$ indicates one of the associated level-$\l_k$ edges of the $k$-th node on the path. We upper bound the number $C_1$ of different configurations for $(\l_1, \ldots, \l_p)$ by enumerating $q_\delta \defeq \abs{\BK{k \in [p] \mid \l_k = i - \delta}} \in \Bk[\big]{0, \sqrt{2}^\delta} \cap \Z$:
  \newcommand{\maxq}[1]{\floor{\sqrt{2}^{#1}}}
  \begin{align*}
    C_1 &\le \sum_{q_0 = 0}^{\maxq{0}} \sum_{q_1 = 0}^{\maxq{1}} \sum_{q_2 = 0}^{\maxq{2}} \cdots \sum_{q_i = 0}^{\maxq{i}} \binom{q_0 + q_1 + \cdots + q_i}{q_0, \, q_1, \, \ldots \, , \, q_i} \\
        &\le \prod_{\delta = 0}^i \bk[\big]{\sqrt{2}^\delta + 1} \cdot \binom{\maxq{0} + \maxq{1} + \cdots + \maxq{i}}{\maxq{0}, \, \maxq{1}, \,  \ldots \, , \maxq{i}} \\
        &\le 2^{O(L^2)} \cdot \prod_{j = 1}^i \binom{\maxq{0} + \cdots + \maxq{j}}{\maxq{j}} \\
        &\le 2^{O(L^2)} \cdot \prod_{j = 1}^i 2^{O(\sqrt{2}^j)} \\
        &\le 2^{O(L^2)} \cdot 2^{O(2^{L/2})} = 2^{o(2^L)}.
  \end{align*}
  Given any configuration of $(\l_1, \ldots, \l_p)$, the number of configurations for $(t_1, \ldots, t_p)$ is bounded by
  \begin{align*}
    C_2 &\le \prod_{\delta = 0}^i \bk[\big]{2^{2^{L + i - \delta + 5}}}^{\sqrt{2}^\delta} = 2^{(2^5 \cdot \sum_{\delta = 0}^i 2^{L + i - \delta} \cdot 2^{\delta / 2})} = 2^{(32 \cdot \sum_{\delta = 0}^i 2^{L + i - \delta / 2})} \le 2^{(32 \cdot \frac{1}{1 - 1 / \sqrt{2}})(2^{L + i})} \ll 2^{110 \cdot 2^{L + i}}.
  \end{align*}
  Therefore, the number of reachable nodes from any given black node is at most
  \[
    C_1 \cdot C_2 \le 2^{(110 + o(1)) \cdot 2^{L + i}} \ll 2^{111 \cdot 2^{L + i}}. \numberthis \label{eq:num-reachable}
  \]

  Lastly, we bound the number of black nodes, which consist of three parts:
  \begin{enumerate}
  \item The neighbors of high-degree nodes. According to \cref{lem:right-degree}, with probability $1 - 1 / \poly L$, the number of these black nodes does not exceed $n / 2^{\Omega(2^{2^{L}})}$.
  \item The non-empty slots with level $\ge i + 10$. Since the expected average probe complexity over all non-empty slots is at most $O\bk[\big]{2^{\sqrt{\log \eps^{-1}}}} = O(2^{2^L})$, with probability $1 - 1 / \poly L$, the total probe complexity is at most $O\bk[\big]{n \cdot 2^{2^L} \cdot \poly L}$; every non-empty slot with level $\ge i + 10$ has probe complexity at least $2^{2^{L + i + 9}}$, thus the number of such slots does not exceed
    \[
      \frac{O\bk[\big]{n \cdot 2^{2^L} \cdot \poly L}}{2^{2^{L + i + 9}}} \ll \frac{n}{2^{510 \cdot 2^{L + i}}}.
    \]
  \item The empty slots (with level $L$). There are at most
    \[
      O(\eps n) \le O(n / 2^{2^{2L - 1}}) \le O(n / 2^{2^{L + i + 9}}) \ll \frac{n}{2^{511 \cdot 2^{L + i}}}
    \]
    empty slots as the load factor equals $1 - \eps$.
  \end{enumerate}
  Adding them together, we know that the number of black nodes is at most $O(n / 2^{510 \cdot 2^{L + i}})$ with probability $1 - 1 / \poly L$. Multiplying with \eqref{eq:num-reachable}, we conclude that with probability $1 - 1 / \poly L$, only $n / 2^{399 \cdot 2^{L + i}} = n / 2^{\Omega(2^{L + i})}$ nodes can be reached via an $i$-short path from a black node. Finally, since $a_1 \in A_1$ and $a_2 \in A_2$ are sampled randomly, they do not belong to the reachable nodes with probability $1 - 1 / \poly L$, thus the lemma holds.
\end{proof}

Based on \cref{lem:path}, we can now analyze the desired properties of $\Phi$.

\begin{lemma}[Property 1 of $\Phi$]
  \label{lem:prop-1}
  Any insertion or deletion increases $\Phi$ by at most $1 / \poly L$ in expectation.
\end{lemma}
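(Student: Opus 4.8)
The plan is to bound the expected increase of each summand $\Phi_i$ by $1/\poly L$ and then add up over the at most $L$ values of $i$ (absorbing the extra factor of $L$ into the polynomial). Fix one operation and compare the maximum-potential disjoint collection of $i$-stanzas just before it to the one just after. The structural fact driving everything is that an operation changes the contents of exactly one slot: an insertion puts the new key into the special slot $s^{*}$ (its level stays $L$, but it becomes key-bearing), and a deletion empties the slot $s$ holding the deleted key (its level can only go up, to $L$). Split any disjoint collection of $i$-stanzas that is valid \emph{after} the operation into those stanzas that were already valid \emph{before} and those that were not. The before-valid part is itself a legal disjoint collection in the old state, so its potential is at most $\Phi_i^{\mathrm{old}}$; and a stanza can fail to have been valid before only if it uses the one changed slot in a role that the operation newly unlocked. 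Since a disjoint collection uses each slot at most once as a starting slot and at most once as a final slot, there is at most one such ``new'' stanza, and its potential is at most $1$. Hence $\Phi_i$ never increases by more than $1$, and it suffices to show that a \emph{positive-potential} new stanza exists with probability at most $1/\poly L$.

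For an insertion, the only newly unlocked role is ``$s^{*}$ as a starting slot'': it cannot become internal (it has level $L > i-10$), and it was already usable as a final slot while empty. So a positive-potential new $i$-stanza has $s_{1}=s^{*}$, and by the characterization in the ``$i$-short paths'' discussion this forces a short path in the key--slot graph $G$ from the newly inserted key to a slot of level $\ge i$. Since \cref{dist:hard} inserts a uniformly random element of $[U]\setminus S$, i.e.\ a uniformly random $a_{2}\in A_{2}$, \cref{lem:path} (with its parameter set to $i-10$) tells us no such path exists with probability $1-1/\poly L$. This gives $\E[\Delta\Phi_i]\le 1/\poly L$ for insertions.

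For a deletion, the only newly unlocked role is ``$s$ as a final slot,'' and only in the case $\ell(x,s)<i$ (otherwise $s$ was already endpoint-eligible and $\Phi_i$ can only drop); note $s$ is now empty, so it can neither start a stanza nor be internal. A positive-potential new $i$-stanza therefore ends at $s$, which again by the ``$i$-short paths'' characterization forces a short path in $G$ into $s$ from a key currently stored in some level-$\ge i$ slot. This is the one point where \cref{lem:path} cannot be used verbatim: the endpoint $s$ is a uniformly random occupied slot (it is the slot of a uniformly random current key), but the \emph{other} endpoint of the path is chosen adversarially. The fix is to re-run the counting inside \cref{lem:path} ``in reverse'': there are only $n/2^{\Omega(2^{L+i})}$ slots of level $\ge i$ (the expected average probe complexity is $O(2^{2^{L}})$, a level-$\ge i$ slot with $i\ge 10$ has probe complexity $>2^{2^{L+9}}$, and only $O(\eps n)$ slots are empty), each holds a low-degree left-vertex from which at most $2^{O(2^{L+i-10})}$ vertices are short-path-reachable, and the product is still $n/2^{\Omega(2^{L+i})}$. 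Since the deleted slot is uniform among the $n$ occupied slots, it lands in this ``bad set'' with probability at most $1/2^{\Omega(2^{L+i})}\le 1/\poly L$, so $\E[\Delta\Phi_i]\le 1/\poly L$ for deletions as well.

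Summing the per-$\Phi_i$ bounds over $i=10,\dots,L$ then proves the lemma. The step I expect to be the real obstacle is the deletion case: one has to replace the black-box use of \cref{lem:path} by a reachability count run from the (few) high-level slots \emph{into} the deleted slot, and then exploit the fact that a random deletion removes an element sitting in a uniformly random occupied slot; getting the two exponents ($n/2^{\Omega(2^{L+i})}$ many high-level slots versus $2^{O(2^{L+i-10})}$ reachable vertices each) to combine with room to spare is exactly the delicate part, and mirrors the constant bookkeeping already present in the proof of \cref{lem:path}.
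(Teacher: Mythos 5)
Your proof follows the paper's argument essentially exactly: bound the increase of each $\Phi_{i+10}$ by observing that at most one new positive-potential stanza can appear (one starting at the special slot for an insertion, one ending at the newly emptied slot for a deletion), translate that stanza into an $i$-short path in the key--slot graph connecting the randomly chosen key to a level-$\ge i+10$ slot, invoke the path lemma, and sum over the $O(L)$ levels. The only divergence is the deletion case, which you flag as the ``real obstacle'' and handle by re-running the reachability count in reverse; the paper instead applies \cref{lem:path} directly, which is stated for a uniformly random \emph{current} key $a_1 \in A_1$ precisely so that it covers the deleted key, and its proof already performs exactly the count you describe (few level-$\ge i+10$ slots and few neighbors of high-degree vertices serve as black sources, each reaching at most $2^{O(2^{L+i})}$ vertices), so your extra derivation is sound but is essentially a re-proof of the cited lemma rather than a necessary new step.
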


\begin{proof}
  Suppose we insert a random key $a \in A_2$ to the current hash table. For each $0 \le i \le L - 10$, $\Phi_{i + 10}$ is increased by at most 1. A necessary condition for $\Phi_{i + 10}$ to be increased is that there exists an $(i + 10)$-stanza, starting with the special slot $s_1$ containing the inserted key $a$, with positive potential. Say the stanza consists of slots $s_1, \ldots, s_j$ with $x_1, \ldots, x_{j - 1}$ residing in slots $s_1, \ldots, s_{j - 1}$ respectively, where $s_1$ is the special slot and $x_1 = a$. In the key-slot graph $G$ of the state before the insertion, this implies an $i$-short path from $a = x_1$ to $s_j$: $x_1 - s_2 - x_2 - \cdots - s_{j - 1} - x_{j - 1} - s_j$. The last vertex on the path is a slot with level $\ge i + 10$. Due to \cref{lem:path}, the probability of existing such a path is at most $1 / \poly L$, i.e., $\Phi_i$ increases by $1 / \poly L$ in expectation. Taking summation over all $i$, we know that $\Phi$ increases by $1 / \poly L$ in expectation as well.

  Similarly, suppose we delete a random key $a \in A_1$ from the current hash table. For each $0 \le i \le L - 10$, in order to increase $\Phi_{i + 10}$ by at most 1, there must be an $(i + 10)$-stanza $s_1, \ldots, s_j$, with $x_k$ stored in $s_k$ for $k \in [j - 1]$, where $s_j$ is the slot containing key $a$ before the deletion ($s_j$ becomes empty after the deletion and thus can serve as the final slot). In the key-slot graph $G$ of the state before the deletion, the stanza corresponds to an $i$-short path $s_1 - x_2 - s_2 - x_3 - \cdots - x_{j - 1} - s_j - a$. It is connecting the key $a$ to delete with a slot $s_1$ with level $\ge i + 10$, thus by \cref{lem:path}, such a path exists with probability at most $1 / \poly L$. Taking summation over $0 \le i \le L - 10$, we conclude that each deletion increases $\Phi$ by at most $1 / \poly L$, and the lemma follows.
\end{proof}

\begin{lemma}[Property 2 of $\Phi$]
  \label{lem:prop-2}
  If the algorithm performs a move with impact $r$, $\Phi$ will be decreased by $r \pm O(1)$.
\end{lemma}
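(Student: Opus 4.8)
The plan is to track the change in each summand $\Phi_i$ separately and then add up. Say the move carries $x$ from slot $s$, of level $a\defeq\l(x,s)$, to slot $s'$, of level $b\defeq\l(x,s')=a-r$; afterward slot $s$ is empty (level $L$) and slot $s'$ holds $x$ at level $b$. I want to show that $\Delta\Phi_i\defeq\Phi_i^{\mathrm{new}}-\Phi_i^{\mathrm{old}}$ is $0$ when $i\le b$, is geometrically small when $i>a$, and is $-1$ up to a geometrically small error when $b<i\le a$; since there are exactly $r$ integers in $(b,a]$ (up to the $O(1)$-sized truncation to $[10,L]$, and recalling that a single move has impact at most $L$), summing gives $\Phi^{\mathrm{new}}-\Phi^{\mathrm{old}}=-r\pm O(1)$. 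Note that, unlike Property~1, this is a purely deterministic combinatorial statement: it does not invoke \cref{lem:path}, only the structure of stanzas and of $\phi$.

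For $i\le b$, both slots $s$ and $s'$ have level $\ge i$ in both states (an empty slot has level $L\ge i$, and $a\ge b\ge i$), and a slot of level $\ge i$ can enter an $i$-stanza only as its starting or final slot. The only difference that $\Phi_i$ can see between the two states is which of $s,s'$ is nonempty; but relabeling $s\leftrightarrow s'$ is an isomorphism of the two states for the purposes of $\Phi_i$, since none of the internal-slot levels $\l(x_k,s_k)$ nor the connecting-edge levels $\l(x_k,s_{k+1})$ appearing in any stanza's potential are affected by this relabeling. Hence $\Phi_i^{\mathrm{new}}=\Phi_i^{\mathrm{old}}$. For $i>a$, the slot currently holding $x$ has level $\le a<i$ in each state, so it is too shallow to be a starting or final slot and can appear in an $i$-stanza only as an internal slot (when its level is $\le i-10$); the potential weight such an internal slot, or any edge out of $x$, can carry is at most $\sqrt{2}^{\,a-(i-10)}$, because the exponents in $\phi$ are of the form $\l-(i-10)$ with $\l\le a$. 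Since at most one stanza of an optimal collection uses a given slot internally, rerouting around the vanished internal slot $s$ (or inserting the newly available internal slot $s'$) moves $\Phi_i$ by $O(\sqrt{2}^{\,a-i+10})$, and $\sum_{i>a}\lvert\Delta\Phi_i\rvert=O(1)$; the $O(1)$ values $i\in(a,a+10)$, where $s$ or $s'$ is stuck strictly between levels, are absorbed into this $O(1)$.

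For $b<i\le a$, the key $x$ was deep for level $i$ (level $a\ge i$) and is now shallow (level $b<i$), so its contribution of $\approx 1$ to $\Phi_i$ should disappear. I would make this precise by surgery between near-optimal stanza collections. Going from old to new: any stanza of the old optimal collection that used $s$ as its starting slot, or used $s'$ as its final slot, becomes invalid (the starting slot $s$ is now empty; the final slot $s'$ is now too shallow); when both such stanzas exist and are distinct, they can be spliced together through $s'$, which — if $b\le i-10$ — is now a valid internal slot holding $x$, at a cost of exactly $1+\sqrt{2}^{\,b-i+10}$, using the identity $\phi(\text{merge})=\phi(T)+\phi(T')-1-\sqrt{2}^{\,b-i+10}$; the degenerate case of a single stanza using both $s$ and $s'$ is handled by deleting it, losing $\le 1$. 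Going from new to old: re-insert the $2$-stanza $(s,s')$, which is valid in the old state since $s$ is nonempty with $\l(x,s)=a\ge i$, $s'$ is a high empty slot, and $\l(x,s')=b\le i-10$; its potential is $1-\sqrt{2}^{\,b-i+10}$. Together these give $\Delta\Phi_i=-1\pm\epsilon_i$ with $\epsilon_i=O(\sqrt{2}^{\,b-i+10})$, plus analogous $O(1)$-total contributions from the $O(1)$ values of $i$ near $b$ and near $a$; hence $\sum_{b<i\le a}\epsilon_i=O(1)$. Combining the three regimes, $\Phi^{\mathrm{new}}-\Phi^{\mathrm{old}}=\sum_{i=10}^{L}\Delta\Phi_i=-\bigl\lvert\{i\in[10,L]:b<i\le a\}\bigr\rvert\pm O(1)=-r\pm O(1)$.

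The main obstacle is the middle regime $b<i\le a$: one must verify, in both directions and over all sub-cases of how an optimal collection uses (or fails to use) slots $s$, $s'$, and key $x$ — as a starting slot, a final slot, an internal slot, or not at all, including the single-stanza degenerate case and the sub-case $b>i-10$ where $s'$ cannot be used internally — that the change in $\Phi_i$ equals $-1$ up to an error decaying geometrically in $i-b$ (and in $a-i$ near the top). Getting the error to telescope to $O(1)$ over the whole sum, rather than accumulating to $\Theta(r)$ or $\Theta(L)$, is exactly what forces the $\sqrt{2}$-geometric weighting in the definition of $\phi$ and is where the bulk of the casework lies.
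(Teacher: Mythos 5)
Your proposal follows essentially the same strategy as the paper's proof: decompose $\Phi=\sum_i\Phi_i$, show the per-level change is $0$ below the destination level, $-1$ up to a geometrically decaying error in the middle range, and negligible near the top, with the middle range handled by splicing the stanza that starts at $s$ onto the stanza that ends at $s'$ through the newly available internal slot $s'$, and the reverse direction handled by inserting the two-slot stanza $(s,s')$. Two points of comparison are worth making. First, the paper begins with a reduction you skip: it factors every move through the special slot, so that each elementary move has one endpoint at level $L$; this eliminates your fourth regime $i>a$ entirely and leaves only three cases. Without that reduction, your regime $i>a$ is not the afterthought your sketch suggests: the dangerous terms there are not the $O(\sqrt{2}^{\,a-i+10})$ internal-slot weights you cite, but two $\pm 1$ effects --- the newly emptied $s$ becomes an eligible \emph{final} slot (so $\Phi_i'$ could a priori jump up by $1$), and $s'$ ceases to be one (so a stanza of the old optimal collection ending at $s'$ is destroyed, costing $1$ if merely deleted). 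These $\pm 1$'s cancel down to $O(\sqrt{2}^{\,a-i+10})$ only via the same splice/extension surgery as in the middle regime (e.g., extend the stanza that ended at $s'$ through $s'$, now internal and holding $x$, to terminate at the newly emptied $s$, resolving collisions with the stanza that used $s$ internally by exchanging tails); so this regime needs the full casework, not a degree bound. Second, a minor imprecision: for $i\le b$ a literal relabeling $s\leftrightarrow s'$ is not an isomorphism, because a final slot $s_j$ enters the stanza conditions and the potential through $\ell(x_{j-1},s_j)$, which changes under relabeling; the paper instead replaces only the \emph{starting} slot by the key it holds (whose identity is all that matters for starting slots), leaving final-slot uses of $s$ and $s'$ untouched. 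Neither issue is fatal --- the claims are true and provable by the tools you already deploy --- but both are repaired most cleanly by adopting the paper's special-slot reduction up front.
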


\begin{proof}
  Without loss of generality, we assume the algorithm can only (a) move a key from a non-special slot to the special slot, and (b) move a key from the special slot to a non-special slot. This is because any move between non-special slots can be replaced by two moves of types (a) and (b) respectively. By symmetry, it suffices to analyze the moves of type (b).

  Suppose the algorithm moves a key $x$ from the special slot $s_1$ to a non-special slot $s_2$, where $\l(x, s_2) = j$. The move has impact $r = L - j$. For the sake of discussion, we denote by $\Sigma$ the state of the hash table before the move, and $\Sigma'$ the state after the move. Let $\Phi$ (resp.~$\Phi'$) be the potential of $\Sigma$ (resp.~$\Sigma'$), and let $\Phi_i$ (resp.~$\Phi'_i$) be the $i$-th summation term in $\Phi$ (resp.~$\Phi'$).

  We will show that for each $10 \le i \le L$:
  \begin{enumerate}
  \item\label{case:1} if $i \le j$, then $\Phi'_i = \Phi_i$;
  \item\label{case:2} if $j < i < j + 10$, then $\Phi'_i - \Phi_i \in [-2, \, 0]$;
  \item\label{case:3} if $i \ge j + 10$, then $\Phi'_i - \Phi_i \in \Bk[\big]{-1 - O(\sqrt{2}^{j-i}), \, -1 + O(\sqrt{2}^{j-i})}$.
  \end{enumerate}

  \textbf{Case \ref{case:1}.} If we represent every stanza $(\tilde{s}_1, \tilde{s}_2, \ldots, \tilde{s}_j)$ using the tuple $(x_1, \tilde{s}_2, \tilde{s}_3, \ldots, \tilde{s}_j)$, where $x_1$ is the key residing in $\tilde{s}_1$, then the set of representations of valid $i$-stanzas in $\Sigma$ and $\Sigma'$ are the same, also with the same potentials. Therefore, $\Phi'_i = \Phi_i$ holds.

  \smallskip

  \textbf{Case \ref{case:2}.} The only changes from $\Sigma$ to $\Sigma'$ are:
  \begin{itemize}
  \item $s_1$ is no longer a valid starting slot;
  \item $s_2$ is no longer a valid final slot.
  \end{itemize}
  Every valid $i$-stanza in $\Sigma'$ is also valid in $\Sigma$ with the same potential, so $\Phi'_i \le \Phi_i$; each of the above two changes will invalidate at most 1 stanza from the collection of disjoint $i$-stanzas in $\Sigma$, each decreasing the potential by at most 1, so $\Phi'_i \ge \Phi_i - 2$. So the statement also holds.

  \smallskip

  \textbf{Case \ref{case:3}.} Let $C$ be a collection of disjoint $i$-stanzas in $\Sigma$ with the maximum (sum of) potential. Let $s_1 \circ c_1$ be the stanza in $C$ that starts with the special slot $s_1$, where $c_1$ is a sequence of slots, if such a stanza exists; let $c_2 \circ s_2$ be the stanza in $C$ that ends with $s_2$, where $c_2$ is a sequence of slots, if such a stanza exists.

  We first adjust $C$ to make sure that both stanzas $s_1 \circ c_1$ and $c_2 \circ s_2$ exist while decreasing the potential of $C$ by at most $O(\sqrt{2}^{j - i})$. If $s_1 \circ c_1$ does not exist, we remove any stanza ending with $s_2$, decreasing the potential by at most $1$; then add the stanza $(s_1, s_2)$ with potential $1 - \sqrt{2}^{j - i + 10}$. Similarly, if $c_2 \circ s_2$ does not exist, we remove any stanza starting with $s_1$ and add the stanza $(s_1, s_2)$. Further, if $s_1 \circ c_1$ and $c_2 \circ s_2$ are the same stanza, then we replace it with stanza $(s_1, s_2)$, which can only decrease the potential by at most $\sqrt{2}^{j - i + 10}$ as well. Below, we assume $C$ is the adjusted collection of disjoint stanzas with potential at least $\Phi_i - O(\sqrt{2}^{j - i})$.

  Then, we show $\Phi'_i \ge \Phi_i - 1 - O(\sqrt{2}^{j - i})$ by constructing a set $C'$ of disjoint stanzas in $\Sigma'$. If $(s_1, s_2) \in C$, $C' = C \setminus \BK{(s_1, s_2)}$ has potential at least $\Phi_i - 1$. Otherwise, $C$ must contain two different stanzas $s_1 \circ c_1$ and $c_2 \circ s_2$. We let $C' = C \setminus \BK{s_1 \circ c_1, c_2 \circ s_2} \cup \BK{c_2 \circ s_2 \circ c_1}$ and show the following facts:
  \begin{itemize}
  \item The new stanza $c_2 \circ s_2 \circ c_1$ is a valid $i$-stanza in $\Sigma'$, because:
    \begin{itemize}
    \item The first slot in $c_2$ and the last slot in $c_1$ have levels $\ge i$, as they serve as starting and final slots of stanzas in $\Sigma$, and their accommodated keys are unchanged in $\Sigma'$.
    \item The internal slots of the new stanza have levels at most $i - 10$. For $s_2$, this is because $\l(x, s_2) = j \le i - 10$; for other slots, it is because they are internal slots of the two original stanzas.
    \item Every key in the new stanza (except the key residing in the last slot of $c_1$, if there is one) can be stored in the subsequent slot with level at most $i - 10$. For key $x$ in slot $s_2$, this is because the stanza $s_1 \circ c_1$ requires that $\l(x, \, \textup{first slot of $c_1$}) \le i - 10$; for other keys, it is a requirement of stanzas $s_1 \circ c_1$ and $c_2 \circ s_2$ in $\Sigma'$.
    \end{itemize}
  \item The potential of the new stanza satisfies $\phi(c_2 \circ s_2 \circ c_1) = \phi(s_1 \circ c_1) + \phi(c_2 \circ s_2) - 1 - O(\sqrt{2}^{j - i})$, because if we compare the summations in \cref{def:potential-stanza} for the three stanzas, there will be only one extra term for $c_2 \circ s_2 \circ c_1$: $\sqrt{2}^{\l(x, s_2) - i + 10} = \sqrt{2}^{j - i + 10} = O(\sqrt{2}^{j - i})$.
  \end{itemize}
  These facts imply that $C'$ is a valid set of disjoint stanzas in $\Sigma'$ with potential at least $\Phi_i - 1 - O(\sqrt{2}^{j - i})$.

  Finally, we show that $\Phi_i \ge \Phi'_i + 1 - O(\sqrt{2}^{j - i})$. Similar to above, we let $\tilde{C}'$ be the collection of $i$-stanzas in $\Sigma'$ with the maximum sum of potential $\Phi'$. If $\tilde C'$ does not contain a stanza using slot $s_2$, we simply set $\tilde C = \tilde C' \cup \BK{(s_1, s_2)}$ to be a collection of disjoint $i$-stanzas in $\Sigma$, with potential $\Phi'_i + 1 - O(\sqrt{2}^{j - i})$. Otherwise, we suppose there is a stanza $c_2 \circ s_2 \circ c_1$ using $s_2$ in $\tilde C'$, as $s_2$ can only serve as an internal slot. We let $\tilde C = \tilde C' \setminus \BK{c_2 \circ s_2 \circ c_1} \cup \BK{s_1 \circ c_1, c_2 \circ s_2}$, and by a similar reasoning as above, we know the potential of $\tilde C$ is at least $\Phi'_i + 1 - O(\sqrt{2}^{j - i})$.

  \smallskip
  Putting all three cases together, we take a summation of $\Phi'_i - \Phi_i$ over $10 \le i \le L$, which gives $\Phi' - \Phi = -(L - j) \pm O(1) = -r \pm O(1)$ and concludes the proof.
\end{proof}

Lastly, Property 3 of $\Phi$ is true, because each $\Phi_i$ corresponds to a collection of disjoint stanzas, whose size is at most $O(n)$, while each stanza has potential at most 1, thus $\Phi_i \le O(n)$; taking a summation over $i$ shows $\Phi \le O(nL)$.

So far, we have proved all 3 desired properties of the potential function $\Phi$. By \cref{prop:prop-imply-thm}, these properties imply \cref{thm:lowerbound/uniform}.

\subsection{Non-nearly-uniform probe sequences}

The previous subsection proves the lower bound on the expected switching cost per operation based on the assumption of near uniformity. We recall that the probe-sequence function $h$ is \defn{nearly uniform} if
\[
  q(h, i, s) \defeq n \Pr_{x \in [U]} [\ProbeComplexity(h, x, s) \le i] \le O(i^{10}) \numberthis \label{eq:def-q}
\]
for all $i$ and $s$. For clarity, we use $\ProbeComplexity(h, x, s)$ to denote the probe complexity of key $x$ storing in slot $s$ under the probe-sequence function $h$. In this subsection, we reduce every probe-sequence function $h$ to a nearly uniform one $h'$, changing the expected average probe complexity by only a constant factor. This will remove the assumption of near uniformity in \cref{thm:lowerbound/uniform}.

Formally, let $A$ be an assignment of $n$ keys to $N$ slots. We denote by $c(A, h)$ the total probe complexity of all $n$ balls under the probe-sequence function $h$.

\begin{lemma}
  \label{lem:non-uniform}
  For every (deterministic) probe-sequence function $h$, we can construct a nearly uniform probe-sequence function $h'$, such that for any assignment $A$ of $n$ keys to $N$ slots, $c(A, h') \le O(c(A, h) + n)$.
\end{lemma}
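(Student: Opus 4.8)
The plan is to replace the given probe-sequence function $h$ by one in which no slot is "overloaded" at any probe depth, while only losing a constant factor in the total probe complexity of any assignment. The near-uniformity condition $q(h', i, s) \le O(i^{10})$ says that, for each depth $i$, at most $O(i^{10})$ of the $n$ keys (scaled by $n/U$) have $s$ among their first $i$ probes. The natural way to force this is a \emph{layered padding} construction: I will build $h'$ by interleaving the probes of $h$ with a sequence of "fresh" fully-random probes, so that after $\Theta(i)$ probes of $h'$ one has seen only $\Theta(i/2)$ probes of $h$ plus $\Theta(i/2)$ random probes. The random probes automatically satisfy the near-uniformity bound (a random slot is hit at depth $i$ by about $ni/N = O(i)$ keys in expectation, and one truncates the few bad slots). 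For the probes inherited from $h$, I will \emph{reorder and throttle} them: process the keys' $h$-probe-sequences in a global order, and whenever a slot $s$ has already been "claimed" at depth $\le i$ by too many keys (more than, say, $c i^{10}$), simply skip that occurrence of $s$ (replace it by $\Null$ or by the next allowed probe). Because queries still probe $h'$ in order and a key is found in its true slot, skipping occurrences of $s$ in the sequences of \emph{other} keys does not affect correctness — it only possibly increases those other keys' probe complexity.

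The key quantitative step is to bound how much this throttling can increase $c(A,h')$. Fix an assignment $A$ placing key $x$ in slot $s$; let $j = \ProbeComplexity(h,x,s)$ be its cost under $h$. Under $h'$, $x$'s cost is at most a constant times $j$ \emph{plus} the number of throttled/padding probes that $x$ has to skip past before reaching depth-$j$-worth of genuine $h$-probes. The padding probes contribute a constant factor by construction. For the throttling: a slot $s'$ is throttled for $x$ only if $s'$ has been claimed $\ge c (i')^{10}$ times by keys with smaller index, at the depth $i'$ where $x$ would have probed $s'$. I would charge each such skipped probe to one of those $\Theta((i')^{10})$ earlier claimants. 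A counting/amortization argument then shows that the \emph{total} number of skipped probes, summed over all $n$ keys in $A$, is $O\big(\sum_s \sum_i (\text{excess claims of } s \text{ at depth } i)\big)$, and since the total number of (key, depth-$i$, slot) incidences with depth $\le i$ is at most $n i$ for genuine $h$-probes, the excess beyond the $O(i^{10})$ threshold is a lower-order term; summing geometrically over depths it is $O(c(A,h) + n)$. (The exponent $10$ is deliberately generous so that this geometric sum converges comfortably; any fixed large constant works.)

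The final step is to verify that $h'$ as constructed is genuinely a valid classical-open-addressing probe-sequence function: it depends only on $x$, $N$, and random bits. This holds because the "global order" on keys can be taken to be the order of a random hash (or lexicographic order on the universe), the throttling decisions for $x$'s sequence depend only on which slots earlier keys probe — i.e., on $h$ and the random bits — and not on the current contents of the table, and the padding probes are drawn from an independent random hash function. Hence queries remain oblivious. Assembling the three pieces gives $c(A,h') \le O(c(A,h) + n)$ for every assignment $A$, and near-uniformity of $h'$, as required; this is exactly what is needed to drop the near-uniformity hypothesis from \cref{thm:lowerbound/uniform} and obtain \cref{thm:lowerbound/nonuniform}.

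I expect the main obstacle to be the amortized charging argument bounding the total number of skipped probes: one must be careful that a single earlier claimant is not charged too many times (it can be charged once per later key that it blocks at a given slot/depth, so one needs the threshold $\Theta(i^{10})$ to dominate the number of distinct slot/depth pairs that matter, and one needs the depths to telescope), and one must make sure the interaction between padding probes and throttled probes does not compound multiplicatively across depths. A clean way to organize this is to analyze depth ranges $[2^t, 2^{t+1})$ separately, show the contribution of range $t$ to $c(A,h')$ is $O(1)$ times its contribution to $c(A,h)$ plus an additive $O(n/2^{10t}\cdot 2^t)$-type term, and sum over $t$.
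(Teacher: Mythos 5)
Your high-level instinct --- suppress or delay the probes of slots that are ``overloaded'' at depth $i$, and argue the total damage to any assignment is $O(n)$ --- is the same as the paper's. But there are two genuine gaps in the execution. First, you never pin down where a throttled occurrence of a slot $s$ goes. If it is simply deleted or replaced by $\Null$, the lemma fails: the bound must hold for \emph{every} assignment $A$, and the throttling rule (global key order plus counts) knows nothing about $A$, so for any throttled triple $(x,i,s)$ there is an assignment storing $x$ in $s$, and then $\ProbeComplexity(h',x,s)$ is unbounded. Your remark that only ``other keys''' occurrences are skipped is not something the construction can guarantee. The paper's fix is to \emph{demote} the occurrence to a concrete later depth, namely $\lceil\sqrt{q(h,i,s)}\rceil$, chosen as a fixed function of the original $h$ (which also sidesteps the circularity of defining ``first allowed depth'' in terms of the post-throttling counts); it then needs a generalized-probe-sequence device to absorb collisions at the demotion targets, flattened back at a factor-$2$ loss.

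Second, your amortized charging argument bounds the wrong quantity. Skipped probes of slots \emph{other} than the slot storing $x$ contribute nothing to $c(A,h')$ --- replacing them by $\Null$ leaves the position of $x$'s true slot unchanged (and compacting would only help). The entire increase in $c(A,h')$ comes from the demotion distance of the \emph{single} relevant occurrence per key, namely the occurrence of the slot that actually stores it. Bounding that requires combining your (correct) observation that $\sum_s q(h,i,s)\le ni$ with the demotion target: in the paper, a bad pair with $q(h,i,s)\in[Q,2Q)$ forces $i\le (2Q)^{1/5}$, so there are only $O(n/Q^{3/5})$ such pairs, each hurting at most one key of $A$ by $O(\sqrt{Q})$, and the dyadic sum is $O(n)$. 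Without specifying the demotion depth, your ``excess claims'' count cannot be converted into a cost bound. (The interleaved random padding is also unnecessary --- it does nothing for near-uniformity, which is violated by the inherited $h$-probes, and ``nearly uniform'' is a deterministic condition on all $(i,s)$, so the random probes would themselves need a worst-case truncation anyway.)
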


By definition, the probe sequence $h(x) = (h_1(x), h_2(x), \ldots)$ contains at most one slot $h_i(x)$ on its $i$-th position. However, for the sake of discussion, we introduce \defn{generalized probe sequences} $(\tilde{h}_1(x), \tilde{h}_2(x), \ldots)$, where on the $i$-th position of the sequence there is a set $\tilde{h}_i(x)$ of slots. It is required that, among the first $i$ positions on the generalized probe sequence, the number of slots should not exceed $i$.

For a generalized-probe-sequence function $\tilde h$, $\ProbeComplexity(\tilde{h}, x, s)$ is still defined as the first position $i$ containing slot $s$; $q(\tilde{h}, i, s)$ and the notion of near uniformity are still defined according to \eqref{eq:def-q}. The proof consists of two steps: first, we construct a nearly uniform generalized-probe-sequence function $\tilde{h}$ that meets the requirements; second, we show that any (nearly uniform) generalized probe sequence can be transformed into a (nearly uniform) probe sequence with little overhead.

\paragraph{Step 1: Constructing generalized probe sequences.}

We construct $\tilde{h}$ by reassigning some occurrences of slots in the probe sequences to later positions. For each pair $(i, s)$ where $q(h, i, s) > i^{5}$, which we call a \defn{bad pair}, and for each key $x$ where $h_i(x) = s$, we move the occurrence of $s$ from $h_i(x)$ to a later position $h_{\ceil[\big]{\!\sqrt{q(h, i, s)}}}(x)$, resulting in generalized probe sequences $\tilde{h}$.

Next, we show that $\tilde{h}$ meets our requirements.

\begin{claim}[Probe complexity of $\tilde{h}$]
  \label{claim:complexity}
  For any assignment $A$ of $n$ keys to $N$ slots, the total probe complexity of $\tilde h$ is at most $c(A, \tilde{h}) \le c(A, h) + O(n)$.
\end{claim}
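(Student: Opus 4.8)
The plan is to bound, slot by slot, how much probe complexity can increase when we pass from $h$ to $\tilde h$, and then sum those increases over the at most $n$ occupied slots of $A$.

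\textbf{Per-slot increase.} Fix a slot $s$ that holds a key $x$ under $A$, and let $i \defeq \ProbeComplexity(h, x, s)$, so $h_i(x) = s$ and $i$ is the first such index. If $(i,s)$ is not a bad pair, the occurrence of $s$ at position $i$ is not relocated, so $\ProbeComplexity(\tilde h, x, s) = i$ and there is no increase. If $(i,s)$ is a bad pair, that occurrence moves to position $\lceil \sqrt{q(h,i,s)} \rceil$; since $q(h,i,s) > i^5 \ge i$ this is genuinely a later position, and since $q(h,\cdot,s)$ is monotone in its second argument, every later occurrence of $s$ in $h(x)$ either stays put or also moves to a position $\ge \lceil\sqrt{q(h,i,s)}\rceil$. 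Hence in all cases $\ProbeComplexity(\tilde h, x, s) \le \lceil\sqrt{q(h,i,s)}\rceil$, so the increase at $s$ is at most $\lceil\sqrt{q(h,i,s)}\rceil - i \le \sqrt{q(h,i,s)}$, and it is nonzero only when $(i,s)$ is bad.

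\textbf{Summation.} Summing over occupied slots, $c(A,\tilde h) - c(A,h) \le \sum_{s}\sqrt{q(h,i_s,s)}$, the sum being over occupied slots $s$ for which $(i_s,s)$ is bad, where $i_s \defeq \ProbeComplexity(h, x_s, s)$. Group these slots by the value $i = i_s$. The key counting input is that, for every fixed $i$, $\sum_{s\in[N]} q(h,i,s) = n\cdot\E_{x\in[U]}\bk{\abs{\BK{h_1(x),\dots,h_i(x)}}} \le ni$, since a length-$i$ prefix of a probe sequence visits at most $i$ distinct slots. This yields, for each $i$: (i) the number of bad slots with $i_s = i$ is at most $ni/i^5 = n/i^4$, because each such slot contributes more than $i^5$ to a sum bounded by $ni$; and (ii) $\sum_{s:\,i_s=i,\ (i,s)\text{ bad}} q(h,i,s) \le ni$. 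By Cauchy--Schwarz, $\sum_{s:\,i_s=i,\ (i,s)\text{ bad}} \sqrt{q(h,i,s)} \le \sqrt{(n/i^4)\cdot ni} = n/i^{3/2}$. Summing over $i\ge 1$ gives $c(A,\tilde h) - c(A,h) \le \sum_{i\ge 1} n/i^{3/2} = O(n)$, which is the claim.

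The only step that needs genuine care is the per-slot bound: one must verify that relocating the first occurrence of $s$ together with possibly several of its later occurrences can never produce a probe complexity exceeding $\lceil\sqrt{q(h,i,s)}\rceil$. This follows from the monotonicity of $q(h,\cdot,s)$ together with the fact that every relocation pushes an occurrence strictly later, so no relocation can create a new, earlier first hit of $s$. Everything after that is a routine Cauchy--Schwarz estimate powered by the bound $\sum_s q(h,i,s) \le ni$.
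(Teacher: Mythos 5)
Your proof is correct, and it rests on exactly the same two pillars as the paper's: (a) for a bad pair $(i,s)$, relocating the occurrence of $s$ can raise the probe complexity of at most one key of $A$ (the one stored in $s$), and by at most $O\bigl(\sqrt{q(h,i,s)}\bigr)$, since relocations only push occurrences later and never create an earlier first hit; and (b) the counting identity $\sum_{s\in[N]} q(h,i,s)\le ni$. The only divergence is in how the final sum is organized: the paper buckets bad pairs dyadically by the value of $q(h,i,s)\in[Q,2Q)$, shows there are $O(n/Q^{3/5})$ of them, and multiplies by the per-pair cost $O(Q^{1/2})$, whereas you group by the probe-complexity level $i$, bound the number of bad occupied slots at level $i$ by $n/i^4$ via the badness threshold $q>i^5$, and apply Cauchy--Schwarz to get a contribution of $n/i^{3/2}$ per level. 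Both routes are equally elementary and yield the same $O(n)$ bound; yours avoids the dyadic bucketing at the cost of one application of Cauchy--Schwarz. Your care about later occurrences of $s$ in $h(x)$ (that they either stay put, giving an even smaller new probe complexity, or move to position at least $\lceil\sqrt{q(h,i,s)}\rceil$ by monotonicity of $q(h,\cdot,s)$) is a point the paper glosses over, and it is handled correctly.
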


\begin{proof}
  For each integer $Q$ that is a power of two, the number of bad pairs $(i, s)$ where $q(h, i, s) \in [Q, 2Q)$ is bounded by
  \begin{align*}
    &\phantom{{}=} \sum_{i=1}^{(2Q)^{1/5}} \abs*{\myset*{s \in [N]}{\Pr_{x \in [U]}\Bk[\big]{\ProbeComplexity(h, x, s) \le i} = \Theta(Q / n)}} \\
    &\le \sum_{i=1}^{(2Q)^{1/5}} O(i \cdot n / Q) \\
    &= O\bk[\big]{n / Q^{3/5}},
  \end{align*}
  where the first equality is because $\sum_{s \in [N]} \Pr_{x \in [U]}[\ProbeComplexity(h, x, s) \le i]$ does not exceed $i$. For each of the bad pairs, moving the occurrence of $s$ to position $\ceil*{q(h, i, s)^{1/2}} = \Theta(Q^{1/2})$ may increase the probe complexity of at most one key by $O(Q^{1/2})$ (potentially, the key $x$ stored in slot $s$ in the assignment $A$ will get a higher probe complexity). Taking a summation over all $Q$, we upper bound the increment on the total probe complexity by
  \[
    c(A, \tilde{h}) - c(A, h) \le \sum_{Q \ge 1 \textup{ is power of two}} O\bk[\big]{n / Q^{3/5}} \cdot O\bk[\big]{Q^{1/2}} = O(n). \qedhere
  \]
\end{proof}

\begin{claim}
  \label{claim:uniform}
  $\tilde h$ is nearly uniform.
\end{claim}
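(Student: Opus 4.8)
The plan is to fix a slot $s$ and a position $i$ and bound $q(\tilde h, i, s) = n\Pr_{x}[\ProbeComplexity(\tilde h, x, s) \le i]$ directly. If $s$ appears among the first $i$ positions of $\tilde h(x)$, that is witnessed by some occurrence of $s$ in the original sequence $h(x)$; I would split on whether that occurrence is one the construction left in place (because its pair is not bad) or one it relocated, bound the two groups of keys separately, and combine by a union bound.

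For the ``stayed-in-place'' group: an occurrence of $s$ at original position $i'$ survives in $\tilde h$ only when $(i', s)$ is not bad, i.e. $q(h, i', s) \le (i')^{5}$, and to lie among the first $i$ positions it also needs $i' \le i$. Let $i^{\star}$ be the largest non-bad position that is $\le i$ (if there is none, the group is empty). Since the construction only ever moves occurrences to \emph{later} positions, every surviving occurrence of $s$ among the first $i$ positions in fact sits at a position $\le i^{\star}$; hence, by monotonicity of $q(h, \cdot, s)$ in its first argument, the keys in this group contribute at most $q(h, i^{\star}, s) \le (i^{\star})^{5} \le i^{5}$ to $q(\tilde h, i, s)$. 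For the ``relocated'' group: an occurrence at a bad original position $i'$ is moved to position $\ceil[\big]{\!\sqrt{q(h, i', s)}}$, which is $\le i$ only if $q(h, i', s) \le i^{2}$; together with badness ($q(h, i', s) > (i')^{5}$) this forces $i' < i^{2/5}$, so there are only $O(i^{2/5})$ candidate source positions $i'$, and for each of them the keys $x$ with $h_{i'}(x) = s$ contribute at most $q(h, i', s) \le i^{2}$. Thus this group contributes at most $O(i^{2/5} \cdot i^{2}) = O(i^{12/5})$.

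Combining the two bounds gives $q(\tilde h, i, s) \le i^{5} + O(i^{12/5}) = O(i^{5}) \le O(i^{10})$ for all $i$ and $s$, which is exactly near uniformity; if $s$ happens to occur at several positions of a single probe sequence, the same union bound over the two groups still applies, since we only need that some occurrence of $s$ reaches the first $i$ positions of $\tilde h(x)$. I expect the only subtle point to be the first-group bound: one must use that relocations never move occurrences \emph{earlier}, so a non-bad position $\le i$ is never ``refilled'' by a later occurrence, and hence the single largest non-bad position $i^{\star} \le i$ already dominates all the short surviving occurrences of $s$. Once that observation is in hand, the relocated group is essentially immediate, because only source positions with $i' < i^{2/5}$ can reach the first $i$ slots at all.
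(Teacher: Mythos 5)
Your proposal is correct and follows essentially the same route as the paper: both decompose the keys reaching slot $s$ within the first $i$ positions of $\tilde h$ into occurrences left in place (bounded via the non-badness threshold $q(h,i',s)\le (i')^{5}$) and relocated occurrences (bounded via the relocation target $\ceil[\big]{\sqrt{q(h,i',s)}}$). The only difference is bookkeeping—you bound the cumulative quantity $q(\tilde h, i, s)$ directly and observe that only source positions $i' < i^{2/5}$ can relocate into the first $i$ slots, which yields a slightly sharper exponent ($O(i^5)$ versus the paper's $O(i^6)$), but both comfortably satisfy the $O(i^{10})$ requirement.
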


\begin{proof}
  For each $i \ge 1$ and slot $s$, we have
  \begin{align*}
    & \Pr_{x \in [U]} \Bk[\big]{\ProbeComplexity(\tilde{h}, x, s) = i} \\
    ={} & \Pr_{x \in [U]} \Bk[\big]{\ProbeComplexity(h, x, s) = i \textup{ and $(i, s)$ is not a bad pair}} \numberthis \label{eq:term-old} \\
    +{} & \Pr_{x \in [U]} \Bk[\big]{\ProbeComplexity(\tilde{h}, x, s) = i \textup{ and $s \in \tilde{h}_i(x)$ is a newly-assigned element in $\tilde{h}$}}. \numberthis \label{eq:term-new}
  \end{align*}
  \cref{eq:term-old} is at most $i^{5} / n$ since $(i, s)$ is not bad. When the event in \eqref{eq:term-new} occurs, there must be a position $i' < i$ where $q(h, i', s) = \Theta(i^2)$, i.e., $\Pr_{x \in [U]}[\ProbeComplexity(h, x, s) \le i'] = \Theta(i^2 / n)$. Thus, we can bound \eqref{eq:term-new} by enumerating $i' < i$:
  \begin{align*}
    \textup{\cref{eq:term-new}} \;
    ={} & \sum_{i' < i} \Pr_{x \in [U]}\Bk[\big]{\ProbeComplexity(h, x, s) = i' \textup{ and } s = h_{i'}(x) \textup{ is reassigned to } \tilde{h}_{i}(x)} \\
    \le{} & \sum_{i' < i} O(i^2 / n) = O(i^3 / n).
  \end{align*}
  Adding \eqref{eq:term-old} and \eqref{eq:term-new} together, and taking a summation over $i \in [i_0]$, we get
  \[
    \Pr_{x \in [U]} \Bk[\big]{\ProbeComplexity(\tilde{h}, x, s) \le i_0} \le O(i_0^6 / n),
  \]
  so $\tilde h$ is nearly uniform.
\end{proof}

\paragraph{Step 2: Transforming into (normal) probe sequences.}

We have already constructed a generalized-probe-sequence function $\tilde h$ that has low probe complexity and is nearly uniform. The last step is to transform it into a probe-sequence function.

\begin{claim}
  \label{claim:transform}
  Let $\tilde h$ be a nearly uniform generalized-probe-sequence function. There exists a nearly uniform probe-sequence function $h'$, such that for any assignment $A$ of $n$ keys to $N$ slots, the total probe complexity $c(A, h') \le 2c(A, \tilde{h})$.
\end{claim}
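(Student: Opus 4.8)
The plan is to turn each generalized probe sequence $\tilde h(x)$ into an ordinary probe sequence $h'(x)$ by \emph{flattening} it --- replacing the set $\tilde h_i(x)$ at each position $i$ by its elements spread out over several positions --- while being careful never to schedule a slot \emph{earlier} than where it first occurred in $\tilde h(x)$ (so that near-uniformity is preserved) and never \emph{more than twice as late} (so that the total probe complexity at most doubles). The naive flattening, which would list the distinct slots of $\tilde h(x)$ at positions $1,2,3,\ldots$, does the second job but fails the first: a slot sitting deep in $\tilde h(x)$ could migrate all the way to position $1$ of $h'(x)$, which could make $q(h',1,s)$ huge. The fix is to keep each slot inside a controlled window around its original depth.

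Concretely, fix $x$, and let $u_1,u_2,\ldots$ be the distinct slots of $\tilde h(x)$ listed in order of first appearance (ties inside a single $\tilde h_i(x)$ broken by slot index), with $\tau_k \defeq \ProbeComplexity(\tilde h, x, u_k)$ the position at which $u_k$ first appears; note $\tau_1 \le \tau_2 \le \cdots$. The one structural fact I will use is that $\tau_k \ge k$: the slots $u_1,\ldots,u_k$ all appear among the first $\tau_k$ positions of $\tilde h(x)$, and by the defining constraint on generalized probe sequences (the first $i$ positions contain at most $i$ distinct slots) we get $k \le \tau_k$. Now define $h'(x)$ greedily: for $k=1,2,\ldots$, let $p_k$ be the smallest integer in the interval $[\tau_k,\,2\tau_k]$ not already chosen as some $p_{k'}$ with $k'<k$, set $h'_{p_k}(x) \defeq u_k$, and set every other entry of $h'(x)$ to $\Null$ (leaving the special-slot convention untouched). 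This greedy choice never gets stuck: at the step for $u_k$ at most $k-1$ positions are in use, while $[\tau_k,\,2\tau_k]$ contains $\tau_k+1 \ge k+1$ integers, so a free position always remains --- this little counting argument, powered by $\tau_k \ge k$, is really the only non-bookkeeping step.

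Granting feasibility, both conclusions fall out immediately. Because $p_k \le 2\tau_k$, we have $\ProbeComplexity(h',x,u_k) \le 2\,\ProbeComplexity(\tilde h,x,u_k)$ for every key and slot, so summing over the $n$ balls of any assignment $A$ gives $c(A,h') \le 2\,c(A,\tilde h)$. Because $p_k \ge \tau_k$, we have $\ProbeComplexity(h',x,s) \ge \ProbeComplexity(\tilde h,x,s)$, hence $q(h',i,s) \le q(\tilde h,i,s) \le O(i^{10})$ for all $i,s$, so $h'$ is nearly uniform. The main obstacle, such as it is, is purely conceptual: recognizing that one must impose the lower bound $p_k \ge \tau_k$ (not just an upper bound) on the flattened positions, and seeing that the generalized-probe-sequence length constraint is exactly what makes the target windows $[\tau_k,2\tau_k]$ wide enough to accommodate a legal greedy assignment.
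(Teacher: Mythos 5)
Your proof is correct and follows essentially the same strategy as the paper's: flatten each generalized probe sequence so that a slot first appearing at position $i$ lands at some position in $[i, 2i]$, using the constraint that the first $i$ positions contain at most $i$ slots, whence the factor-$2$ bound on probe complexity and the preservation of near-uniformity (since positions never decrease) both follow. The only difference is mechanical — you use a greedy assignment into windows $[\tau_k, 2\tau_k]$, while the paper pads empty positions with $\Null$ and concatenates — and both are valid.
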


\begin{proof}
  The given function $\tilde h$ maps each key $x$ to a generalized probe sequence $(\tilde h_1(x), \tilde h_2(x), \ldots)$, where each position accommodates a set of slots (possibly empty). For each $\tilde h_i(x) = \varnothing$, we insert a single element $\Null$ into it. After that, we write down all elements in the order they appear in the generalized probe sequence (when a position $\tilde h_i(x)$ contains multiple slots, we write in an arbitrary order), forming a probe sequence consisting of $[N] \cup \BK{\Null}$. We denote it by $h'$.

  For any slot $s$ that occur in $\tilde h_{i}(x)$, its position in $h'$ after the above transformation will be between $i$ and $2i$ (both included): it is at least $i$ because we inserted $\Null$s to make sure every position contains at least one element; it is at most $2i$ because (1) there can only be $i$ elements among the first $i$ positions, before we insert any $\Null$; (2) the number of $\Null$s we inserted to the first $i$ positions is at most $i$. This implies that
  \[
    \ProbeComplexity(\tilde{h}, x, s) \le \ProbeComplexity(h', x, s) \le 2 \cdot \ProbeComplexity(\tilde{h}, x, s)
  \]
  for every key $x$ and slot $s$. So we conclude that (1) $h'$ is nearly uniform; (2) $c(A, h') \le 2c(A, \tilde{h})$ for any assignment $A$ of keys.
\end{proof}

Combining \cref{claim:complexity,claim:uniform,claim:transform} together, we have proved \cref{lem:non-uniform}.

\subsection{Putting pieces together}

Combining the results from the previous subsections, we can derive a formal lower bound on classical open-addressing.

Specifically, for any classical open-addressing hash table, we analyze its performance on \cref{dist:hard}. By Yao's minimax principle, we only need to consider deterministic hash tables, i.e., the probe-sequence function $h$ is fixed. Then, \cref{lem:non-uniform} transforms $h$ to a nearly uniform function $h'$ while only losing a constant factor on the average probe complexity on any state of the hash table. Finally, \cref{thm:lowerbound/uniform} shows a lower bound on the switching cost per operation, provided that the expected average probe complexity is small enough. We summarize the result as the following theorem.

\thmlower*

\section{Open Problems}

We conclude the paper with several appealing open questions.

\paragraph{Non-oblivious open addressing.} The first question concerns \defn{non-oblivious open addressing} \cite{fiat1988nonoblivious,fiat1993implicit}: this is a generalization of open-addressing in which queries are not constrained to follow any particular probe sequence. Instead, insertions/queries/deletions can be implemented arbitrarily subject to the constraint that the \emph{state} of the data structure, at any given moment, is that of an open-addressed hash table. Formally, this means that, if the hash table is storing $n$ keys, then its state is an array with $n$ non-empty slots, where the non-empty slots contain some permutation of the keys being stored. 

All classical (a.k.a.~oblivious) open-addressed hash tables are also valid non-oblivious open-addressed hash tables. Thus the upper bounds in this paper also apply to the non-oblivious case. However, the \emph{lower bounds} do not. This raises the following question: can a non-oblivious open-addressed hash table hope to achieve $O(1)$ expected-time queries while also achieving an expected insertion/deletion time of $o(\log \log \epsilon^{-1})$?

\paragraph{High-probability worst-case query time bounds.} One major direction in recent decades has been to develop open-addressed hash tables that support high load factors while also offering (high-probability) worst-case query time bounds. Using variations of Cuckoo hashing \cite{dietzfelbinger2007balanced,bell20241,fotakis2005space}, one can achieve $O(\log \epsilon^{-1})$ worst-case query time, while also supporting $f(\epsilon^{-1})$ expected insertion/deletion time for some function $f$.  One can also show using coupon-collector-style arguments that this $O(\log \epsilon^{-1})$ bound is the best (worst-case) bound that one can hope for. What is not clear is whether one might also be able to ask for a very good insertion/deletion time. Can one achieve $O(\log \epsilon^{-1})$ worst-case queries (w.h.p.) while also supporting $o(\epsilon^{-1})$ expected insertion/deletion time? Or, more generally, can one hope to achieve (high-probability) worst-case query time $Q$ and expected insertion/deletion time $I$ for some $Q$ and $I$ satisfying $QI = o(\epsilon^{-1})$? We conjecture that such a bound should not be possible.

\paragraph{High-probability worst-case time bounds for all operations.} Finally, it is also interesting to consider the task of achieving high-probability worst-case time bounds for \emph{all} operations (insertions, deletions, and queries). For example, if we consider a load factor of $1/2$, what are the best (worst-case) bounds that a classical open-addressed hash table can hope to achieve (as a function of $n$)? Using results from the power of two choices \cite{vocking2003how,dalal2023twoway}, one can achieve a bound of $O(\log \log n)$. Is this the best bound possible?

\bibliographystyle{plain}
\bibliography{bib}

\end{document}